\definecolor{green}{rgb}{1,0.5,0}
\theoremstyle{plain}
\newtheorem{theorem}{Theorem}[section]
\newtheorem{lemma}[theorem]{Lemma}
\newtheorem{definition}[theorem]{Definition}
\DeclareSymbolFont{symbolsC}{U}{txsyc}{m}{n}
\DeclareMathSymbol{\notniFromTxfonts}{\mathrel}{symbolsC}{61}
\newcommand{\BibTeX}{\rm B\kern-.05em{\sc i\kern-.025em b}\kern-.08em\TeX}
\begin{document}

\RUNAUTHOR{}

\RUNTITLE{Privacy-Aware Sequential Learning}

\TITLE{Privacy-Aware Sequential Learning}

% \ARTICLEAUTHORS{%
% \AUTHOR{}
% \AFF{ \EMAIL{}}
% % \AUTHOR{}
% % \AFF{ \EMAIL{}}
% % \AUTHOR{}
% % \AFF{ \EMAIL{}}
% % \AUTHOR{}
% % \AFF{ \EMAIL{}}
% % Enter all authors
% \vspace{-30pt}
% } % end of the block

\ARTICLEAUTHORS{%
\AUTHOR{Yuxin Liu, M. Amin  Rahimian}
\AFF{Industrial Engineering, University of Pittsburgh\\ \EMAIL{yul435@pitt.edu, rahimian@pitt.edu }}
% Enter all authors
}

\ABSTRACT{

In settings like vaccination registries, individuals act after observing others, and the resulting public records can expose private information. We study privacy-preserving sequential learning, where agents add endogenous noise to their reported actions to conceal private signals. Efficient social learning relies on information flow, seemingly in conflict with privacy. Surprisingly, with continuous signals and a fixed privacy budget $(\varepsilon)$, the optimal randomization strategy balances privacy and accuracy, accelerating learning to $\Theta_{\varepsilon}(\log n)$, faster than the nonprivate $\Theta(\sqrt{\log n})$ rate. In the nonprivate baseline, the expected time to the first correct action and the number of incorrect actions diverge; under privacy with sufficiently small $\varepsilon$, both are finite. Privacy helps because, under the false state, agents more often receive signals contradicting the majority; randomization then asymmetrically amplifies the log-likelihood ratio, enhancing aggregation. In heterogeneous populations, an order-optimal $\Theta(\sqrt{n})$ rate is achievable when a subset of agents have low privacy budgets. With binary signals, however, privacy reduces informativeness and impairs learning relative to the nonprivate baseline, though the dependence on $\varepsilon$ is nonmonotone. Our results show how privacy reshapes information dynamics and inform the design of platforms and policies.

}

\KEYWORDS{sequential learning, privacy-preserving learning, information cascade, learning efficiency}

\maketitle

%%% The following commands remove the headers in your paper. For final 
%%% papers, these will be inserted during the pagination process.

\pagestyle{fancy}
\fancyhead{}

%%% The next command prints the information defined in the preamble.

\maketitle 

%%%%%%%%%%%%%%%%%%%%%%%%%%%%%%%%%%%%%%%%%%%%%%%%%%%%%%%%%%%%%%%%%%%%%%%%

\section{Introduction}
\vspace{-5pt}
When making decisions, people often combine their private information with public information, such as observable behaviors of others \citep{bikhchandani2024information}. For example, to decide to receive a vaccine, people can consult the scientific evidence on the efficacy and safety of the vaccine and weigh this information against the decisions and experiences of others in their community. This process can be modeled within the framework of sequential learning to understand how individual decisions evolve over time based on the actions of their predecessors \citep{Banerjee1992herd, Bikhchandani1992cascades, Acemoglu2011network}. 

Traditionally, sequential learning models examine how individuals infer from others’ actions, typically assuming that these actions are not distorted by endogenous noise \citep{Ali2018cost}. While such models have been successful in explaining herding behaviors and information cascades, they overlook privacy concerns that are particularly salient in sensitive domains (e.g., health or politics). In these settings, individuals may refrain from participating, mute unpopular views, or avoid choices that could reveal personal experiences.
In the vaccine adoption example, people interpret clinical evidence and personal anecdotes through the lens of their medical histories (e.g., autoimmune conditions), making private signals highly sensitive. Although such information is not directly observable, the sequential nature of adoption—where individuals act after observing others—creates opportunities for inference. This is especially true in public or semi-public contexts (e.g., workplace or school immunization records), where actions can inadvertently reveal private health details. Anticipating this, individuals may strategically misreport their actions to protect privacy, biasing the public record, confounding perceived confidence, and ultimately undermining collective decision quality.

However, randomization can sometimes be beneficial: occasional disregard for predecessors’ actions can prevent fragile, self-reinforcing cascades \citep{peres2020fragile}; similarly, privacy-motivated noise can dilute early trends driven by misinformation, social pressure, or institutional incentives, reducing misleading cascades and promoting more robust learning.  This interplay between privacy and learning efficiency raises a central question: What are the implications of endogenous privacy-preserving behavior for sequential learning performance?

To address these questions, we adopt the metric differential privacy (mDP) framework \citep{chatzikokolakis2013broadening} to study the behavior of privacy-aware agents. In our setting, agents endogenously add noise to their actions to limit inferences from other observers, thereby altering the dynamics of learning. Contrary to the common intuition that additional noise should reduce the informativeness of public observations \citep{Le2017noise, papachristou2025differentially}, we show that privacy-motivated randomization can enhance learning by asymmetrically amplifying the log-likelihood ratio across states, thus increasing the information conveyed by each action. This asymmetry stems from a strategic tradeoff between privacy and utility: agents randomize more when their actions would otherwise reveal sensitive information, and less when the risk of revelation is low. Unlike prior work that ties learning speed solely to the structure of private signals \citep{Hann-Caruthers2018speed, rosenberg2019efficiency}, we demonstrate that privacy-driven noise reshapes aggregation dynamics and can accelerate convergence.

These findings carry important managerial implications for platforms and policymakers. By identifying the mechanisms through which privacy-motivated randomization accelerates learning, designers can tailor data-collection protocols to the sensitivity of the information and the stage of the learning process—for example, adjusting the frequency and granularity of data requests when privacy risks are high, or selectively soliciting input from less-represented groups to counteract early cascades. In data aggregation, incorporating models that account for the presence and strategic use of noise can help platforms weight incoming signals appropriately, filter out spurious early trends, and reinforce robust patterns. Such adaptive strategies can be applied in domains such as consumer-behavior analytics, public-health surveillance, and policy consultation, enabling faster and more reliable collective decision-making while safeguarding individual privacy.

\vspace{-5pt}
\subsection{Main related work}\label{sec:lit-rev}
\vspace{-5pt}
Our work contributes to the literature on sequential learning, particularly regarding whether asymptotic learning occurs and how fast it proceeds. The failure of asymptotic learning due to information cascades was first introduced by \citet{Banerjee1992herd} and \citet{Bikhchandani1992cascades}, who showed that binary private signals can trigger cascades that stop further learning. \citet{Smith2000pathological} later established that unbounded private beliefs are necessary for asymptotic learning, a result extended by \citet{Acemoglu2011network} to network settings with partial observations. Subsequent studies confirmed that learning persists even when agents observe random predecessors without time indices \citep{smith2013rational}. Where earlier studies consider environments without endogenous distortions, we examine how strategically introduced privacy noise reshapes these learning outcomes. In the binary-signal setting, we find that the probability of a correct cascade is non-monotonic in the strength of privacy concerns: stronger privacy can, in some cases, improve performance by helping the public escape entrenched cascades. When signals are continuous, privacy-aware agents follow a smooth randomized response strategy that balances accuracy with privacy protection, thereby sustaining asymptotic learning—and, counterintuitively, doing so at a faster asymptotic rate then the non-private baseline described below.

Beyond the question of whether asymptotic learning occurs, another strand of research studies its speed. \citet[Chapter~4]{chamley2004rational} investigates convergence rates in Gaussian settings through computational simulations. \citet{rosenberg2019efficiency} analyze the time until the first correct action appears and the total number of incorrect actions, deriving conditions under which these quantities have finite expectations. \citet{Hann-Caruthers2018speed} precisely characterize the asymptotic speed of learning, showing that for Gaussian signals, the log-likelihood ratio evolves at a rate of $\Theta(\sqrt{\log n})$—so slow that the expected time for learning to begin is infinite, even though asymptotic learning is guaranteed. \citet{arieli2025hazards} examine the role of overconfidence, demonstrating that mildly overconfident agents, who underestimate the quality of their peers’ information, can accelerate learning in standard sequential models. We extend this literature by incorporating \emph{privacy-aware} agents and showing how endogenous noise fundamentally reshapes both the dynamics and the speed of learning. Under a homogeneous privacy budget regime, rational agents adopt smooth randomized response strategies that balance privacy and utility, leading to faster asymptotic learning at rate $\Theta_{\varepsilon}(\log n)$—a sharp improvement over the $\Theta(\sqrt{\log n})$ rate in \citet{Hann-Caruthers2018speed}. In heterogeneous populations, the presence of agents with very small privacy budgets (approaching zero) further accelerates information aggregation, yielding an order-optimal asymptotic learning rate of $\Theta(\sqrt{n})$. Moreover, under sufficiently strong privacy concerns, both the expected time to the first correct action and the expected number of incorrect actions remain finite. These gains stem from the asymmetric amplification of correct signals induced by strategic privacy noise, revealing that privacy-aware behavior can enhance—rather than hinder—learning efficiency.

% \citet{harel2021rational} introduce the concept of rational groupthink, where long-lived rational agents, despite repeatedly observing private signals and others' actions, may fail to aggregate information efficiently. Their findings suggest that larger groups may actually learn more slowly than smaller ones, reinforcing the idea that excessive reliance on observed actions can lead to systematic inefficiencies in learning. 
% Meanwhile, \citet{huang2024learning} studies the learning of informativeness itself, showing that when there is uncertainty about the quality of the information generating process, asymptotic learning is not guaranteed. They emphasize how the tail properties of private beliefs play a crucial role in determining whether individuals can eventually learn the correct informativeness of their signals.  \citet{acemoglu2022learning} analyze Bayesian learning in the context of online reviews, demonstrating how different rating systems, complete history versus summary statistics, affect learning speed. They show that the availability of historical review data influences the speed with which users converge to the true quality of a product, highlighting the impact of information accessibility on learning dynamics.

Our study contributes to the broader literature on how privacy constraints affect learning and decision-making. In the context of sequential learning, \citet{tsitsiklis2021private} and \citet{xu2018query} analyze the query complexity of private learning and show that stronger privacy constraints require more information acquisition for accurate inference. Although their models share structural similarities with ours, they focus on independent queries, whereas we study path-dependent decision-making in which each agent’s action shapes the information available to others. In distributed settings, privacy-preserving noise is typically modeled as exogenous \citep{papachristou2025differentially, rizk2023enforcing, tao2023distributed}, whereas we treat it as an endogenous choice. We further allow agents to differ in their privacy budgets, consistent with evidence of heterogeneous privacy preferences \citep{acquisti2005privacy} and with personalized/heterogeneous DP frameworks \citep{alaggan2015heterogeneous, acharya2024personalized}. This shift from algorithmic to behavioral modeling highlights how rational agents strategically balance privacy and learning performance, offering new insights at the intersection of privacy theory and social learning. To our knowledge, the only work examining social learning in privacy-preserving data collection is \citet{Wang2021privacy_sociallearning}, which focuses on simultaneous actions and does not address sequential learning or information cascades. 

Our work is the first to examine how privacy concerns shape sequential learning with endogenous noise. Counter to the common intuition drawn from exogenous-noise models, we find that endogenous privacy noise does not necessarily harm learning performance. Most privacy-preserving designs ignore the dynamic nature of decision-making and inject fixed, excessive noise based on global sensitivity bounds. In contrast, endogenous noise adapts to the evolving decision process, adjusting in response to both privacy concerns and utility considerations, thereby improving learning efficiency. Further related literature is reviewed in Appendix~\ref{app:related_work}.

\vspace{-5pt}
\subsection{Main contribution}
\vspace{-5pt}
In this paper, we introduce a structured approach to analyzing privacy-aware sequential learning, offering new theoretical insights into learning efficiency under privacy concerns. When private signals are binary, we reformulate the problem as a Markov chain with absorbing states and establish results on the probability of a correct cascade, showing how privacy concerns alter the likelihood of learning from past observations. Using techniques from the gambler’s ruin problem, we quantify the effect of privacy-related noise on belief propagation and show that such noise can systematically reduce the probability of correct cascades.

In the continuous signal case, rational agents adopt a smooth randomized response strategy that selectively limits the amount of noise based on the potential for privacy loss. This strategy allows agents to protect their privacy while minimizing unnecessary randomization that would otherwise impair learning. We then analyze the evolution of the log-likelihood ratio over time using Gaussian tail approximations and differential equations, which enables a rigorous characterization of how information accumulates under endogenous privacy behavior. Actions shaped by privacy concerns more accurately reflect a combination of private signals and observed history, leading to an accelerated asymptotic learning speed of \( \Theta_{\varepsilon}(\log(n)) \), in contrast to the classical \( \Theta(\sqrt{\log(n)}) \) rate in the non-private regime \citep{rosenberg2019efficiency, Hann-Caruthers2018speed}. The increase in asymptotic learning speed arises from the inherent asymmetry of the smooth randomized response strategy under different states. Specifically, under the true state, the average probability of flipping an action is lower than under the false state. As a result, agents' actions become more informative during the inference process, enabling the public belief to update more efficiently. To better evaluate the efficiency of sequential learning under privacy constraints, we also consider two additional measures: the expected time to the first correct action and the expected total number of incorrect actions. We find that both quantities remain finite when privacy concerns are strong, which stands in contrast to traditional results in the non-private setting, where both expectations are known to diverge \citep{rosenberg2019efficiency, Hann-Caruthers2018speed}. Furthermore, based on numerical simulations, we identify the existence of an optimal privacy budget \( \varepsilon^* \) that minimizes both the expected time to the first correct action and the expected number of incorrect actions. This highlights a fundamental trade-off between privacy and learning efficiency.

\begin{table}[htbp]
\centering
\small
\renewcommand{\arraystretch}{1.3}
\resizebox{\textwidth}{!}{%
\begin{tabular}{|l|c|c|c|}
\hline
\textbf{Learning Metrics} & \textbf{Non-private} & \textbf{Private-Homogeneous} & \textbf{Private-Heterogeneous} \\
\hline
\multicolumn{4}{|c|}{\textbf{Binary}} \\
\hline
Probability of correct cascades &
$\frac{\rho^k - 1}{\rho^{2k} - 1}$ &
$\frac{\rho(\varepsilon)^{k}-1}{\rho(\varepsilon)^{2k}-1}$ &
$\frac{\bar{\rho}^{k}-1}{\bar{\rho}^{2k}-1}$ \\
\hline
Information cascade threshold  &
$k=2$ &
$k=\left\lfloor \log_{\rho(\varepsilon)} \frac{1-p}{p} \right\rfloor + 1$ &
$k=\left\lfloor \log_{\bar{\rho}} \frac{1-p}{p} \right\rfloor + 1$ \\
\hline
\multicolumn{4}{|c|}{\textbf{Continuous}} \\
\hline
Convergence rate &
$\Theta(\sqrt{\log n})$ &
$\Theta_\varepsilon(\log n)$ &
$\Theta(\sqrt{n})$ \\
\hline
Time to first correct action  &
$ C_1 \sum_{n=1}^{\infty} e^{- \frac{2\sqrt{2}}{\sigma} \sqrt{\log n}}$ &
$ C_1 C(\varepsilon)^{- \frac{2}{\varepsilon \sigma^2} }\sum_{n=1}^{\infty} e^{- \frac{2}{\varepsilon \sigma^2} \log{n} }$ &
$ C_1 \sum_{n=1}^{\infty} e^{-\tilde{C}n^{1/2}}$ \\
\hline
\end{tabular}%
}

\caption[Summary of theoretical results under different privacy settings]{
Summary of theoretical results under different privacy settings. 
Here, $\rho=\frac{1-p}{p}, \rho(\varepsilon) =\frac{(1-u(\varepsilon))(1-p) + u(\varepsilon)p }{u(\varepsilon)(1-p) + p(1-u(\varepsilon))}$, 
$\bar{\rho} =\frac{(1-\bar{u})(1-p) + \bar{u}p }{\bar{u}(1-p) + p(1-\bar{u})}$, 
$\bar{u} = E_{\varepsilon}[u(\varepsilon)] = E_{\varepsilon}\left[\frac{1}{1+e^\varepsilon}\right]$, 
$C(\varepsilon) = \frac{\varepsilon \sigma^2}{4} \left(e^{\varepsilon + \frac{\varepsilon^2 \sigma^2}{2}} - e^{- \varepsilon + \frac{\varepsilon^2 \sigma^2}{2}} \right)$, and
$C_1$ and $\tilde{C}$ are constants. The total number of incorrect actions differs from the time of the first correct action by a constant factor and is omitted.}

\label{tab:theory_summary}
\end{table}

We also extend our analysis to the heterogeneous setting by considering the case where the privacy budget \( \varepsilon \) follows a uniform distribution \( \varepsilon \sim U(0, 1) \). Due to the presence of agents with strong privacy concerns (i.e., small \( \varepsilon \)), we show that the convergence rate of the public log-likelihood ratio can increase to \( \Theta(\sqrt{n}) \). Furthermore, we establish that this is the optimal convergence rate that can be achieved under any distribution of privacy budgets. This result offers two key insights. First, it demonstrates that the structured randomness introduced by privacy concerns can actually accelerate the learning process. In contrast to the non-private sequential learning setting—where the asymptotic convergence rate is only \( \Theta(\sqrt{\log(n)}) \) \citep{rosenberg2019efficiency, Hann-Caruthers2018speed}—the heterogeneous private setting achieves a significantly faster rate. Second, although privacy-aware sequential learning benefits from increased speed, it still falls short of the ideal scenario in which agents directly observe i.i.d. signals. In the latter case, the convergence rate reaches the optimal order of \( \Theta(n) \), as guaranteed by the law of large numbers. Thus, while privacy can be compatible with efficient learning, sequential learning inherently limits the maximal speed of information aggregation compared to fully transparent settings. A comprehensive comparison of these results across different privacy models is summarized in Table~\ref{tab:theory_summary}.

Our analysis demonstrates that, contrary to conventional belief, privacy-aware behavior can enhance learning by altering the dynamics of information flow over time. Unlike prior literature, which primarily focuses on asymptotic learning without privacy considerations, we provide the first formal convergence rate analysis under endogenous privacy concerns. These findings reveal how strategic privacy behavior interacts with sequential learning to shape collective decision outcomes. From a managerial perspective, this highlights that respecting individual privacy preferences need not come at the cost of slower learning—well-structured privacy practices can actually accelerate collective insight and support more informed decision-making over time.

The remainder of this paper is organized as follows. First, we formally define our problem setup in \Cref{sec:model}, including the sequential decision making framework and integration of privacy constraints. \Cref{sec:binary-model} focuses on the case of binary signals, where we analyze how differential privacy affects the probability of correct cascades using a Markov chain representation. In \Cref{sec:continuous-model}, we extend our analysis to continuous signals, introducing the smooth randomized response strategy and showing how it accelerates and improves learning efficiency. Finally, \Cref{sec:conclusion} summarizes our findings and discusses potential future research directions, including broader implications for privacy-aware decision making in sequential learning environments.

\vspace{-5pt}
\section{A Model of Sequential Learning with Privacy}
\label{sec:model}
\vspace{-5pt}
We consider a classical sequential learning model: There exists an unknown binary state of nature, denoted by $\theta \in \{-1,+1\}$, with the same prior probability $1/2$; each agent, indexed by \( n \in \{1, 2, \ldots\} \), receives a private signal \( s_n \in \mathscr{X} \), drawn independently and identically distributed (i.i.d.) from a conditional distribution \( F_{\theta} \), that is, \( s_n \sim F_{\theta} \), where \( F_{\theta} \) depends on the true state \( \theta \). Without considering privacy concerns, each agent $n$ takes an action $a_{n}\in \{-1,+1\}$ based on the information received, including the history of previous reported actions denoted by $h_{n-1}:=\{x_1, \cdots, x_{n-1}\}$, where $x_i$ is the reported action of agent $i$,  and their own private signal $s_{n}$. 

With the introduction of privacy concerns, agents experience privacy loss when reporting their actions. In the context of vaccine adoption, the binary state $\theta$ represents whether the vaccine is truly effective and safe for the general population, where $\theta = +1$ indicates effectiveness and $\theta = -1$ indicates ineffectiveness or harm. Each agent $n$ receives a private signal $s_n$, which reflects their individualized assessment based on clinical research, anecdotal reports, and personal health considerations. For instance, someone with a preexisting autoimmune condition may investigate whether the vaccine triggers flare-ups or remains effective under such conditions. These signals are thus not only informative but also intimately tied to an individual’s sensitive medical history. Publicly revealing one’s vaccine decision may inadvertently disclose these private health concerns, leading to potential privacy loss. As a result, agents must balance the value of contributing to collective knowledge with the personal cost of revealing information about their health.

To systematically characterize the extent of privacy concerns in a sequential learning process, we introduce the notion of a privacy budget. Each agent is associated with a privacy budget that reflects their level of privacy concern: agents with stronger privacy concerns have smaller privacy budgets, resulting in noisier reported actions; in contrast, agents with weaker privacy concerns have larger privacy budgets and are more likely to report their actions truthfully. When making decisions, agents ensure that their chosen actions do not violate their individual privacy budgets.

To formalize this idea, we adopt the \textit{metric differential privacy} framework \citep{chatzikokolakis2013broadening}, which allows us to rigorously quantify privacy guarantees in settings with continuous inputs. The privacy budget $\varepsilon$ quantifies how distinguishable an agent’s private signal is under the randomized reporting strategy $\mathcal{M}$, and serves as a key parameter in understanding the trade-off between privacy and information aggregation. For simplicity, we assume a homogeneous privacy budget across agents in the main analysis. A discussion of the heterogeneous case is provided in Section~\ref{sec:heterogeneous-privacy-binary} and ~\ref{sec:heterogeneous-privacy-continuous} .
\vspace{-5pt}
\begin{definition}[$(\varepsilon, d_\mathscr{X})$-mDP for the Sequential Learning Model]\label{def:mdp}
Let \( \mathscr{X} \) be a set equipped with a metric \( d_{\mathscr{X}} : \mathscr{X} \times \mathscr{X} \to [0, \infty) \). A randomized strategy \( \mathcal{M} \) satisfies \((\varepsilon, d_{\mathscr{X}})\)-metric differential privacy if for all \( s_n, s_n' \in \mathscr{X} \) and all possible outputs \( x_n \in \{-1, +1\} \), we have:
\begin{equation}\label{eq:mdp}
\mathbb{P}(\mathcal{M}(s_n; h_{n-1}) = x_n) 
\leq 
\exp\left(\varepsilon \cdot d_{\mathscr{X}}(s_n, s_n') \right) 
\mathbb{P}(\mathcal{M}(s_n'; h_{n-1}) = x_n).
\end{equation}
Here, \( \varepsilon > 0 \) is the privacy budget.
\end{definition}
\vspace{-5pt}

% In the context of online voting, the choice of $\varepsilon$ determines the degree of randomness in reported votes. A high privacy budget implies minimal distortion, making the observed vote count a reliable estimator of true public sentiment. However, a lower privacy budget introduces substantial randomization, making it difficult to infer individual voter preferences, thereby mitigating risks associated with social pressure or strategic misrepresentation. 

In the context of vaccine adoption, the choice of $\varepsilon$ determines the level of noise added to the reported actions. A high privacy budget implies limited randomization, making the observed adoption statistics closely reflect individuals' true decisions, and thus a reliable indicator of public confidence in the vaccine. In contrast, a lower privacy budget introduces greater randomness, which obscures individual-level choices and helps protect sensitive health information. This reduces the risks of social pressure, stigmatization, or strategic behavior related to personal medical conditions, thereby improving privacy at the potential cost of reduced inference accuracy.

This privacy-driven distortion affects the information aggregation process in sequential vaccine uptake. If early adopters misreport their actions, later individuals---who rely on observed behavior to update their beliefs---may be misled, compounding initial biases and undermining accurate social learning. However, privacy-preserving noise can also introduce beneficial uncertainty. By disrupting deterministic adoption patterns, it can reduce the risk of irrational herding, especially in situations where early adoption is influenced by misinformation, social pressure, or unrepresentative signals. In such cases, randomization can prevent premature consensus and preserve the diversity of private information, ultimately improving long-run aggregation accuracy. This highlights the dual role of privacy in sequential learning: Although it can distort belief formation and bias perception of public consensus, it can also protect against overconformity and foster more resilient and balanced collective decision making.

Based on their private signal and the observed history of prior adoption decisions, each agent chooses a true adoption decision \(a_n\). To protect privacy, the agent may then misreport via a randomized mechanism \(\mathcal{M}\) that injects bounded noise into the publicly observed action. This reporting is not fully random: agents internalize the value of informative records—for example, for eligibility verification, booster scheduling, adverse-event monitoring, and care coordination—and they may face moral, reputational, or regulatory costs from blatant misreporting. Thus,  If the reported action $x_{n}$ matches $\theta$, the agent's utility is 1 ; otherwise, it is 0. Accordingly, \(\mathcal{M}\) is calibrated by a fixed privacy budget to balance privacy concerns with public informativeness. Formally, agent \(n\) selects a reporting strategy to maximize their expected utility, subject to a fixed privacy budget:

\begin{equation*}
  \mathcal{M}_n^* = \arg\max_{\mathcal{M} \in \mathcal{M}_{\varepsilon, d_{\mathscr{X}}}} \, \mathbb{E}\left[ u_n(x_n, \theta) \mid h_{n-1}, s_n \right],
\end{equation*}
where \(x_n \sim \mathcal{M}(a_n)\) denotes the reported (randomized) action, \(a_n\) is the true intended action of agent \(n\), \(h_{n-1}\) is the public history observed up to time \(n-1\), and \(\mathcal{M}_{\varepsilon, d_{\mathscr{X} }}\) denotes the set of strategies that satisfy  $(\varepsilon, d_{\mathscr{X} })$-metric differential privacy.
\vspace{-5pt}
\section{Binary Signals and Information Cascades}
\label{sec:binary-model}
\vspace{-5pt}

In the binary model, we assume that \( s_n \in \mathscr{X} \), where \( \mathscr{X} = \{-1, +1\} \), and that \( s_n = \theta \) with probability \( p > 1/2 \). That is, each private signal matches the true state \( \theta \in \{-1, +1\} \) with probability \( p \), and equals \( -\theta \) with probability \( 1 - p \). In this binary signal setting, we endow the signal space \( \mathscr{X} \) with the Hamming distance \( d_h \), which takes the value 0 if two signals are equal and 1 otherwise. Under this choice of metric, metric differential privacy coincides with local differential privacy \cite{chatzikokolakis2013broadening}. The formal definition is as follows:
\vspace{-5pt}
\begin{definition}[$\varepsilon$-Local Differential Privacy for the Binary Model]\label{def:dp-binary}
A randomized strategy \( \mathcal{M} \) satisfies \( \varepsilon \)-local differential privacy if for all \( s_n, s_n' \in \{-1, +1\} \) and for all possible outputs \( x_n \in \{-1, +1\} \), we have:
\begin{equation}\label{eq:DPB}
\mathbb{P}(\mathcal{M}(s_n; h_{n-1}) = x_n) 
\leq 
\exp(\varepsilon) \cdot 
\mathbb{P}(\mathcal{M}(s_n'; h_{n-1}) = x_n).
\end{equation}
Here, \( \varepsilon > 0 \) is the privacy budget.
\end{definition}
\vspace{-5pt}

In this paper, we show that the optimal reporting strategy under a fixed privacy budget takes the form of a randomized response. This means that each agent may randomly alter their action to conceal their private signal. For example, given the observational history and private signal, an agent might ideally choose action \(a_{n} = +1\). However, due to a limited privacy budget, they may instead report \(-1\) with probability \(u_n\). We denote the action of agent \(n\) after applying the randomized response by \(x_{n}\). The differential privacy requirement for the randomized response strategy in the binary setting is as follows.

\vspace{-5pt}
\begin{definition}[Randomized Response Strategy]\label{def:RR}
A strategy $\mathcal{M}(s_n; h_{n-1})$ is called a randomized response strategy with flip probability $u_n$ if, given an initial decision $a_n$, the reported action $x_n$ is determined as follows:
\begin{equation*}
    \mathbb{P}_{\mathcal{M}}(x_n = +1 \mid a_n = -1) = \mathbb{P}_{\mathcal{M}}(x_n = -1 \mid a_n = +1) = u_n,
\end{equation*}
\begin{equation*}
    \mathbb{P}_{\mathcal{M}}(x_n = +1 \mid a_n = +1) = \mathbb{P}_{\mathcal{M}}(x_n = -1 \mid a_n = -1) = 1 - u_n.
\end{equation*}

Here, $a_n$ represents the initial action of the agent, which is based on their private signal $s_n$ and the observed history $h_{n-1}$. The randomized response strategy then perturbs this action according to probability $u_n$, flipping it with probability $u_n$ and keeping it unchanged with probability $1 - u_n$.
\end{definition}
\vspace{-5pt}
\vspace{-5pt}
\subsection{Main Results for Binary Signals}
\label{sec:binary}
\vspace{-5pt}
First, we present a significant finding that greatly simplifies the analysis of sequential learning with privacy constraints in the binary case. Before proceeding, we define an information cascade, which is important for simplification. 
\vspace{-5pt}
\begin{definition}[Information Cascade]\label{def:inf-cascade}
An information cascade occurs when all agents herd on the public belief, meaning that each agent's action becomes independent of their private signal. If actions taken after the onset of the information cascade are aligned with the true state, it is referred to as a \textit{correct cascade}. In contrast, if actions are misaligned with the true state, it is called a \textit{ wrong cascade}.
\end{definition}
\vspace{-5pt}
The following theorem shows that, prior to the initiation of the information cascade, different agents use the same randomized response strategy with constant probability $u_n=u(\varepsilon) = \frac{1}{1 + e^{\varepsilon}}$. Once the information cascade occurs, the agents do not need to add noise to their actions, so $u_n=0$.
\vspace{-5pt}
\begin{theorem}[Randomized Response Strategy for the Binary Model]\label{thm:piece-wise-random-response}
Before the occurrence of an information cascade, the optimal reporting strategy for each agent takes the form of a randomized response, with probability \( u_n = u(\varepsilon) = \frac{1}{1 + e^{\varepsilon}} \). Once an information cascade occurs, agents report their actions truthfully.
\end{theorem}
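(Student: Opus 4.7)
The plan is to reduce the optimal reporting problem to a two-dimensional linear program whose vertex structure already encodes the randomized response.

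First, I will parameterize any $\varepsilon$-LDP mechanism by $(q_+, q_-) \in [0,1]^2$ with $q_s := \mathbb{P}_{\mathcal{M}}(x_n = +1 \mid s_n = s, h_{n-1})$. In the binary alphabet, Definition~\ref{def:dp-binary} collapses to four linear inequalities,
\begin{equation*}
q_+ \le e^{\varepsilon} q_-, \quad q_- \le e^{\varepsilon} q_+, \quad 1-q_+ \le e^{\varepsilon}(1-q_-), \quad 1-q_- \le e^{\varepsilon}(1-q_+),
\end{equation*}
whose feasible region is a convex quadrilateral whose vertices I will compute by pairwise intersection: $(0,0)$, $(1,1)$, $\left(\tfrac{e^{\varepsilon}}{1+e^{\varepsilon}}, \tfrac{1}{1+e^{\varepsilon}}\right)$, and $\left(\tfrac{1}{1+e^{\varepsilon}}, \tfrac{e^{\varepsilon}}{1+e^{\varepsilon}}\right)$.

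Next, letting $\alpha := \mathbb{P}(\theta = +1 \mid h_{n-1})$, I will condition on $\theta$ and then on $s_n$ to write the agent's expected reward as a linear functional
\begin{equation*}
\mathbb{E}[u(x_n, \theta) \mid h_{n-1}] = A(\alpha)\, q_+ + B(\alpha)\, q_- + (1-\alpha),
\end{equation*}
with $A(\alpha) = \alpha p - (1-\alpha)(1-p)$ and $B(\alpha) = \alpha(1-p) - (1-\alpha) p$. Because the objective is linear, the optimum sits at a vertex, selected by the signs of $A(\alpha)$ and $B(\alpha)$, which flip at $\alpha = 1-p$ and $\alpha = p$ respectively. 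The three resulting regimes yield the theorem: (i) if $\alpha > p$, both coefficients are positive, the vertex $(1,1)$ wins, so $x_n \equiv +1$ and $u_n = 0$; (ii) if $\alpha < 1-p$, both are negative, the vertex $(0,0)$ wins, so $x_n \equiv -1$ and $u_n = 0$; (iii) if $1-p < \alpha < p$, then $A > 0 > B$ and the gradient points to the ``south-east'' vertex $\left(\tfrac{e^{\varepsilon}}{1+e^{\varepsilon}}, \tfrac{1}{1+e^{\varepsilon}}\right)$. In case (iii) a short Bayesian check shows that the MAP action satisfies $a_n = s_n$, so this vertex is exactly the randomized response of Definition~\ref{def:RR} applied to $a_n$ with flip probability $u(\varepsilon) = \tfrac{1}{1+e^{\varepsilon}}$. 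Cases (i) and (ii) correspond to the correct and wrong information cascades of Definition~\ref{def:inf-cascade}, giving the second branch of the theorem.

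The main obstacle is a subtle interpretive point: the model writes the agent's objective as $\mathbb{E}[u \mid h_{n-1}, s_n]$, yet a per-signal maximum over $(q_+, q_-)$ would push the agent to $(1,1)$ when $s_n = +1$ and to $(0,0)$ when $s_n = -1$, which jointly violate LDP. The resolution is that the mechanism $\mathcal{M}(\cdot\,;h_{n-1})$ must be committed before $s_n$ is drawn because the privacy constraint has to hold uniformly in $s_n$; marginalizing $s_n$ out of the per-signal objective yields the ex ante linear program above without loss of optimality. Once this reconciliation is settled, the remaining work---vertex enumeration, sign analysis of $A$ and $B$, and identifying the vertex $(1-u(\varepsilon), u(\varepsilon))$ as a flip of probability $u(\varepsilon)$ on the truthful MAP action $a_n = s_n$---is mechanical.
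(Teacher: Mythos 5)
Your LP formulation is more ambitious than the paper's argument (the paper simply assumes the symmetric randomized\-/response form with $a_n=s_n$ before the cascade, observes that the LDP constraint forces $(1-u)/u \le e^{\varepsilon}$, takes the binding value $u=1/(1+e^{\varepsilon})$, and asserts optimality as ``minimal perturbation''), and your setup is correct up to the last step: the quadrilateral of feasible $(q_+,q_-)$, its vertices, the coefficients $A(\alpha)$ and $B(\alpha)$, and the ex ante reinterpretation of the objective all check out. The gap is in the vertex selection for case (iii). A linear objective with gradient in the open fourth quadrant is maximized at the ``south\-/east'' vertex only when $(A,B)$ lies in that vertex's normal cone, which here means $e^{-\varepsilon}\le -B/A\le e^{\varepsilon}$; the sign pattern $A>0>B$ alone does not suffice. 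Writing $\lambda=\alpha/(1-\alpha)$ and $r=p/(1-p)$, one has $-B/A=(r-\lambda)/(\lambda r-1)$, which tends to $0$ as $\lambda\uparrow r$, so for every finite $\varepsilon$ there is a nonempty sub-interval of $(1-p,p)$ adjacent to each endpoint on which the constant vertex $(1,1)$ (respectively $(0,0)$) strictly beats your randomized\-/response vertex. Concretely, with $p=0.7$, $\alpha=0.69$, $\varepsilon=0.1$: always reporting $+1$ yields expected utility $0.69$, while the randomized response with $u=1/(1+e^{0.1})\approx 0.475$ yields about $0.51$.

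This is not a cosmetic slip, because such beliefs are reachable before the cascade. With $\rho(\varepsilon)=\frac{pe^{\varepsilon}+(1-p)}{(1-p)e^{\varepsilon}+p}$ one can verify the identity $\frac{r-\rho}{\rho r-1}=e^{-\varepsilon}$, so at public likelihood ratio $\lambda=\rho$ the agent is exactly indifferent between your randomized\-/response vertex and herding on $+1$, and at $\lambda=\rho^{2},\dots,\rho^{k-1}$ (a nonempty set whenever the cascade threshold $k\ge 3$, i.e., whenever $\varepsilon$ is small) the constant mechanism is strictly better. Carried to completion, your LP therefore does not prove the theorem; it shows the stated strategy is not optimal over all $\varepsilon$-LDP mechanisms on part of the pre-cascade region. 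To recover the paper's statement you must restrict the feasible set to the symmetric randomized responses of Definition~\ref{def:RR} applied to the MAP action $a_n=s_n$, i.e., to the segment $(q_+,q_-)=(1-u,u)$ with $u\in[0,\tfrac12]$, on which the objective $A(1-u)+Bu$ is decreasing in $u$ (since $A>B$ for all $\alpha$ when $p>\tfrac12$) and the LDP constraint forces $u\ge 1/(1+e^{\varepsilon})$; that restricted argument is essentially the paper's proof. Either make that restriction explicit, or flag that the unrestricted optimum differs from the theorem's claim.
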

\vspace{-5pt}

Before an information cascade occurs, each agent's action is solely determined by their private signal, meaning that their decisions directly reflect the information they receive. Specifically, if an agent receives a signal \( s_n = +1 \), they will choose action \( a_n = +1 \) before applying the randomized response strategy. To protect their private signals from being inferred by others, agents must use the same randomized response strategy, introducing controlled noise into their actions while maintaining the statistical properties of the decision process.

To illustrate, consider the vaccine adoption scenario discussed earlier, in which individuals decide sequentially whether to take a new vaccine, and each person observes aggregate adoption statistics before making their decision. In the early stages, individuals rely on their private health-related signals and can apply \textit{randomized response techniques} to protect sensitive medical information, such as reporting adoption with probability \( u_n \) even if they privately choose not to adopt and vice versa. This strategy ensures plausible deniability and guards against inference of personal health conditions, especially when the vaccine's risks or effectiveness are controversial.

However, once an information cascade sets in, the dynamics shift fundamentally. Individuals no longer rely on their private signals to decide, but instead follow the observed behavior of earlier adopters. In this context, if early adoption statistics strongly indicate widespread adoption, later individuals can conform to the trend, assuming that it reflects collective medical consensus, regardless of their personal concerns.

At this point, actions are no longer informative about private signals, and thus adding noise for privacy protection becomes unnecessary. This marks a key transition in privacy-preserving sequential learning: before the cascade, privacy-aware strategies are crucial to protecting sensitive personal information; after the cascade, such strategies may be redundant, as decisions are fully driven by public signals.

While our main analysis focuses on agents who endogenously choose privacy-preserving noise, our results also have implications for systems in which noise is applied exogenously. In particular, public health platforms that release vaccine uptake data in real time may implement privacy strategies to protect sensitive individual information. If such noise continues to be applied after a cascade has already formed, it may introduce unnecessary distortion without improving privacy. Adaptive privacy strategies that adjust based on whether a cascade has emerged could better balance the trade-off between protecting individual privacy and preserving the integrity of social learning. This underscores the importance of dynamic privacy strategies in sequential public health decision making, especially when vaccine adoption is shaped by social sensitivity or misinformation.

To further analyze the impact of the differential privacy budget on learning performance, we need to define the threshold for the information cascade as follows:
\vspace{-5pt}
\begin{definition}[Information Cascade Threshold $(k)$]\label{def:inf-cascade-threshold}
 The information cascade threshold $k$ is defined as the difference between the occurrences of action $+1$ and action -$1$ in the observation history $h_{n-1}$ that makes agent $n$ indifferent to their private signals for any $n\geq k$. If the difference between the occurrences of the action $+1$ and the action $\-1$ in $h_{n-1}$ exceeds $k$, then agent $n$ will choose the action $+1$ regardless of their private signal. In contrast, if the difference between the occurrences of the action $-1$ and the action $+1$ in $h_{n-1}$ exceeds $k$, then agent $n$ will choose the action $-1$ regardless of their private signal.
\end{definition}
\vspace{-5pt}

Consider the case where no privacy-preserving noise is introduced, i.e., \( \varepsilon = \infty \). Without noise, it was shown in \citet{Bikhchandani1992cascades} that cascading begins with the first agent \( n \) for whom the observed history satisfies \( k= 2 \), meaning that the agent observes one action at least twice more frequently than the other. That is, if an agent sees two more occurrences of action \( +1 \) than \( -1 \) in \( h_{n-1} \), they will always choose \( +1 \), even if their private signal suggests otherwise. Similarly, if an agent observes two more occurrences of action \( -1 \) than \( +1 \), they will always choose \( -1 \), leading to an irreversible information cascade. 
\vspace{-5pt}
\begin{theorem}[Probability of the Correct Cascade]\label{thm:right-cascade-B}
   Consider a correct signal probability $p$ and privacy budget $\epsilon$, and define $\alpha=(1-p)/p$ and $u(\varepsilon)=\frac{1}{1+e^{\varepsilon}}$. Let \(v_k = \frac{1 - \alpha^{\frac{k-2}{k-1}}}{1 - \alpha^{\frac{k-2}{k-1}} + \alpha^{\frac{-1}{k-1}} - \alpha}\) and $\varepsilon_{k}=\log\frac{1-v_{k}}{v_{k}}$, where $k$ is the information cascade threshold given by $k=\lfloor \log_{\frac{(1-u(\varepsilon))(1-p) + u(\varepsilon)p }{u(\varepsilon)(1-p) + p(1-u(\varepsilon))} }\frac{1-p}{p} \rfloor+1$. Subject to the privacy budget $\varepsilon$, agents employ the randomized response strategy with flip probability $u_{n}=u(\varepsilon)$; subsequently, for $\varepsilon \in (\varepsilon_{k+1}, \varepsilon_{k}]$, the information cascade threshold $k$ does not change and the probability of the correct cascade increases with $\varepsilon$.  %where $\varepsilon_{k}=\log\frac{1-w_{k}}{w_{k}}$,  $w_{k}=(1-\alpha^{\frac{k-2}{k-1}})/(1-\alpha^{\frac{k-2}{k-1}}+\alpha^{\frac{-1}{k-1}}-\alpha)$, and $\alpha=(1-p)/p$. 
\end{theorem}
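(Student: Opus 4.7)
The plan is to recast the pre-cascade dynamics as a gambler's ruin problem and then read off the claimed behaviour from the resulting closed form. By \Cref{thm:piece-wise-random-response}, as long as no cascade has yet occurred every agent applies the same randomized response with flip probability $u(\varepsilon)=1/(1+e^{\varepsilon})$, and by \Cref{def:inf-cascade-threshold} whenever the current imbalance $|d|<k$ the agent's pre-randomization intent is $a_n=s_n$. Conditioning on $\theta=+1$ and writing $q_+=p(1-u)+(1-p)u$ and $q_-=pu+(1-p)(1-u)$ for the probabilities of reporting $+1$ and $-1$, the signed count $D_n=\#\{i\le n:x_i=+1\}-\#\{i\le n:x_i=-1\}$ is therefore an i.i.d.\ nearest-neighbour random walk starting at $0$, and the correct (resp.\ wrong) cascade corresponds to $D_n$ first hitting $+k$ (resp.\ $-k$).

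I would then invoke the classical gambler's ruin identity. Since $p>1/2$ and $u<1/2$, the step-ratio is $q_-/q_+=\rho(\varepsilon)<1$, and the probability of reaching $+k$ before $-k$ starting from $0$ equals
\begin{equation*}
P_c(\varepsilon)=\frac{1-\rho(\varepsilon)^{k}}{1-\rho(\varepsilon)^{2k}}=\frac{1}{1+\rho(\varepsilon)^{k}}.
\end{equation*}
This already exposes the two ways in which $\varepsilon$ enters the expression: continuously through $\rho(\varepsilon)$, and discretely through the integer cascade threshold $k=\lfloor\log_{\rho(\varepsilon)}\alpha\rfloor+1$, which is a piecewise-constant, decreasing step function of $\varepsilon$ that only jumps at values where $\rho(\varepsilon)^{k-1}=\alpha$.

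Next I would pin down those jump points. Solving $\rho(\varepsilon)=\alpha^{1/(k-1)}$ for $u$ by cross-multiplying the definition of $\rho$ and using $1-p=\alpha p$ together with $2p-1=p(1-\alpha)$ yields
\begin{equation*}
u=\frac{\alpha^{1/(k-1)}-\alpha}{(1-\alpha)\bigl(1+\alpha^{1/(k-1)}\bigr)}.
\end{equation*}
A short manipulation (multiply numerator and denominator by $\alpha^{-1/(k-1)}$ and use $\alpha^{(k-2)/(k-1)}=\alpha\cdot\alpha^{-1/(k-1)}$) rewrites this exactly as $v_k$ in the theorem statement; inverting $u=1/(1+e^{\varepsilon})$ then produces $\varepsilon_k=\log((1-v_k)/v_k)$. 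Since $\log_{\rho(\varepsilon)}\alpha$ is continuous and strictly monotone in $\varepsilon$, this identification shows that the floor is constant and equal to $k-1$ on $(\varepsilon_{k+1},\varepsilon_k]$, so the cascade threshold there is exactly the claimed $k$.

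Finally I would close the argument by monotonicity of $P_c$. Because $u(\varepsilon)$ is strictly decreasing in $\varepsilon$, it suffices to show that $\rho$ is strictly increasing in $u$. Writing $\rho=A/B$ with $A=(1-p)+u(2p-1)$ and $B=p-u(2p-1)$, the identity $A+B=1$ gives the clean formula $d\rho/du=(2p-1)/B^{2}>0$, so $\rho(\varepsilon)$ is strictly decreasing in $\varepsilon$. With $k$ held fixed on $(\varepsilon_{k+1},\varepsilon_k]$, $\rho(\varepsilon)^{k}$ is strictly decreasing, and hence $P_c(\varepsilon)=1/(1+\rho(\varepsilon)^{k})$ is strictly increasing, as required. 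The only step that needs genuine care is the algebraic reconciliation of the two equivalent forms of $v_k$ and the corresponding verification that $k$ is constant on each interval $(\varepsilon_{k+1},\varepsilon_k]$; the rest is a textbook gambler's-ruin computation paired with a one-line monotonicity check.
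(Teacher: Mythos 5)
Your proposal is correct and follows essentially the same route as the paper: invoke the pre-cascade randomized-response characterization, model the action imbalance as a gambler's-ruin walk with absorbing barriers at $\pm k$ to get $\frac{\rho^k-1}{\rho^{2k}-1}$, identify the breakpoints $\varepsilon_k$ by solving $\rho^{k-1}=\alpha$ for $u=v_k$, and conclude via the chain of monotonicities $\varepsilon\mapsto u\mapsto\rho\mapsto P_c$. Your algebra is in places cleaner than the paper's (the simplification $P_c=1/(1+\rho^k)$ and the derivative $d\rho/du=(2p-1)/B^2$ replace the paper's quotient-rule computation), but these are stylistic refinements of the same argument rather than a different approach.
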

\vspace{-5pt}
In the non-private sequential learning model with binary signals, the information cascade threshold $k$ is only a function of the probability $p$ of receiving correct signals. However, the randomized response strategy affects the information cascade threshold $k$, which can vary with the privacy budget $\varepsilon$ for a fixed $p$. Without using randomized response, as long as an information cascade is not started, an agent receiving a signal of $s_n = +1$ will choose the action $a_n = +1$ with probability one and the action $a_n = -1$ with probability zero. However, following the randomized response strategy, the probability difference between the choice of different actions narrows, reducing the informational value that each randomized action $x_n$ conveys. Consequently, a larger difference in the number of actions is required to trigger an information cascade. Given an information cascade threshold $k$, for $\varepsilon \in (\varepsilon_{k+1}, \varepsilon_{k}]$, the information cascade threshold $(k)$ does not change with increasing privacy budget ($\varepsilon$). In this range, $\varepsilon \in (\varepsilon_{k+1}, \varepsilon_{k}]$, as the privacy budget increases, the flip probability $u$ of the randomized response strategy decreases, the agents are more inclined to select the correct actions and their actions contain more information about their private signals. Consequently, the probability of the correct cascade will increase. However, increasing $\varepsilon$ beyond $\varepsilon_k$ will cause the information cascade threshold to decrease by $1$ from $k$ to $k-1$. With the information cascade occurring earlier for $\varepsilon \in (\varepsilon_{k},\varepsilon_{k-1}]$ there is less opportunity for information aggregation and we see a drop in the probability of correct cascades by increasing $\varepsilon$ above $\varepsilon_k$ at the end of each $(\varepsilon_{k+1}, \varepsilon_{k}]$ interval; see \Cref{fig:binary_non_mon}. \textit{It is worth mentioning that this nonmonotonic pattern of cascade probability as a function of $\varepsilon$ also has important implications for settings where privacy budgets are externally chosen. Although our main model assumes that agents choose noise levels endogenously, this result suggests that when a platform or policymaker selects a fixed privacy budget for a population, the endpoints $\varepsilon_k$ of the stability intervals are normatively attractive. Specifically, each $\varepsilon_k$ dominates nearby higher values of $\varepsilon$ in terms of privacy protection and learning performance. This offers a principled guide for setting privacy budgets in binary decision-making environments.}
%(see \Cref{fig:binary_non_mon} in the appendix for an illustration of the probability of learning versus privacy budget in different intervals corresponding to decreasing $k$ values).

\begin{figure}[htbp]
    \centering
    %\hspace{1.5cm}
    \includegraphics[width=0.8\textwidth]{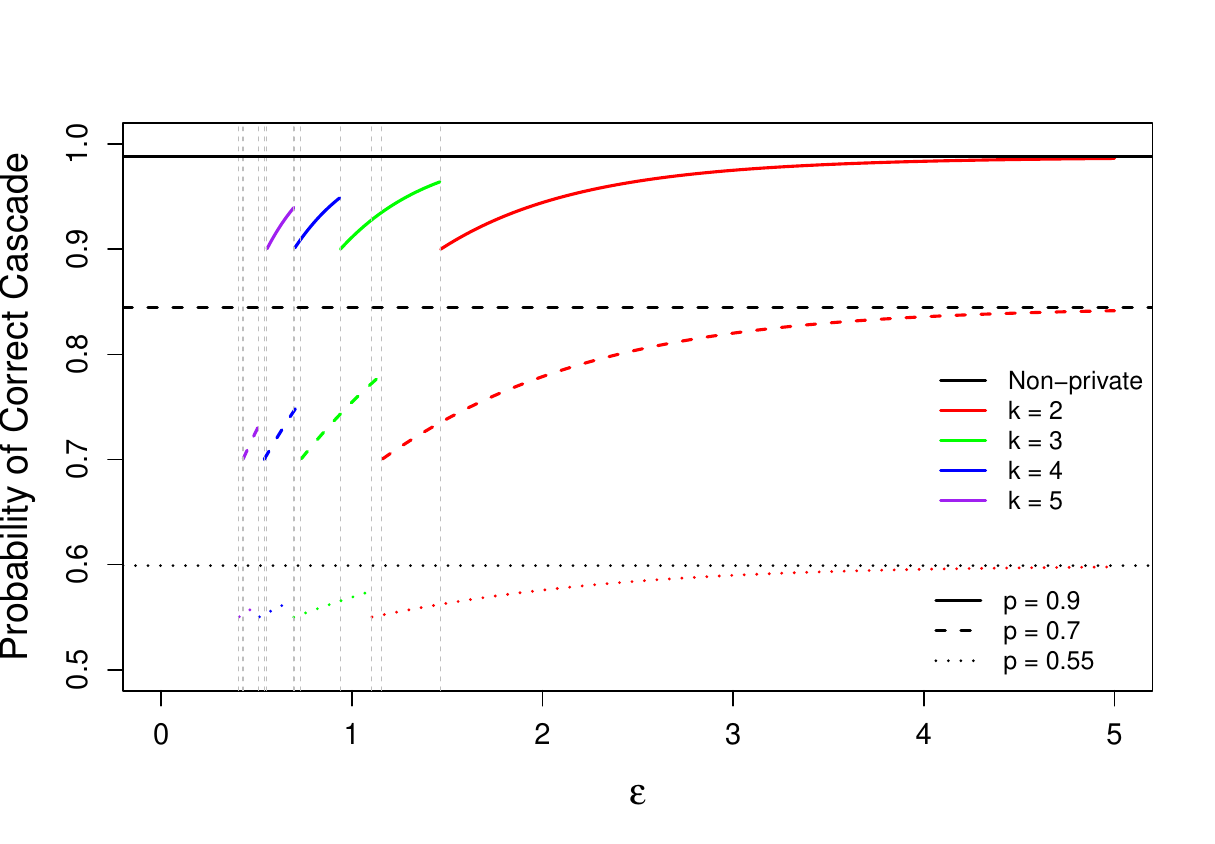}
    \caption{Probability of correct cascade vs. privacy budget (\( \varepsilon \)) for different cascade thresholds (\( k \)). Each colored line represents the probability of a correct cascade for a specific threshold \( k \).}
    \label{fig:binary_non_mon}
\end{figure}

\Cref{fig:binary_non_mon} illustrates how the probability of a correct cascade varies with the differential privacy budget \( \varepsilon \) across different cascade thresholds \( k \), considering three values of \( p \): \( p = 0.55 \), \( p = 0.7 \), and \( p = 0.9 \). Each colored line represents a specific value of \( k \), with vertical dashed lines indicating the transition points between thresholds. In particular, the probability of a correct cascade does not increase monotonically with \( \varepsilon \); at certain points, an increase in \( \varepsilon \) leads to a reduction in the cascade threshold, which in turn decreases the probability of a correct cascade. This effect is more pronounced at higher values of \( p \); For example, when \( p = 0.9 \), the drop in the probability of a correct cascade during the transition from \( k = 2 \) to \( k = 3 \) is more substantial than that for \( p = 0.55 \).

To better understand these non-monotonic behaviors and their limiting cases, we turn to the analytical expressions that govern the cascade threshold and the corresponding probability of a correct cascade. The probability of a correct cascade is derived in \Cref{lem:asymptotic-learning-B}, where it is shown that
\[
\mathbb{P}(\text{correct cascade}) = \frac{\rho(\varepsilon)^k - 1}{\rho(\varepsilon)^{2k} - 1},
\]
where \( \rho(\varepsilon) =\frac{(1-u(\varepsilon))(1-p) + pu(\varepsilon)}{u(\varepsilon)(1-p) + p(1-u(\varepsilon))}\). The associated cascade threshold is given in \Cref{lem:expression-k}:
\begin{equation}
k = \left\lfloor\log_{ \rho(\varepsilon)} \frac{1-p}{p}\right\rfloor + 1.
\end{equation}

At the boundary of each interval—that is, as \( \varepsilon \to \varepsilon^{+}_k \)—the threshold becomes asymptotically exact, and the floor operation can be dropped. In this limiting case, we have
\[
\lim_{\varepsilon \to \varepsilon^{+}_k} k = \log_{\rho(\varepsilon^{+}_k)} \frac{1 - p}{p}, \quad \lim_{\varepsilon \to \varepsilon^{+}_k} \mathbb{P}(\text{correct cascade}) = p
\]
This confirms that the probability of a correct cascade, as \( \varepsilon \) approaches the threshold from the right, converges to the signal accuracy.

% This result provides a theoretical explanation for the behavior observed in \Cref{fig:binary_non_mon}. In particular, it shows why the probability of a correct cascade does not converge to zero as \( \varepsilon \to 0 \). Intuitively, one might expect the probability to vanish because the agent’s behavior becomes nearly uniform at extreme privacy levels. However, this intuition relies on a finite-time perspective. In contrast, our model assumes an infinite time horizon and focuses solely on the probability that a correct cascade will eventually occur. That is, even though it may take a long time for a correct cascade to emerge under strong privacy, the probability captures only the eventual outcome—not the time required.

\subsection*{The Binary Signals Model with Heterogeneous Privacy Budgets}\label{sec:heterogeneous-privacy-binary}

Individuals have been shown to exhibit highly diverse privacy attitudes and behaviors~\citep{acquisti2005privacy,jensen2005privacy}. 
This heterogeneity motivates moving beyond a uniform, exogenous treatment of privacy toward frameworks that allow differentiated, endogenous protection across users. 
The notion of heterogeneous differential privacy~\citep{alaggan2015heterogeneous} and personalized variants such as individualized budgets for regression tasks~\citep{acharya2024personalized} embody this perspective by assigning each individual their own privacy parameter rather than enforcing a single level across the population. 
Such heterogeneity is particularly relevant in contexts where privacy valuations directly shape participation decisions. For example, in the case of vaccination data, some individuals may be willing to disclose sensitive information with relatively weak guarantees if doing so accelerates collective immunity, while others may insist on much stronger safeguards before consenting to share their records.

Formally, We extend our model to incorporate heterogeneous privacy budgets. In the binary signal setting, we assume that each agent’s privacy budget \(\varepsilon_n\) is drawn from a distribution such that the expected flipping probability \( \bar{u}=\mathbb{E}_{\varepsilon_n}[u_n] \) exists, where \( u_n = \frac{1}{1 + e^{\varepsilon_n}} \) denotes the probability with which agent \(n\) flips their action under the randomized response strategy. For the binary signals model, the key distinction under heterogeneous privacy concerns lies in how agents interpret the information contained in the public history. Since each agent may have a different privacy budget \(\varepsilon_n\), they adopt different randomized response strategies, resulting in heterogeneous flipping probabilities \(u_n = \frac{1}{1 + e^{\varepsilon_n}}\). As a result, when aggregating information from previous actions, agents must consider the expected flipping probability \(\bar{u} = \mathbb{E}_{\varepsilon_n}[u_n]\) rather than a fixed value.

Specifically, when agent \(n\) makes her decision, her information consists of both the public history and her private signal. She therefore bases her choice on a combination of the public belief \(l_n\) and the private belief \(L_n\). Assuming that an information cascade has not yet occurred, and that there are \(k\) more agents who have chosen action \(+1\) than those who have chosen \(-1\), the public belief is given by:

\begin{equation*}
    l_n = \log\frac{\mathbb{P}(x_{1}, x_{2}, \ldots, x_{n-1} \mid \theta = +1)}{\mathbb{P}(x_{1}, x_{2}, \ldots, x_{n-1} \mid \theta = -1)} 
    = \log \left[\frac{\bar{u}(1 - p) + p(1 - \bar{u})}{(1 - p)(1 - \bar{u}) + \bar{u}p} \right]^k.
\end{equation*}

Now, compared to the homogeneous case, the key difference lies in the fact that \(\bar{u}\) represents the expected flipping probability, averaged over the distribution of privacy budgets \(\varepsilon_n\). This added heterogeneity affects how informative each action is, and consequently influences the dynamics of information aggregation. The following theorem characterizes how the cascade threshold and the probability of a correct cascade depend on \(\bar{u}\) in this heterogeneous setting.

\begin{theorem}[Probability of Correct Cascade under Heterogeneous Privacy Budget]\label{thm:right-cascade-B-hete}
   Consider a correct signal probability \(p\) and privacy budget $\varepsilon_n$, and define \(\alpha = (1-p)/p\) and \(\bar{u}=\mathbb{E}_{\varepsilon_n}[\frac{1}{1 + e^{\varepsilon_n}}]\). Let \(\tilde{v}_k = \frac{1 - \alpha^{\frac{k-2}{k-1}}}{1 - \alpha^{\frac{k-2}{k-1}} + \alpha^{\frac{-1}{k-1}} - \alpha}\), where \(k\) is information cascade threshold given by \(k = \left\lfloor \log_{\frac{(1 - \bar{u})(1-p) + \bar{u}p }{\bar{u}(1-p) + p(1 - \bar{u})}} \frac{1-p}{ p} \right\rfloor + 1\). Subject to a privacy budget \(\varepsilon_n\) for each agent, agents adopt the randomized response strategy with flipping probability \(u_n(\varepsilon_n) = \frac{1}{1 + e^{\varepsilon_n}}\), then for \(\bar{u} \in [\tilde{v}_k, \tilde{v}_{k+1})\), the cascade threshold \(k\) remains unchanged, and the probability of a correct cascade decreases with expected flipping probability \(\bar{u}\).
\end{theorem}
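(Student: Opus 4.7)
The plan is to reduce the heterogeneous problem to an effective homogeneous one governed by the scalar $\bar{u}$, and then invoke the same gambler's-ruin pipeline used for Theorem~\ref{thm:right-cascade-B}. First, by applying the argument of Theorem~\ref{thm:piece-wise-random-response} agent by agent, each agent with budget $\varepsilon_n$ optimally uses a randomized response with flip $u_n = (1+e^{\varepsilon_n})^{-1}$ before any cascade forms. Since the $\varepsilon_n$, signals $s_n$, and flip coins are mutually independent across $n$, the reported actions $x_n$ are i.i.d.\ conditional on $\theta$, with
\[
\mathbb{P}(x_n = +1 \mid \theta = +1) = p(1-\bar{u}) + (1-p)\bar{u}, \quad \mathbb{P}(x_n = -1 \mid \theta = +1) = (1-p)(1-\bar{u}) + p\bar{u},
\]
and analogously under $\theta = -1$. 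A subsequent agent who observes only the actions, with the distribution of $\varepsilon_n$ as common knowledge, therefore treats the history as if generated by a single effective flip $\bar{u}$.

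Given this reduction, the per-observation public log-likelihood ratio is $\log \rho_{\bar{u}}^{-1}$, where $\rho_{\bar{u}} = \frac{(1-\bar{u})(1-p)+\bar{u}p}{\bar{u}(1-p)+(1-\bar{u})p}$, while the worst-case private LLR has magnitude $|\log \alpha|$ with $\alpha = (1-p)/p$. A cascade starts the first time the accumulated public LLR exceeds this private term, yielding $k = \lfloor \log_{\rho_{\bar{u}}} \alpha \rfloor + 1$, matching the statement. Transitions of $k$ occur exactly when $\rho_{\bar{u}} = \alpha^{1/(k-1)}$; solving this linear equation for $\bar{u}$ and simplifying using $2p-1 = p(1-\alpha)$ and a common factor of $\alpha^{1/(k-1)}$ recovers the stated closed form for $\tilde{v}_k$, so that the threshold $k$ is constant on each interval $[\tilde{v}_k, \tilde{v}_{k+1})$.

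Finally, under $\theta = +1$, the net count of $+1$ minus $-1$ reports is a random walk with i.i.d.\ Bernoulli steps whose down/up ratio is $\rho_{\bar{u}}$. The classical gambler's-ruin identity then gives $\mathbb{P}(\text{correct cascade} \mid \theta = +1) = (\rho_{\bar{u}}^k - 1)/(\rho_{\bar{u}}^{2k} - 1) = 1/(1+\rho_{\bar{u}}^k)$. Monotonicity is immediate: $\rho_{\bar{u}}$ is strictly increasing in $\bar{u}$ on $[0, 1/2)$ (its numerator rises and denominator falls in $\bar{u}$, since $2p-1>0$), so for fixed $k$ the map $\bar{u} \mapsto (1+\rho_{\bar{u}}^k)^{-1}$ is strictly decreasing, which is the claim. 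The main obstacle, and what distinguishes this result from the homogeneous case, is cleanly justifying the i.i.d.\ reduction: one must pin down the information structure so that later agents average over their predecessors' unknown budgets via the marginal $\bar{u}$ rather than conditioning on the full sequence of $\varepsilon_i$'s; once that modeling choice is fixed, the remainder of the argument mirrors the homogeneous proof with $u(\varepsilon)$ replaced by $\bar{u}$.
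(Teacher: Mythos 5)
Your proposal is correct and follows essentially the same route as the paper: the heterogeneous case is handled by averaging the flip probability over the budget distribution to obtain an effective homogeneous flip rate $\bar{u}$, after which the cascade-threshold computation (Lemma~\ref{lem:expression-k}) and the gambler's-ruin identity (Lemma~\ref{lem:asymptotic-learning-B}) apply verbatim with $u(\varepsilon)$ replaced by $\bar{u}$, and the monotonicity follows from $\rho$ being increasing in $\bar{u}$. Your explicit justification of the i.i.d.\ reduction (later agents marginalize over predecessors' unobserved budgets) and your algebraic verification that $\rho_{\bar{u}}=\alpha^{1/(k-1)}$ yields the stated $\tilde{v}_k$ are both sound and in fact make explicit the one step the paper leaves implicit.
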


Similar to the homogeneous case, the probability of a correct cascade in the heterogeneous setting is closely related to the average level of privacy concern in the population. When the cascade threshold \(k\) remains fixed, a higher expected flipping probability \(\bar{u}\)—corresponding to stronger average privacy concerns—leads to a lower probability of a correct cascade. This is because increased randomization makes agents’ actions less informative when the threshold for triggering a cascade does not change. However, as \(\bar{u}\) continues to increase and crosses a threshold, the cascade threshold \(k\) itself may change, effectively delaying the onset of an information cascade. In this regime, agents rely more on their private signals rather than following the public history, which can help them escape the information trap. As a result, the probability of a correct cascade may increase. This leads to a non-monotonic relationship between the average flipping probability and learning performance.

\section{Learning Efficiency with Continuous Signals}
\label{sec:continuous-model}

Building on the binary model, we now turn to the continuous signal setting to provide a more comprehensive understanding of the sequential observational learning problem under different signal structures. In the continuous model, each agent receives a private signal $s_n \in \mathbb{R}$, which is conditionally independent and identically distributed given the true state $\theta \in \{ \pm 1 \}$, specifically following $s_n \sim \mathcal{N}(\theta, \sigma^2)$. To evaluate learning performance, we first focus on a key criterion commonly used in the literature: \emph{asymptotic learning}, which refers to agents eventually making the correct decision with probability one (see, e.g., \citet{Acemoglu2011network}). We formally define it as follows:

\begin{definition}[Asymptotic Learning]\label{def:AL}
We say that asymptotic learning occurs if agents’ actions converge to the true state in probability, i.e., $\lim_{n \to \infty} \mathbb{P}(a_n = \theta) = 1$.
\end{definition}

While standard differential privacy offers strong privacy guarantees by requiring indistinguishability across all adjacent inputs, it can fundamentally obstruct asymptotic learning in continuous models. The key issue lies in how privacy is enforced when the signal space is unbounded: to satisfy $\varepsilon$-DP, the strategy must add a uniform level of noise across all possible input signals, regardless of how different they are. This uniform noise requirement forces agents to randomize their actions—even when their private signals provide highly informative evidence about the true state. As a result, agents cannot fully exploit their private information, and their actions retain a persistent level of randomness throughout the learning process. This prevents the public belief from concentrating around the true state and leads to a failure of asymptotic learning. In other words, the DP constraint injects sufficient uncertainty into every action such that the sequence of decisions does not converge to the correct state with high probability, even as the number of agents grows.

To illustrate the limitations of standard DP in this setting, we show that under an $\varepsilon$-DP constraint, agents are forced to adopt a randomized response strategy with a constant flip probability, regardless of their signal strength or their position in the sequence.

\begin{figure}[htbp]
    \centering
    \includegraphics[width=0.5\textwidth]{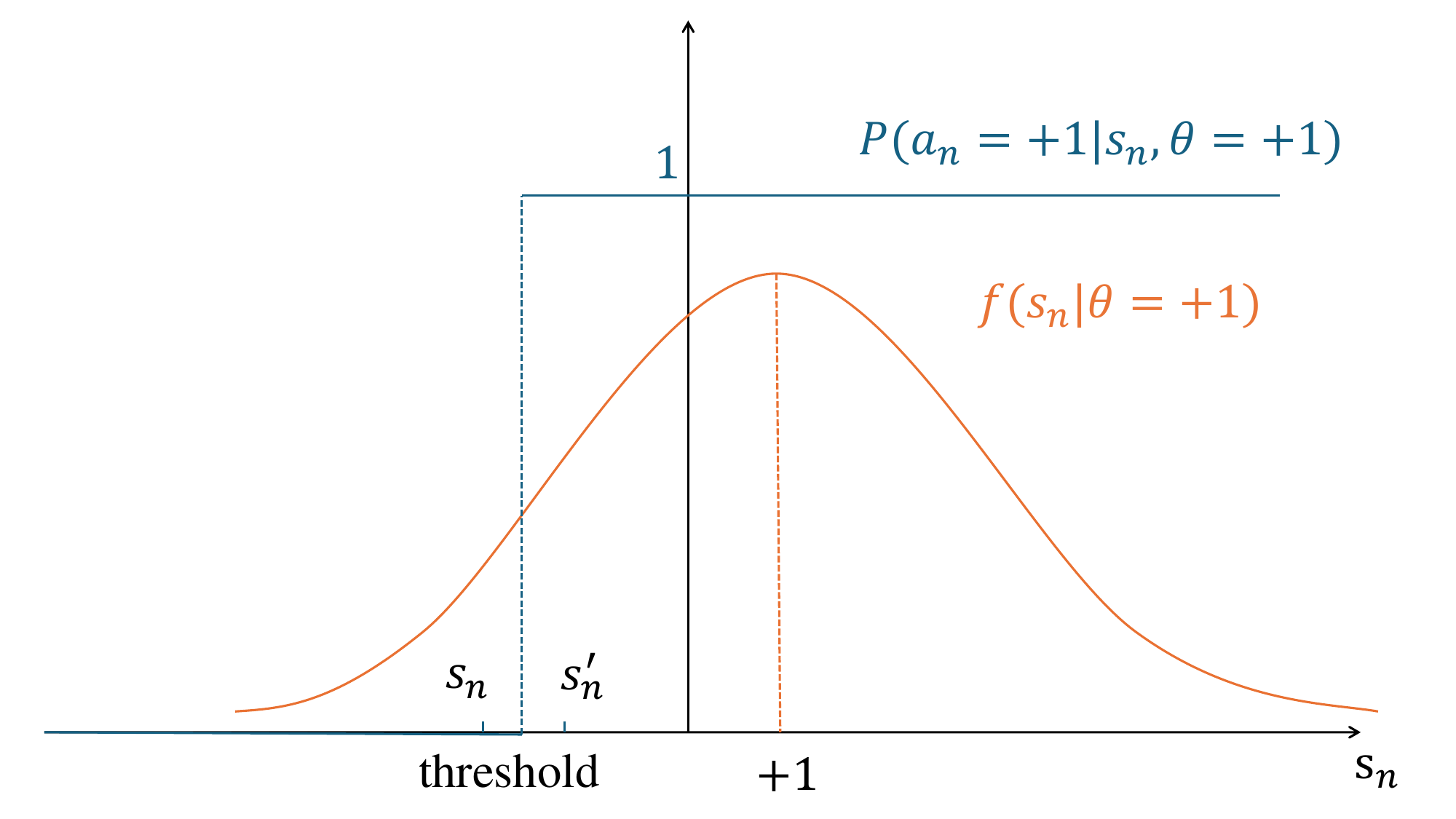}
    \caption{Probability of action $+1$ given signal $s_n$ (blue), and signal distribution under $\theta = +1$ (orange).}
    \label{fig:action_and_normal}
\end{figure}
\vspace{-10pt}

As illustrated in \Cref{fig:action_and_normal}, for each value of the public belief $\ell$, there exists a unique signal threshold $t(\ell)$ such that agent $n$'s optimal action switches from $-1$ to $+1$ as their private signal $s_n$ crosses this threshold. This indicates that decisions are most sensitive to noise near this threshold. However, the $\varepsilon$-DP requirement enforces indistinguishability across all signal pairs, which inevitably includes pairs on opposite sides of the decision boundary. As a result, agents must randomize their actions with the same fixed probability, regardless of the signal’s informativeness. We provide the detailed formal analysis in Appendix~\ref{app:failure-asymptotic}.

This negative result is rooted in the fact that standard DP imposes a \emph{uniform level of noise} across the entire signal space. In practice, such uniform randomization is overly cautious—agents with signals far from the decision threshold are unlikely to benefit from privacy noise, yet are still required to randomize. To address this, \emph{metric differential privacy} offers a more flexible and context-aware privacy notion. Rather than treating all signal pairs as equally sensitive, mDP allows the privacy guarantee to degrade with distance. Specifically, noise is added proportional to the similarity between signals, preserving accuracy in robust regions and only perturbing decisions near thresholds. This refinement makes mDP particularly well-suited to continuous settings, where the proximity of signals plays a key role in decision-making. To ensure the robustness of our conclusions, we also examine an alternative privacy notion—\emph{pufferfish privacy}—which is also suitable for unbounded Gaussian signals. Results under this definition are provided in Appendix~\ref{app:pufferfish}. We formally define mDP for the continuous setting as follows:

\begin{definition}[$(\varepsilon, d_{\mathbb{R}})$-mDP for the Continuous Model]\label{def:dp-cont}
A randomized strategy $\mathcal{M} : \mathbb{R} \to \{-1, +1\}$ satisfies $(\varepsilon, d_{\mathbb{R}})$-metric differential privacy if for all $s_n, s_n' \in \mathbb{R}$ and all possible outputs $x_n \in \{-1, +1\}$, we have:
\begin{align}\label{eq:DPC}
\mathbb{P}(\mathcal{M}(s_n; h_{n-1}) = x_n) 
\leq 
\exp\left(\varepsilon d_{\mathbb{R}}(s_n, s_n') \right) 
\mathbb{P}(\mathcal{M}(s_n'; h_{n-1}) = x_n),
\end{align}
where $d_{\mathbb{R}}(s_n, s_n') := \|s_n - s_n'\|_1$, and $\varepsilon > 0$ is the privacy budget.
\end{definition}

Given this privacy constraint, agents must carefully design their reporting strategy to balance utility and privacy. In the continuous signal model, agent \(n\) selects a randomized strategy from the set \(\mathcal{M}_{\varepsilon, d_{\mathbb{R}}}\) that satisfies the \((\varepsilon, d_{\mathbb{R}})\)-metric differential privacy condition. The goal is to maximize the agent's expected utility conditioned on the observed history and the private signal:
\begin{equation*}
  \mathcal{M}_n^* = \arg\max_{\mathcal{M} \in \mathcal{M}_{\varepsilon, d_{\mathbb{R}}}} \, \mathbb{E}\left[ u_n(x_n, \theta) \mid h_{n-1}, s_n \right],
\end{equation*}
where \(\mathcal{M}_{\varepsilon, d_{\mathbb{R}}}\) denotes the set of strategies satisfying \((\varepsilon, d_{\mathbb{R}})\)-mDP.

While asymptotic learning ensures that agents eventually identify the true state, it does not capture how fast this learning occurs. In practice, the speed at which agents reach correct decisions is often more critical than eventual convergence. To evaluate this aspect, we turn to the notion of \emph{learning efficiency}, which can be assessed using three commonly adopted measures.

\textbf{(1) Convergence Rate of Public Beliefs.}  
This measure evaluates how rapidly the public belief converges to the true state as more agents act. The log-likelihood ratio (LLR) of the public belief is defined as:
\begin{equation}\label{eq:public-LLR}
    l_{n}= \log\frac{\mathbb{P}(\theta=+1\,|\, x_1,\ldots,x_{n-1})}{\mathbb{P}(\theta=-1\,|\, x_1,\ldots,x_{n-1})},
\end{equation}
where $\mathbb{P}(\theta=+1\,|\, x_1,\ldots,x_{n-1})$ denotes the belief of agent $n$ given the actions of the first $n-1$ agents. We say that $l_n$ converges at rate $f(n)$ if the ratio $l_n/f(n)$ tends to 1. Formally:

\begin{definition}[Convergence Rate of the Public Log-Likelihood Ratio]\label{def:public-llr-rate}
Let $f(n)$ be a function of $n$. We say that $f(n)$ characterizes the convergence rate of $l_n$ if, conditional on $\theta = +1$, 
\[
\lim_{n \to \infty} \frac{l_n}{f(n)} = 1 \quad \text{with probability 1}.
\]
\end{definition}

\textbf{(2) Time to First Correct Action.}  
This measure concerns how quickly the learning process produces a correct decision. The \emph{stopping time} $\tau$ is defined as the index of the first agent whose action matches the true state:

\begin{definition}[Time to First Correct Action]\label{def:stopping-time}
The stopping time \( \tau \) is given by
\[
\tau := \inf \{ n \geq 1 : x_n = \theta \},
\]
where $x_n$ is the action of agent $n$. Learning is efficient under this metric if $\mathbb{E}[\tau] < \infty$.
\end{definition}

\textbf{(3) Total Number of Incorrect Actions.}  
A stricter efficiency measure tracks the cumulative number of incorrect decisions across all agents:

\begin{definition}[Total Number of Incorrect Actions]\label{def: num-incorrect-action}
Let \( (x_n)_{n \geq 1} \) be the sequence of actions and let \( \theta \) be the true state. The total number of incorrect actions is defined as:
\[
W := \sum_{n=1}^{\infty} \mathbf{1} \{ x_n \neq \theta \}.
\]
\end{definition}

Since $W \geq \tau - 1$, this metric is at least as strict as the stopping time. We say that learning is efficient in this sense if $\mathbb{E}[W] < \infty$.

\medskip

In the following sections, we analyze these three measures under privacy constraints and demonstrate that—perhaps counterintuitively—learning efficiency can be improved under privacy, compared to the non-private case. For example, \citet{rosenberg2019efficiency} and \citet{ Hann-Caruthers2018speed} show that in the Gaussian model \emph{without} privacy protection, learning is asymptotically correct but highly inefficient. Specifically:

\begin{itemize}
    \item The convergence rate of the public belief is only $\Theta(\sqrt{\log n})$, much slower than the $\Theta(n)$ rate achievable under direct signal sharing \cite{Hann-Caruthers2018speed,rosenberg2019efficiency};
    \item The expected stopping time is infinite: $\mathbb{E}[\tau] = \infty$;
    \item The expected number of incorrect actions is also infinite: $\mathbb{E}[W] = \infty$.
\end{itemize}

These results underscore the surprising inefficiency of the standard Gaussian sequential learning model. In contrast, we will show that when agents adopt certain privacy-preserving strategies, learning not only remains asymptotically correct but also becomes substantially more efficient under all three measures.

\subsection{ Main Results for Continuous Signals}
\label{sec: continuous}

For continuous signals, the sensitivity of an agent’s action to their private signal varies across the signal space. In particular, near the decision threshold, small changes in $s_n$ can flip the action from $-1$ to $+1$, whereas far from the threshold, even significant changes in $s_n$ may not alter the action at all. This heterogeneity in sensitivity highlights a key limitation of traditional differential privacy: under its uniform adjacency definition, it requires strategies to protect against worst-case signal perturbations, thereby injecting excessive noise even in regions where the action is insensitive. To address this, we adopt \emph{metric differential privacy}, which allows the privacy guarantee to degrade gracefully with the distance between private signals. Under mDP, the strategy is required to produce similar outputs only for similar inputs, meaning that when two signals are far apart in value, the induced outputs (i.e., actions) can differ substantially without violating the privacy constraint. This localized notion of sensitivity enables us to reduce the noise injected in regions where the signal-to-action map is stable. Building on this observation, we propose the \textit{smooth randomized response} strategy as the optimal strategy for agents under mDP. This strategy endogenously calibrates the amount of noise to the agent’s signal, injecting more noise near the decision threshold—where sensitivity is high—and less noise elsewhere. Inspired by the idea of smooth sensitivity \citep{nissim2007smooth}, this approach ensures that the strategy remains stable and metic differentially private even in settings with variable local sensitivity, while avoiding the inefficiencies of conventional randomized response. Crucially, the smooth randomized response strategy leverages the structure of mDP to strike a balance between privacy protection and information transmission. By tailoring the noise to the local geometry of the signal space, it allows agents to maintain stronger privacy guarantees with less informational distortion, thereby improving both privacy preservation and learning efficiency in sequential decision-making.

\begin{definition}[Smooth Randomized Response]\label{def:SRR}
A strategy $\mathcal{M}(s_n; h_{n-1})$ is called a smooth randomized response strategy with flip probability $u_n$ if, given an initial decision $a_n$, the reported action $x_n$ is determined as follows:
\begin{equation*}
    \mathbb{P}_{\mathcal{M}}(x_n = +1 \mid a_n = -1) = \mathbb{P}_{\mathcal{M}}(x_n = -1 \mid a_n = +1) = u_n(s_n),
\end{equation*}
\begin{equation*}
    \mathbb{P}_{\mathcal{M}}(x_n = +1 \mid a_n = +1) = \mathbb{P}_{\mathcal{M}}(x_n = -1 \mid a_n = -1) = 1 - u_n(s_n),
\end{equation*}
with the flip probability $u_n$ defined as:
\begin{equation*}
u_n(s_n) =\frac{1}{2} \, e^{-\varepsilon |s_n - t(l_n)|} 
\end{equation*}

Here, $t(l_n)=-\frac{\sigma^2l_n}{2}$ represents the threshold value for different actions as a function of the public belief $l_n$. If $s_n > t(l_n)$, agents prefer to choose action $+1$ before flipping the action. Conversely, if $s_n < t(l_n)$, agents prefer to choose action $-1$ before flipping. 
\end{definition}

\begin{theorem}[Smooth Randomized Response Strategy]\label{thm:smoothly-random-response-C}
In the sequential learning model with Gaussian signals, the optimal reporting strategy under a fixed privacy budget \(\varepsilon\) is the smooth randomized response strategy. This strategy satisfies \((\varepsilon, d_{\mathbb{R}})\)-metric differential privacy.
\end{theorem}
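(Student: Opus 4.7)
The plan is to prove the theorem in two parts: mDP verification and optimality. Writing $g(s):=\mathbb{P}_{\mathcal{M}}(x_n=+1\mid s)$, Definition~\ref{def:SRR} gives $g(s)=\tfrac{1}{2}e^{-\varepsilon(t-s)}$ on $\{s\le t\}$ and $g(s)=1-\tfrac{1}{2}e^{-\varepsilon(s-t)}$ on $\{s\ge t\}$, with $t=t(l_n)=-\sigma^2 l_n/2$. Condition~\eqref{eq:DPC} is equivalent to requiring both $\log g$ and $\log(1-g)$ to be $\varepsilon$-Lipschitz with respect to $d_{\mathbb{R}}$. On the same-side cases $s,s'\le t$ and $s,s'\ge t$ the log-probabilities are piecewise-linear and the verification is immediate; for the mixed case $s\le t\le s'$, I would bound the ratio $g(s')/g(s)=(2-e^{-\varepsilon(s'-t)})\,e^{\varepsilon(t-s)}$ above by $e^{\varepsilon(s'-s)}=e^{\varepsilon(s'-t)}\,e^{\varepsilon(t-s)}$, which reduces to the elementary inequality $2-e^{-y}\le e^y$ for $y=\varepsilon(s'-t)\ge 0$.

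For optimality, I would first rewrite the conditional expected utility as
\[
\mathbb{E}[u_n(x_n,\theta)\mid h_{n-1},s]=(1-p_n(s))+g(s)\,(2p_n(s)-1),
\]
where $p_n(s):=\mathbb{P}(\theta=+1\mid h_{n-1},s)$. In the Gaussian model, $p_n(s)\gtrless\tfrac{1}{2}$ exactly when $s\gtrless t(l_n)$, so the unconstrained ideal is the step $\mathbf{1}\{s>t\}$, which is not mDP-feasible. Recasting $(\varepsilon,d_{\mathbb{R}})$-mDP as $\varepsilon$-Lipschitz bounds on $\log g$ and $\log(1-g)$ and applying them to the pair $(t,s)$ yields the one-sided envelopes $g(s)\le 1-(1-g(t))e^{-\varepsilon(s-t)}$ for $s\ge t$ and $g(s)\ge g(t)\,e^{-\varepsilon(t-s)}$ for $s\le t$. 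Because the integrand is monotone-increasing in $g(s)$ on $\{s>t\}$ and monotone-decreasing on $\{s<t\}$, the pointwise optimum takes $g(s)$ at the upper envelope for $s>t$ and at the lower envelope for $s<t$, which coincides with the smooth randomized response precisely when $g(t)=\tfrac{1}{2}$.

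The main obstacle, and the step I would spend the most effort on, is pinning down $g(t)=\tfrac{1}{2}$ and confirming that the doubly-saturated envelope is globally mDP-feasible. Writing $g(t)=c$ and considering a cross-threshold pair $s_1<t<s_2$ with $g(s_1)=c\,e^{-\varepsilon(t-s_1)}$ and $g(s_2)=1-(1-c)e^{-\varepsilon(s_2-t)}$, the constraint $g(s_2)\le g(s_1)\,e^{\varepsilon(s_2-s_1)}$ reduces, with $y_2=\varepsilon(s_2-t)$, to $(1-c)e^{-y_2}+c\,e^{y_2}\ge 1$, and the symmetric constraint on $1-g$ reduces, with $y_1=\varepsilon(t-s_1)$, to $c\,e^{-y_1}+(1-c)e^{y_1}\ge 1$. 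A first-order Taylor expansion at the origin shows the first fails for $c<\tfrac{1}{2}$ and the second for $c>\tfrac{1}{2}$, so $c=\tfrac{1}{2}$ is the unique value simultaneously compatible with envelope saturation on both sides. This identifies the smooth randomized response as the unique envelope-saturating feasible strategy, hence the optimum among $\mathcal{M}\in\mathcal{M}_{\varepsilon,d_{\mathbb{R}}}$.
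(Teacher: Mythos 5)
Your mDP verification is essentially the paper's own argument: the same three-way case split on which side of the threshold $s_n$ and $s_n'$ fall, with the mixed case reduced to the elementary inequality $2-e^{-y}\le e^{y}$ (the paper writes this as $2z-z^{2}\le 1$ for $z=e^{-\varepsilon(s_n-t)}\le 1$, which is the same bound). Where you genuinely depart from the paper is in the optimality half. The paper argues informally: mDP forces the output distribution to be continuous at the threshold, hence the flip probability there ``must be'' $\tfrac12$, and the decay away from the threshold ``must be'' exactly $e^{-\varepsilon d}$ because faster decay violates privacy and slower decay wastes utility. You supply the rigor that this sketch lacks: the affine representation $\mathbb{E}[u_n]=(1-p_n(s))+g(s)(2p_n(s)-1)$ turns optimality into a pointwise envelope problem (push $g$ up where $p_n>\tfrac12$, down where $p_n<\tfrac12$); the Lipschitz bounds on $\log g$ and $\log(1-g)$ anchored at $t$ give the exponential envelopes; and the first-order expansion of the two cross-threshold constraints $c\,e^{y}+(1-c)e^{-y}\ge 1$ and $c\,e^{-y}+(1-c)e^{y}\ge 1$ cleanly forces $c=\tfrac12$ as the unique value for which both envelopes can be saturated. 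This is a strict improvement in precision over the paper's treatment, and your feasibility check of the saturated envelope is exactly the mDP verification from your first part, so nothing is circular.

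One residual caveat, which you share with the paper rather than introduce: the envelope argument shows that any feasible strategy with $g(t)=c$ is dominated by the doubly-saturated $c$-envelope, and that only $c=\tfrac12$ yields a feasible envelope; but neither you nor the paper explicitly rules out that some feasible, non-saturating strategy with $g(t)\neq\tfrac12$ could outperform the $c=\tfrac12$ envelope, since the relaxed bound $U^{*}(c)$ need not be maximized at $c=\tfrac12$. Closing this would require comparing $U^{*}(\tfrac12)$ against the true feasible supremum for each $c$, a step absent from both arguments; since your proof matches and exceeds the paper's level of rigor everywhere else, this is a shared limitation rather than a defect of your proposal.
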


Unlike the randomized response, the smooth randomized response strategy adds noise based on the received signal, with the amount of noise decreasing as the distance from the threshold increases. For signals $s_n$ far from the threshold, agents add minimal noise, allowing them to preserve metric differential privacy without the need to add excessive noise, as required by the randomized response. This approach enables asymptotic learning to remain feasible even under privacy concerns because the reduction in noise allows agents to make more accurate decisions based on their signals.

\begin{theorem}[Asymptotic Learning with Smooth Randomized Response]\label{thm:smoothly-asymptotic-learning-C}
For differentially private sequential learning with binary states and Gaussian signals, asymptotic learning will always occur under a smooth randomized response strategy with any privacy budget $\varepsilon \in (0, \infty)$.
\end{theorem}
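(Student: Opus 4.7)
The plan is to prove that under $\theta=+1$ the public log-likelihood ratio $l_n$ defined in \eqref{eq:public-LLR} diverges to $+\infty$ almost surely (the $\theta=-1$ case is symmetric); since the intended action under \Cref{def:SRR} is $a_n=\mathrm{sign}(s_n-t(l_n))$ with threshold $t(l_n)=-\sigma^2 l_n/2\to-\infty$, this forces $\mathbb{P}(a_n=\theta)\to 1$, yielding asymptotic learning per \Cref{def:AL}. The argument combines martingale convergence with an \emph{informativeness lemma} showing that, under the smooth randomized response, each reported action $x_n$ remains strictly informative about $\theta$ so long as $l_n$ stays bounded.

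First I set up the martingale step. With the uniform prior, $R_n:=e^{-l_n}$ equals the Radon--Nikodym derivative of $\mathbb{P}_{-1}$ with respect to $\mathbb{P}_{+1}$ on the $\sigma$-algebra $\mathcal{F}_{n-1}$ generated by $(x_1,\dots,x_{n-1})$, so $(R_n)$ is a nonnegative martingale under $\mathbb{P}_{+1}$ with $\mathbb{E}_{+1}[R_n]=1$. By the martingale convergence theorem, $R_n\to R_\infty\in[0,\infty)$ almost surely, equivalently $l_n\to l_\infty\in(-\infty,+\infty]$ a.s. It therefore suffices to rule out the event $A:=\{l_\infty<+\infty\}$ having positive $\mathbb{P}_{+1}$-probability.

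The heart of the proof is the informativeness lemma. Let $q_\theta(l):=\mathbb{P}(x_n=+1\mid\theta,l_n=l)=\int_{\mathbb{R}}\rho(s;l)\,\phi_\theta(s)\,ds$, where $\phi_\theta$ is the $\mathcal{N}(\theta,\sigma^2)$ density and, by \Cref{def:SRR}, $\rho(s;l):=1-\tfrac12 e^{-\varepsilon(s-t(l))}$ for $s\ge t(l)$ and $\rho(s;l):=\tfrac12 e^{-\varepsilon(t(l)-s)}$ for $s<t(l)$. Direct inspection shows $\rho(\cdot;l)$ is continuous, strictly increasing on $\mathbb{R}$, and jointly continuous in $(s,l)$. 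Since $\phi_{+1}-\phi_{-1}$ is odd about $s=0$ and strictly positive on $(0,\infty)$, folding the integral at the origin yields
\begin{equation*}
q_{+1}(l)-q_{-1}(l)=\int_0^{\infty}\bigl[\rho(s;l)-\rho(-s;l)\bigr]\bigl[\phi_{+1}(s)-\phi_{-1}(s)\bigr]\,ds>0
\end{equation*}
for every finite $l$, with strict positivity of the integrand on $(0,\infty)$. Continuity of $q_\theta(\cdot)$ (dominated convergence) then promotes this pointwise gap to a uniform lower bound $q_{+1}(l)-q_{-1}(l)\ge c(L)>0$ on each compact interval $\{|l|\le L\}$, and Pinsker's inequality gives $I(l):=D_{\mathrm{KL}}\!\bigl(\mathrm{Bern}(q_{+1}(l))\,\|\,\mathrm{Bern}(q_{-1}(l))\bigr)\ge 2c(L)^2$ on the same set.

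To close by contradiction, assume $\mathbb{P}_{+1}(A)>0$; then for some $L$ the event $A_L:=A\cap\{\sup_n|l_n|\le L\}$ has positive probability. With the stopping time $\tau:=\inf\{n:|l_n|>L\}$, the increments $Y_k:=l_{k+1}-l_k$ and the drifts $I(l_k)$ are uniformly bounded for $k<\tau$ because $q_\theta(l)$ stays strictly inside $(0,1)$ on $[-L,L]$; thus the stopped process $M_n^{\tau}:=\sum_{k<n\wedge\tau}(Y_k-I(l_k))$ is a martingale with globally bounded increments. On $A_L$ we have $\tau=\infty$, and since $l_n$ converges there $\sum_k Y_k$ converges on $A_L$, while $\sum_k I(l_k)=\infty$ by the informativeness lemma, forcing $M_n^{\tau}\to-\infty$ on $A_L$. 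This contradicts the martingale dichotomy for bounded-increment martingales, which permits only a finite limit or two-sided oscillation. Hence $\mathbb{P}_{+1}(A)=0$, $l_n\to+\infty$ a.s., and asymptotic learning follows. The main obstacle is the informativeness lemma---specifically, upgrading pointwise positivity of $q_{+1}(l)-q_{-1}(l)$ to a lower bound \emph{uniform} over compact $l$-ranges---since the smooth randomized response flattens $\rho(\cdot;l)$ near the threshold and one must verify that the Gaussian mass on either side of $t(l)$ still produces a state-dependent asymmetry of definite size.
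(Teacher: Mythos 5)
Your proof is correct, and while it shares the paper's overall skeleton---a martingale convergence step for the public likelihood ratio followed by an argument that actions remain strictly informative at every finite public belief---it closes the argument by a genuinely different route. The paper (Appendix~F) works under $\theta=-1$, notes that $\pi_n$ is a nonnegative martingale converging to a finite limit, and then argues that the limit must be a \emph{fixed point} at which the action likelihoods under the two states coincide; it rules out finite fixed points by showing $\mathbb{P}(x_n=-1\mid l_n,\theta=+1)<\mathbb{P}(x_n=-1\mid l_n,\theta=-1)$ for all finite $l_n$ (its inequality \eqref{eq:prob_min_prob_plus}), using that the flip function $g$ is nonincreasing and the two Gaussian densities are a mean shift of one another---essentially the same correlation inequality you obtain by folding the integral at the origin. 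You instead make the informativeness \emph{quantitative}: the conditional drift of $l_n$ under the true state equals the KL divergence $I(l_n)$ between the induced Bernoulli action laws, you lower-bound it uniformly on compacts via Pinsker, and you rule out a finite limit by the bounded-increment martingale dichotomy applied to $\sum_k(Y_k-I(l_k))$ stopped at the exit time from $[-L,L]$. What your version buys is a cleaner handling of the step the paper treats somewhat informally (passing from ``$l_n$ converges'' to ``the increment vanishes at the limit,'' which in general requires an argument about which realized action sequence is taken along the path); the cost is the extra work of upgrading pointwise positivity of $q_{+1}(l)-q_{-1}(l)$ to a uniform bound on compacts, which you correctly identify as the main obstacle and which follows from continuity in $l$ (dominated convergence) plus compactness. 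Both proofs conclude identically that the threshold $t(l_n)$ escapes to $\mp\infty$, giving $\mathbb{P}(a_n=\theta)\to 1$ as required by \Cref{def:AL}.
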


\Cref{thm:smoothly-asymptotic-learning-C} implies that, under the smooth randomized response strategy, agents will eventually take the correct action with a probability approaching one for any privacy budget $\varepsilon \in (0, \infty)$. In other words, despite privacy restrictions, the strategy allows agents to learn accurately over time, ensuring reliable decision making in the long run. 

Having established the guarantee of asymptotic learning under the smooth randomized response, a natural follow-up question is whether the speed of asymptotic learning in the private regime can exceed the non-private baseline. Interestingly, our findings show this to be true. We measure the speed of learning by examining how public belief evolves as more agents act ($n\to\infty$). The convergence rate of the public log-likelihood ratio serves as an effective measure to quantify asymptotic learning, as it reflects the rate at which public beliefs converge to the correct state \citep{Hann-Caruthers2018speed}. Moreover, it forms the analytical basis for several other metrics, such as the expected time until the first correct action \citep{rosenberg2019efficiency} and the expected time until the correct cascade occurs \citep{Hann-Caruthers2018speed}.

\begin{theorem}[Learning Rate under Smooth Randomized Response]\label{thm:smoothly-learning-speed}
    Consider a differential privacy budget \( \varepsilon \) and a smooth randomized response strategy for sequential learning with Gaussian signals. Let \( C(\varepsilon) =  \frac{\varepsilon \sigma^2}{4} \left( e^{\varepsilon + \frac{\varepsilon^2 \sigma^2}{2}} - e^{-\varepsilon + \frac{\varepsilon^2 \sigma^2}{2}} \right) \). For \( \varepsilon \in (0, \infty) \), the convergence rate of the public log-likelihood ratio under this strategy is \( f(n) = \frac{2}{\varepsilon\sigma^2} \log( C(\varepsilon) n) \sim \frac{2}{\varepsilon\sigma^2} \log(n) \), where \( f(n) \sim g(n) \) means \( \lim_{n \to \infty} \frac{f(n)}{g(n)} = 1 \).
\end{theorem}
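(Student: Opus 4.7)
The plan is to reduce the statement to an ODE for $l_n$ under $\theta=+1$ and handle fluctuations afterward. First I would write $l_{n+1}-l_n = \log\tfrac{\mathbb{P}(x_n\mid\theta=+1,l_n)}{\mathbb{P}(x_n\mid\theta=-1,l_n)}$ and derive a closed form for $q_\theta(l_n) := \mathbb{P}(x_n=-1\mid\theta, l_n)$ under the smooth randomized response. Since the flip density $\tfrac{1}{2}e^{-\varepsilon|s_n-t(l_n)|}$ splits at $t(l_n)=-\sigma^2 l_n/2$ into two one-sided exponentials, completing the square inside the resulting Gaussian-exponential integrals produces exponential prefactors multiplied by shifted Gaussian CDFs $\Phi_{\theta\pm\varepsilon\sigma^2}(t(l_n))$. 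For $l_n\to\infty$ the Gaussian tails collapse super-exponentially (like $e^{-t^2/(2\sigma^2)}$) while the prefactors decay only geometrically (like $e^{\pm\varepsilon t}$), leaving the sharp asymptotic $q_\theta(l_n) = \tfrac{1}{2}\exp\bigl(-\tfrac{\varepsilon\sigma^2 l_n}{2} + \tfrac{\varepsilon^2\sigma^2}{2} - \varepsilon\theta\bigr)(1+o(1))$. In particular $q_-/q_+ \to e^{2\varepsilon}$, exposing the asymmetric amplification responsible for the speed-up.

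Next I would change variables to $Z_n := e^{\varepsilon\sigma^2 l_n/2}$, which converts the exponentially-decaying drift of $l_n$ into an approximately constant drift of $Z_n$. The asymptotic for $q_\theta$ gives $Z_n q_\theta(l_n) \to \tfrac{1}{2} e^{\varepsilon^2\sigma^2/2 - \varepsilon\theta}$, a constant in $n$. Combining this with the first-order expansion $Z_{n+1}-Z_n = Z_n\bigl(e^{\varepsilon\sigma^2\Delta_n/2}-1\bigr)$ and the two possible values of $\Delta_n$ (namely $(q_--q_+)(1+o(1))$ on $\{x_n=+1\}$ with probability $1-q_+$, and $-2\varepsilon$ on $\{x_n=-1\}$ with probability $q_+$), the typical event's contribution reduces algebraically to $\tfrac{\varepsilon\sigma^2}{2}Z_n(q_--q_+) = \tfrac{\varepsilon\sigma^2}{4}(e^{\varepsilon+\varepsilon^2\sigma^2/2}-e^{-\varepsilon+\varepsilon^2\sigma^2/2}) = C(\varepsilon)$. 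Hence $\mathbb{E}[Z_{n+1}-Z_n\mid Z_n,\theta=+1]\to C(\varepsilon)$, and the continuous-time counterpart $dZ/dn=C(\varepsilon)$ integrates to $Z_n\sim C(\varepsilon)n$, i.e.\ $l_n\sim \tfrac{2}{\varepsilon\sigma^2}\log(C(\varepsilon)n) = f(n)$.

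The main obstacle is upgrading this mean-field heuristic to the almost-sure convergence demanded by the definition of the convergence rate. A naive concentration argument on $Z_n$ is delicate because the rare downward jumps of size $\approx Z_n(1-e^{-\varepsilon^2\sigma^2})$ that occur with probability $q_+\sim 1/Z_n$ generate conditional variance of order $Z_n$, so the mean and the fluctuations of $Z_n$ are on the same linear scale and $Z_n/n$ itself need not concentrate. Fortunately only the \emph{logarithmic} rate is required for $l_n/f(n)\to 1$, and this is robust to $O(1)$ multiplicative noise in $Z_n$. I would therefore establish two-sided polynomial envelopes $c_1 n \leq Z_n \leq c_2 n^{1+\delta}$ almost surely for every $\delta>0$ and $n$ large, using the positive drift together with a Doob/optional-stopping argument for the lower bound, and a Markov-inequality estimate on $\mathbb{E}[Z_n^p]$ combined with Borel--Cantelli for the upper bound; entry into the asymptotic regime is supplied by the already-established asymptotic-learning theorem, which guarantees $l_n\to\infty$ almost surely under $\theta=+1$. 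Taking logarithms collapses the envelopes to $\log Z_n = \log n + o(\log n)$ almost surely, which is precisely $l_n = f(n)(1+o(1))$ and completes the proof.
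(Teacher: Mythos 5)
Your proposal reaches the same asymptotics for the flip probabilities $q_\theta(l_n)\approx\tfrac12 e^{\varepsilon^2\sigma^2/2-\varepsilon\theta}e^{-\varepsilon\sigma^2 l_n/2}$ as the paper's Lemma~\ref{lem:simplify_diff_equation}, but from there you take a genuinely different route. The paper conditions on the almost-sure event that all actions are eventually $+1$ (supplied by asymptotic learning), so that $l_{n+1}=l_n+D_+(l_n)$ becomes a \emph{deterministic} recursion for large $n$, and then invokes the discrete-to-ODE comparison lemmas of \citet{Hann-Caruthers2018speed} (Lemmas~\ref{lem: A/B=1} and~\ref{lem:discrete and continuous}) to match $l_n$ with the solution of $f'=G_-(-f)$. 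You instead keep the recursion stochastic, pass to $Z_n=e^{\varepsilon\sigma^2 l_n/2}$, and argue via drift plus polynomial envelopes. Your route is more candid about the randomness that the paper's proof suppresses, and it buys robustness: since only $\log Z_n\sim\log n$ is needed, $O(1)$ multiplicative fluctuations in $Z_n$ are harmless, whereas the paper's reduction hinges on the unproved-in-place assertion that incorrect actions occur only finitely often.

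Two concrete problems, however. First, your drift identity $\mathbb{E}[Z_{n+1}-Z_n\mid Z_n,\theta=+1]\to C(\varepsilon)$ is wrong: you account for the rare event $\{x_n=-1\}$ in the conditional variance but drop it from the conditional mean, even though a jump of size $-Z_n\left(1-e^{-\varepsilon^2\sigma^2}\right)$ occurring with probability $q_+\sim \tfrac12 e^{-\varepsilon+\varepsilon^2\sigma^2/2}/Z_n$ contributes the nonvanishing constant $-\tfrac12 e^{-\varepsilon+\varepsilon^2\sigma^2/2}\left(1-e^{-\varepsilon^2\sigma^2}\right)$ to the drift (for small $\varepsilon$ this cancels $C(\varepsilon)$ at leading order). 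The constant $C(\varepsilon)$ is the per-step increment \emph{on the all-$+1$ path}, which is how the paper obtains it, not the mean drift of $Z_n$; your heuristic $Z_n\sim C(\varepsilon)n$ is therefore not justified as stated, although the weaker $\log Z_n\sim\log n$ that you actually use is unaffected because the constant inside the logarithm washes out. Second, the lower envelope $Z_n\ge c_1 n$ a.s.\ is likely unattainable: each error multiplies $Z_n$ by $e^{-\varepsilon^2\sigma^2}$ and recovery from such a collapse takes $\Theta(n)$ further steps, so $Z_n/n$ behaves like an ergodic growth-collapse process whose fluctuations take it below any fixed $c_1$ infinitely often. You need (and it suffices to prove) only a polynomial envelope of the form $n^{1-\delta}\le Z_n\le n^{1+\delta}$ eventually a.s.\ for every $\delta>0$; as written, the optional-stopping step aimed at $c_1 n$ would fail. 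With those two repairs the plan is viable, but neither the drift constant nor the envelopes are currently established.
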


The $\Theta_{\varepsilon}(\log(n))$ asymptotic learning rate in \Cref{thm:smoothly-learning-speed} represents an improvement by an order of $\sqrt{\log(n)}$ over the $\Theta(\sqrt{\log(n)})$ asymptotic learning speed established by \citet{Hann-Caruthers2018speed} in the non-private regime. To better understand this improvement, assume that $\theta = +1$. For Gaussian signals, asymptotic learning is guaranteed, which implies that over time, more agents will take action $+1$ than $-1$. To protect their private signals, agents use the smooth randomized response to add noise to their actions. Importantly, this noise is asymmetric: In the true state, where signals are more aligned with the majority, agents require less distortion to maintain privacy; under the false state, where signals are more likely to contradict the majority, agents need to introduce more noise. This asymmetry arises because privacy protection requires obscuring how surprising a signal is --- more surprising signals (those against the majority) require more distortion. As a result, actions under the correct state remain more stable, while those under the wrong state are noisier. This amplifies the difference between the two states in terms of observed behavior and makes actions more informative for later agents, thereby accelerating learning.
\begin{figure}[htbp]
    \centering
    \includegraphics[width=\textwidth]{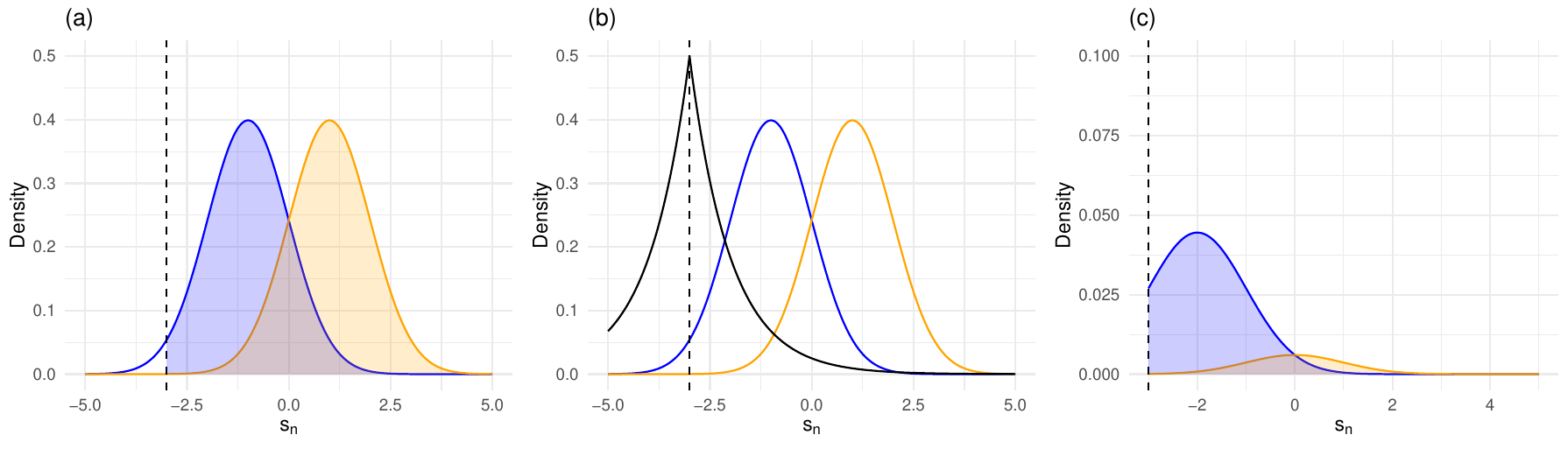}
    \caption{Smooth randomized response and the signal likelihoods.
(a) shows the signal distribution under two different states ($\theta = -1$ in blue and $\theta = +1$ in orange), with the shaded area representing the probability of choosing $a_{n} = +1$ in each case. The vertical dotted line indicates the threshold location. (b) shows the smooth randomized response function as it varies with $s_{n}$ (black). (c) shows the probability of action changing from $+1$ to $-1$ under the influence of smooth randomized response, represented by the shaded area.}
    \label{fig:noise_and_two_normal}
\end{figure}

This intuition is further illustrated in \Cref{fig:noise_and_two_normal}. Although the smooth randomized response reduces the probability of taking action $+1$ in both states, the reduction is more pronounced when $\theta = -1$, due to the shape of the signal distribution and its distance from the decision threshold. Specifically, the inferred flipping probability, i.e., the probability that a correct action is flipped to incorrect, as inferred by the next agent, is higher in the wrong state than in the true state. Because agent \( n+1 \) does not observe \( s_n \), it must infer how agent \( n \) might have acted on different possible signals using the smooth randomized response. This leads to the log-likelihood update:
\begin{equation}
    l_{n+1} = l_n + \log \frac{\mathbb{P}(x_n = +1 \mid l_n, \theta = +1)}{\mathbb{P}(x_n = +1 \mid l_n, \theta = -1)}.
    \label{eq:log_likelihood_update}
\end{equation}

Both terms in the ratio decrease due to noise, but the denominator decreases more sharply. This leads to a growing log-likelihood ratio over time, making the signal in actions stronger and facilitating faster convergence.

In our framework, the proposed smooth randomized response strategy is in fact optimal in the sense that it achieves the largest asymmetry between the two states under $\varepsilon$-metric differential privacy constraints. Specifically, to guarantee $\varepsilon$-metic differential privacy, the ratio of flip probabilities for any pair of adjacent signals is bounded above by $e^\varepsilon$. The smooth randomized response strategy is constructed so that this ratio, which represents the rate at which the flipping probability decreases as the signal moves away from the decision threshold, exactly reaches the bound $e^\varepsilon$. This ensures that the strategy attains the maximal asymmetry in flipping behavior between the true and false states. Such asymmetry amplifies the difference in how actions are interpreted across states and allows actions to convey the strongest possible support for the true state under privacy constraints. As a result, the smooth randomized response not only satisfies differential privacy but also achieves the most efficient information aggregation enabled by this maximal asymmetry.

It is worth mentioning that the strategy we propose differs from existing studies. There are two distinct ways to accelerate learning speed. One approach is to modify the tail properties of the signal distribution. \citet{rosenberg2019efficiency} and \citet{Hann-Caruthers2018speed} observe that Gaussian signals lead to slow learning speeds due to their thin tails, which reduce the probability of reversing an incorrect trend. This occurs because the likelihood of receiving a strong enough private signal to counteract an incorrect cascade is very low. In contrast, heavy-tailed distributions allow for rare but highly informative signals that can quickly correct mistaken beliefs. Another way to accelerate learning speed is to increase the weight of private signals. \citet{arieli2025hazards} show that overconfidence accelerates learning, as agents place greater reliance on their private signals rather than following public consensus. However, these approaches either depend on fixed signal properties or introduce systematic biases. In our paper, we leverage the asymmetry of the smooth randomized response, which selectively enhances the informativeness of correct actions while maintaining privacy constraints. Unlike a standard randomized response, which introduces noise indiscriminately, our strategy preserves useful information in decision making, leading to more efficient information aggregation and improved learning speed.

Having established that the convergence rate of public beliefs can be significantly improved under privacy-preserving strategy, we now turn our attention to the other two measures of learning efficiency: the expected time to the first correct action and the expected number of incorrect actions. While convergence rate captures long-run information aggregation, these two measures are particularly important for assessing how quickly correct behavior emerges and how many errors accumulate along the way. The following theorem shows that, under our proposed privacy-aware strategy, the expected stopping time for the first correct action may remain finite. This result confirms that learning under privacy constraints is not only asymptotically accurate but also timely in practice.

\begin{theorem}[Learning Efficiency: Finite Expected Time to First Correct Action]\label{thm:finite-expectation}
Consider a fixed privacy budget and suppose that agents follow a smooth randomized response strategy under Gaussian signals in a sequential learning setting. Then, the expected time \( \mathbb{E}[\tau] \) until the first correct action satisfies
\begin{align}
    \mathbb{E}[\tau] = C_{1}C(\varepsilon)^{-\frac{2}{\varepsilon\sigma^2}} \sum_{n=1}^{\infty}n^{-\frac{2}{\varepsilon\sigma^2}}, \label{eq:expected-time}
\end{align} where \(C_{1}\) is a positive constant that does not depend on $\varepsilon$. The series in \cref{eq:expected-time} converges if, and only if, \( \varepsilon <\frac{2}{\sigma^2} \); and thus, the expected time to the first correct action is finite $(\mathbb{E}[\tau] < +\infty)$, if, and only if, \( \varepsilon <\frac{2}{\sigma^2} \). 
\end{theorem}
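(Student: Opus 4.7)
The plan is to write $\mathbb{E}[\tau] = \sum_{n \ge 0} \mathbb{P}(\tau > n)$ and bound each tail probability by the Bayesian posterior of the true state given an ``all-wrong'' history, reducing the finiteness question to a $p$-series criterion. By the symmetry of the signal model and the smooth randomized response under $\theta \leftrightarrow -\theta$ and $x \leftrightarrow -x$, it suffices to work conditional on $\theta = +1$, in which case $\{\tau > n\} = \{x_1 = \cdots = x_n = -1\}$.

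First I would note that, because the public log-likelihood ratio $l_k$ depends only on $x_1, \ldots, x_{k-1}$, conditioning on the first $n$ actions all being $-1$ pins $l_k$ to a deterministic trajectory $L(k)$ with $L(1) = 0$ and $L(k+1) - L(k) = \log[\mathbb{P}(x = -1 \mid \theta = +1, L(k))/\mathbb{P}(x = -1 \mid \theta = -1, L(k))]$. Plugging the closed-form smooth-randomized-response probabilities into this update and expanding as $L(k) \to -\infty$ (so $t(L(k)) = -\sigma^2 L(k)/2 \to +\infty$) yields the discrete analogue of the ODE that drives Theorem~\ref{thm:smoothly-learning-speed}:
\begin{equation*}
L(k+1) - L(k) \;=\; -\frac{2 C(\varepsilon)}{\varepsilon \sigma^2}\,e^{\varepsilon \sigma^2 L(k)/2}\,(1 + o(1)).
\end{equation*}
Solving via the substitution $u = e^{-\varepsilon \sigma^2 L/2}$ (for which $u(k+1) - u(k) \to C(\varepsilon)$) gives the mirror image of the typical trajectory, $L(n) = -\frac{2}{\varepsilon \sigma^2} \log(C(\varepsilon) n) + o(1)$.

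Next I would use Bayes' rule to obtain a sharp tail bound. Writing
\begin{equation*}
\mathbb{P}(\tau > n,\, \theta = +1) \;=\; \mathbb{P}(\theta = +1 \mid x_1 = \cdots = x_n = -1)\cdot\mathbb{P}(x_1 = \cdots = x_n = -1) \;\le\; \frac{e^{L(n)}}{1 + e^{L(n)}} \;\le\; e^{L(n)},
\end{equation*}
dividing by $\mathbb{P}(\theta = +1) = 1/2$, and substituting the asymptotic form of $L(n)$ gives $\mathbb{P}(\tau > n \mid \theta = +1) \le C_0\,(C(\varepsilon) n)^{-2/(\varepsilon \sigma^2)}$ for some absolute constant $C_0$. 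Summing over $n$ produces $\mathbb{E}[\tau] \le C_1\, C(\varepsilon)^{-2/(\varepsilon\sigma^2)} \sum_{n \ge 1} n^{-2/(\varepsilon\sigma^2)}$ with $C_1$ independent of $\varepsilon$, and the $p$-series criterion yields the ``if'' direction: $\mathbb{E}[\tau] < \infty$ whenever $2/(\varepsilon\sigma^2) > 1$, i.e., $\varepsilon < 2/\sigma^2$.

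The main obstacle is the matching lower bound needed for the ``only if'' direction and for the stated equality rather than just inequality. For that I would return to the exact product representation $\mathbb{P}(\tau > n \mid \theta = +1) = \prod_{k=1}^n \mathbb{P}(x_k = -1 \mid \theta = +1, L(k))$, Taylor-expand $\log(1-p) = -p + O(p^2)$ on each factor using the closed-form expression $\mathbb{P}(x_k = +1 \mid \theta = +1, L(k)) = \frac{1}{2}e^{\varepsilon \sigma^2 L(k)/2} e^{\varepsilon + \varepsilon^2 \sigma^2/2}(1+o(1))$, and argue that the telescoped error contributes only an $O(1)$ multiplicative constant from the finite transient where the ODE approximation is inaccurate. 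The delicate step is controlling the discrete-to-continuous approximation to $L(k)$ uniformly from small $k$ so that this $O(1)$ prefactor is indeed $\varepsilon$-independent; the remainder is routine $p$-series arithmetic combined with the symmetric trajectory characterization.
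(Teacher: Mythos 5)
Your proposal follows the same skeleton as the paper's proof: write $\mathbb{E}[\tau]$ as the tail sum $\sum_{n}\mathbb{P}(\tau>n)$, observe that $\{\tau>n\}$ corresponds to the all-incorrect history so that the public log-likelihood ratio follows a deterministic trajectory, import the asymptotics $|L(n)|=\tfrac{2}{\varepsilon\sigma^2}\log(C(\varepsilon)n)+o(1)$ from the learning-rate analysis, pass from the trajectory to the tail probability via the Bayes factorization, and finish with the $p$-series criterion. Your upper bound $\mathbb{P}(\tau>n,\theta=+1)\le\mathbb{P}(\theta=+1\mid x_1=\cdots=x_n=-1)$ is precisely the paper's inequality $u_n\le 2(1-\pi^{*}_{n+1})$ and delivers the ``if'' direction. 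Where you depart from the paper is the matching lower bound: the paper obtains $u_n\ge c\,(1-\pi^{*}_{n+1})$ by invoking Lemma~\ref{lem: prob_all_minus} (an optional-stopping argument showing the probability of never deviating from the correct action is bounded below uniformly in the prior), which lets it treat the product $\prod_{k\le n}\mathbb{P}(x_k=\text{correct}\mid l_k,\theta=\text{true})$ as a $\Theta(1)$ factor; you instead propose to Taylor-expand the product of per-step probabilities directly.

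That substitution is where your argument has a genuine gap. Along the all-$(-1)$ trajectory one has $e^{\varepsilon\sigma^2 L(k)/2}\sim(C(\varepsilon)k)^{-1}$, so the per-step deviation probability $p_k=\mathbb{P}(x_k=+1\mid L(k),\theta=+1)\sim\tfrac12 e^{\varepsilon+\varepsilon^2\sigma^2/2}(C(\varepsilon)k)^{-1}$ is of order $1/k$. Consequently $\sum_{k\le n}p_k$ diverges logarithmically and $\prod_{k\le n}(1-p_k)=\exp\bigl(-\sum_{k\le n}p_k+O(1)\bigr)$ is itself a power of $n$ --- it is emphatically not ``an $O(1)$ multiplicative constant from the finite transient.'' Carried out literally, your expansion yields an exponent $\tfrac{2}{\varepsilon\sigma^2}\cdot\tfrac{1}{1-e^{-2\varepsilon}}$ rather than the $\tfrac{2}{\varepsilon\sigma^2}$ appearing in the statement; the exact identity $\prod_{k\le n}(1-p_k)=e^{L(n+1)}\prod_{k\le n}(1-q_k)$, with $q_k=\mathbb{P}(x_k=+1\mid L(k),\theta=-1)$, shows that the discrepancy is exactly the decay rate of the companion product $\prod_{k\le n}(1-q_k)$. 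To recover the theorem's exponent you must prove that this companion product stays bounded away from zero (the content of Lemma~\ref{lem: prob_all_minus}) or otherwise control it; since $q_k$ is also of order $1/k$ along the trajectory, this step is delicate and cannot be waved through as bookkeeping. As written, your proposal establishes the ``if'' direction but not the ``only if'' direction nor the claimed equality.
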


To provide insights into \Cref{thm:finite-expectation}, we outline the proof idea here with detailed mathematical derivations provided in \Cref{app:proof:thm:finite-expectation}. We first define the \textit{public belief} of agent \( n \), denoted by \( \pi_n \), as
\[
\pi_n = \mathbb{P}(\theta = +1 \mid x_1, x_2, \dots, x_{n-1}),
\]
which evolves over time based on the sequence of previous decisions \( x_1, x_2, \dots, x_{n-1} \).

We also define a special case \( \pi_n^* \), representing the public belief of agent \( n \) \textit{under full consensus}, i.e., when all previous agents have reported \( +1 \):
\[
\pi_n^* = \mathbb{P}(\theta = +1 \mid x_1 = x_2 = \dots = x_{n-1} = +1).
\] 
In particular, the public belief in the next round, \( \pi_{n+1}^* \), is derived from \( \pi_n^*  \) using the relationship given by \Cref{eq:log_likelihood_update}:

\begin{equation}\label{eq:bayes update}
  \frac{\pi_{n+1}^* }{1 - \pi_{n+1}^* } = \frac{\pi_n^* }{1 - \pi_n^* } \times \frac{\mathbb{P}(x_n = +1 | l_n, \theta = +1)}{\mathbb{P}(x_n = +1 | l_n, \theta = -1)}.
\end{equation}

This equation describes the Bayesian update of the public belief as it transitions between rounds in the context of learning under privacy constraints.
 Assume \( \theta = -1 \). The probability \( u_n := P_{-}(\tau > n) \), representing the event that the first correct guess does not occur earlier than in round \( n + 1 \), is given by:

\begin{equation}\label{eq:first correct}
  u_n = P_{-}(\tau > n) = P_{-}(x_1 = \cdots = x_n = +1) = \prod_{k=1}^{n}\mathbb{P}(x_k = +1 | l_k, \theta = -1), 
\end{equation}
where \( \mathbb{P}(x_k = +1 | l_k, \theta = -1) \) represents the probability that the \( k \)-th guess is incorrect under the smooth randomized response strategy.

Since \( \mathbb{E}_{-}(\tau) = \sum_{n=1}^{\infty} u_n \), the key question becomes whether the series \( \sum_{n=1}^{\infty} u_n \) converges. This convergence guarantees that the expected stopping time \( \mathbb{E}_{-}(\tau) \) is finite, ensuring efficient learning. Based on \Cref{eq:log_likelihood_update,eq:bayes update,eq:first correct}, we have

\begin{equation}\label{eq: convergence of u_n}
    u_n = \frac{1-\pi_{n+1}^*}{\pi_{n+1}^*} \prod_{k=1}^{n} \mathbb{P}(x_k = +1 | l_k, \theta = +1).
\end{equation}

A martingale argument shows that $\mathbb{P}(x_k = +1 | l_k, \theta = +1)$ is positive and $\pi_{n+1}^*\geq \frac{1}{2}$; therefore, we have $u_n \sim 1 - \pi_{n+1}^* $. Since $l_n = \log\left(\frac{\pi_n^*}{1 - \pi_n^*}\right)$, we have $1 - \pi_n^* \sim e^{-l_n}$, where $l_n = \frac{2}{\varepsilon\sigma^2} \ln(C(\varepsilon)n)+c$ where $c$ is a constant.
Thus, the expected time \( \mathbb{E}[\tau] \) until the first correct action satisfies
\[
\mathbb{E}[\tau]= \sum_{n=1}^{\infty} u_n = C_{1}C(\varepsilon)^{-\frac{2}{\varepsilon\sigma^2}} \sum_{n=1}^{\infty}n^{-\frac{2}{\varepsilon\sigma^2}},
\]
for some constant \( C(\varepsilon) \) depending on \( \varepsilon \). Since \( \varepsilon <\frac{2}{\sigma^2} \), the series converges and thus the expected time until the first correct action is finite.

We show that under this measure of learning efficiency, the expected time for the first correct guess is finite when the privacy budget is limited. Interestingly, in the non-privacy setting, where the privacy budget is infinite, learning becomes inefficient, as the expected time for the first correct guess is infinity. This counterintuitive finding further supports our earlier conclusion that improving privacy protection can align with an improvement in learning speed. In fact, our results suggest that tighter privacy concerns do not necessarily lead to slower learning. The smooth randomized strategy internalizes the trade-off between revealing information and preserving privacy, leading to improved learning efficiency without violating privacy constraints. This insight challenges the traditional view of privacy and learning as conflicting objectives, and instead reveals how the two can be jointly optimized through endogenous adaptation.

In addition to the stopping time, we further show that the total number of incorrect actions may also have finite expectation under privacy concerns.

\begin{theorem}[Learning Efficiency: Finite Expected Total Number of Incorrect Actions]\label{thm:finite-incorrect-actions}
Consider a fixed privacy budget and suppose that agents follow a smooth randomized response strategy under Gaussian signals in a sequential learning setting. Then, the expected total number of incorrect actions \( \mathbb{E}[W] \) satisfies
\begin{align}
    \mathbb{E}[W] = C_{2}C(\varepsilon)^{-\frac{2}{\varepsilon\sigma^2}} \sum_{n=1}^{\infty}n^{-\frac{2}{\varepsilon\sigma^2}}, \label{eq:expected-total-incorrect}
\end{align} where \(C_{2}\) is a positive constant that does not depend on $\varepsilon$. The series in \cref{eq:expected-total-incorrect} converges if, and only if, \( \varepsilon <\frac{2}{\sigma^2} \); and thus, the expected total number of incorrect actions is finite, $\mathbb{E}[W] < +\infty$, if, and only if, \( \varepsilon <\frac{2}{\sigma^2} \).

\end{theorem}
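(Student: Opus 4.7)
The plan is to mirror the proof strategy of the preceding theorem on $\mathbb{E}[\tau]$, exploiting the Tonelli identity
\[
\mathbb{E}[W] \;=\; \sum_{n=1}^\infty \mathbb{P}(x_n \neq \theta) \;=\; \sum_{n=1}^\infty \mathbb{P}_{-}(x_n = +1),
\]
where the last equality uses the symmetry of the model under equal priors (so conditioning on $\theta=-1$ loses no generality up to the constant $C_2$). The goal is then to show that each summand is of order $n^{-2/(\varepsilon\sigma^2)}$ with a constant that matches $C_2\,C(\varepsilon)^{-2/(\varepsilon\sigma^2)}$, so that the summability condition $\varepsilon<2/\sigma^2$ emerges from the $p$-series test exactly as in the stopping-time result.

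First I would condition on the public belief $l_n$ and write
\[
\mathbb{P}_{-}(x_n = +1) \;=\; \mathbb{E}_{-}\bigl[\mathbb{P}(x_n = +1 \mid l_n,\theta=-1)\bigr].
\]
Using \Cref{def:SRR}, the inner conditional probability splits into an ``initial action is wrong and not flipped'' piece and a ``correct initial action gets flipped'' piece; both pieces are integrals against $\phi_{-1}(s)$ and both decay as $l_n\to-\infty$, at rates set by the Gaussian tail and by the smooth-response exponential factor $e^{-\varepsilon|s-t(l_n)|}$ respectively. Next I would apply the same Bayesian change of measure underpinning the proof of the stopping-time theorem: for any history $h_{n-1}$,
\[
\mathbb{P}(h_{n-1}\mid \theta=-1) \;=\; \frac{1-\pi_n(h_{n-1})}{\pi_n(h_{n-1})}\,\mathbb{P}(h_{n-1}\mid \theta=+1).
\]
Transferring the averaging to the $\theta=+1$ law, the leading contribution comes from the straight path $x_1=\cdots=x_{n-1}=+1$, along which the public log-likelihood grows as $l_n^{*}\sim \tfrac{2}{\varepsilon\sigma^2}\log(C(\varepsilon)n)$ by \Cref{thm:smoothly-learning-speed}. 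This yields a leading term of order $1-\pi_n^{*}\sim e^{-l_n^{*}}\asymp n^{-2/(\varepsilon\sigma^2)}$, exactly as in the $\mathbb{E}[\tau]$ analysis.

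The main obstacle will be controlling the off-straight-path contributions. Each $-1$ in the history pushes $l_n$ further negative and, heuristically, only reduces $\mathbb{P}(x_n=+1\mid l_n,\theta=-1)$, but one must show that the aggregate off-path mass does not inflate the rate above $n^{-2/(\varepsilon\sigma^2)}$. My plan is to exploit the nonnegative $\mathbb{P}_{-}$-martingale $e^{l_n}$ together with a pointwise bound of the form $\mathbb{P}(x_n=+1\mid l_n,\theta=-1)\le K\,e^{l_n}$ for $l_n$ bounded above (obtained by combining Gaussian tail estimates with the exponential flip profile), so that the off-path sum is dominated by $\mathbb{E}_{-}[e^{l_n}\mathbf{1}\{l_n\le L\}]\le 1$ times a geometric factor, producing a residual of the same polynomial order as the leading term and absorbable into $C_2$. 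Together with the matching lower bound $\mathbb{E}[W]\ge \mathbb{E}[\tau]-1$ from the preceding theorem, this delivers \Cref{eq:expected-total-incorrect}, after which the $p$-series criterion gives finiteness if and only if $\varepsilon<2/\sigma^2$.
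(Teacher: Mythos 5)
Your overall plan (Tonelli, then show $\sum_n \mathbb{P}_{-}(x_n=+1)\asymp \sum_n n^{-2/(\varepsilon\sigma^2)}$) is a legitimate alternative in principle, but the step you yourself flag as the main obstacle --- controlling the off-straight-path mass --- is where the argument breaks, and the specific tool you propose does not work. First, the pointwise bound $\mathbb{P}(x_n=+1\mid l_n,\theta=-1)\le K e^{l_n}$ is false in exactly the regime the theorem concerns. From the tail computation in \Cref{lem:simplify_diff_equation} (applied with the roles of the states reversed), the flip term dominates as $l_n\to-\infty$ and gives $\mathbb{P}(x_n=+1\mid l_n,\theta=-1)=\Theta\bigl(e^{\varepsilon\sigma^2 l_n/2}\bigr)$; the ratio $e^{\varepsilon\sigma^2 l_n/2}/e^{l_n}=e^{(\varepsilon\sigma^2/2-1)l_n}$ diverges as $l_n\to-\infty$ precisely when $\varepsilon<2/\sigma^2$. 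Second, even after correcting the bound to $K\,(e^{l_n})^{\varepsilon\sigma^2/2}$, the martingale property of $e^{l_n}$ under $\mathbb{P}_{-}$ combined with Jensen (concavity of $x\mapsto x^{\beta}$ for $\beta=\varepsilon\sigma^2/2<1$) only yields $\mathbb{E}_{-}\bigl[(e^{l_n})^{\beta}\bigr]\le 1$, i.e.\ an $O(1)$ per-term estimate whose sum diverges. The ``geometric factor'' that is supposed to restore summability is asserted but never derived; producing the required $n^{-2/(\varepsilon\sigma^2)}$ decay of $\mathbb{E}_{-}[\mathbb{P}(x_n=+1\mid l_n,\theta=-1)]$ would require distributional control of $l_n$ over all histories, which is essentially the content of the theorem itself.

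The paper avoids this entirely by a different decomposition: it splits the action sequence into good and bad runs, shows via \Cref{lem: tau_with_prior} and optional stopping that each bad run, conditional on starting, has expected length at most $\frac{\pi_{\tau_{k-1}}}{1-\pi_{\tau_{k-1}}}\mathbb{E}[\tau]\le\mathbb{E}[\tau]$, and then uses \Cref{lem: prob_all_minus} to show that the probability that a $k$th bad run ever starts decays geometrically in $k$. This yields $\mathbb{E}[\tau]\le\mathbb{E}_{-}[W]\le C_3\,\mathbb{E}[\tau]$ and reduces the claim to \Cref{thm:finite-expectation}. The geometric decay there is over the \emph{index of the excursion} $k$, not over the agent index $n$, which is why it succeeds where your termwise bound stalls. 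If you want to salvage your route, you would need to prove a genuine upper bound of the form $\mathbb{P}_{-}(x_n=+1)\le C n^{-2/(\varepsilon\sigma^2)}$, and the most natural way to do that is to pass through the excursion decomposition anyway.
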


To provide insights into \Cref{thm:finite-incorrect-actions}, we outline the proof idea here, with all the details available in \Cref{app:proof:thm:finite-incorrect-actions}. First, we divide the action sequence \( x_n \) into two parts: good runs and bad runs. A \textit{good run} (respectively, a \textit{bad run}) is a maximal sequence of consecutive actions that align (or fail to align) with the true state.  

We then show that the duration of bad runs can be bounded by a geometric series, ensuring that the total sum remains finite. Specifically, we first prove that the conditional expected duration  
\[
\mathbb{E}_{-}[\Delta_k \mathbf{1}_{\sigma_k < +\infty} | \mathcal{H}_{\tau_{k-1}}] \leq \frac{\pi_{\tau_{k-1}}}{1-\pi_{\tau_{k-1}}}\mathbb{E}[\tau] \leq \mathbb{E}[\tau]
\] 
where \( \tau_{k-1} \) is the public belief at the beginning of the \((k-1)\)th good run, and \( \mathcal{H}_{\tau_{k-1}} \) represents the corresponding \( \sigma \)-algebra. This follows from Theorem \ref{thm:finite-expectation}, which establishes that the expected time until the first correct action for the limited privacy budget is finite. Therefore, using this result, we further extend our proof to bound the unconditional expectation $\mathbb{E}_{-}[\Delta_k \mathbf{1}_{\sigma_k < +\infty}]$. We establish that this expectation follows an upper-bound recurrence relationship, where as \( k \) increases, the expected duration of bad runs decreases. This occurs because the probability of a new bad run starting diminishes with increasing \( k \), ensuring that the total number of incorrect actions remains finite. 
Under a more stringent measure of learning efficiency, we find that learning in a privacy-preserving setting remains efficient as long as the privacy budget is limited. This result is particularly counterintuitive, as conventional wisdom suggests that the introduction of privacy-aware strategies inherently reduces the quality of information transmission. However, our findings reveal that learning can actually become more efficient due to improvements in the information aggregation process.

% Specifically, under the smooth randomized response strategy, the information gained from each observed action increases because the correct actions become more informative. The structured noise introduced by privacy-preserving strategies amplifies the informativeness of corrections due to the asymmetry of the strategy. This asymmetry strengthens the public log-likelihood ratio in the decision-making process, allowing agents to extract more reliable information from past actions and accelerating the learning rate. Thus, rather than hindering the flow of information, a controlled level of privacy randomness can paradoxically enhance the efficiency of collective learning by reshaping the information landscape in a way that promotes more effective decision making.

However, this efficiency gain is not without limitations. In particular, if the asymmetry becomes too extreme, such as when the privacy budget \( \varepsilon \) becomes very small, the benefit may diminish. As \( \varepsilon \to 0 \), the scaling term \( C(\varepsilon)^{-\frac{2}{\varepsilon\sigma^2}} \) diverges, potentially offsetting the gains in informativeness and leading to a deterioration in learning efficiency. Therefore, our theoretical guarantees assume that the ratio \( \varepsilon \) is kept constant and bounded away from zero, ensuring that the overall effect of the randomized response remains beneficial.

When \( \varepsilon <\frac{2}{\sigma^2} \), both the expected time to the first correct action and the expected total number of incorrect actions are finite. This raises a natural question: what is the optimal privacy budget \( \varepsilon^* \) that minimizes these two expectations? Both quantities are determined by the expression \( C(\varepsilon)^{-\frac{2}{\varepsilon\sigma^2}} \sum_{n=1}^{\infty} n^{-\frac{2}{\varepsilon\sigma^2}} \), and our goal is to find the value of \( \varepsilon^* \) that minimizes this term. However, due to the involvement of the Riemann zeta function in the infinite sum, obtaining a closed-form solution is infeasible. Therefore, we set $\sigma=\sqrt{2}$ so that $\tfrac{2}{\sigma^2}=1$, which normalizes the range of $\varepsilon$ to $(0,1)$. Accordingly, in \cref{fig:optimal-epsilon} we numerically obtain the optimal privacy budget $\varepsilon^* \approx 0.76$.
 While \(\varepsilon\) is endogenously chosen by agents in equilibrium, this calculation serves as a normative benchmark for a planner or platform that can influence privacy choices through incentives or design. Such a benchmark helps quantify the potential welfare loss from purely self-interested privacy decisions and highlights the scope for policy interventions that align individual incentives with socially optimal information aggregation, echoing broader discussions on principled \(\varepsilon\)-selection in both practice \citep{dwork2019differential} and economics \citep{hsu2014differential}.
\vspace{-20pt}
\begin{figure}[htbp]
    \centering
     \includegraphics[width=0.5\textwidth]{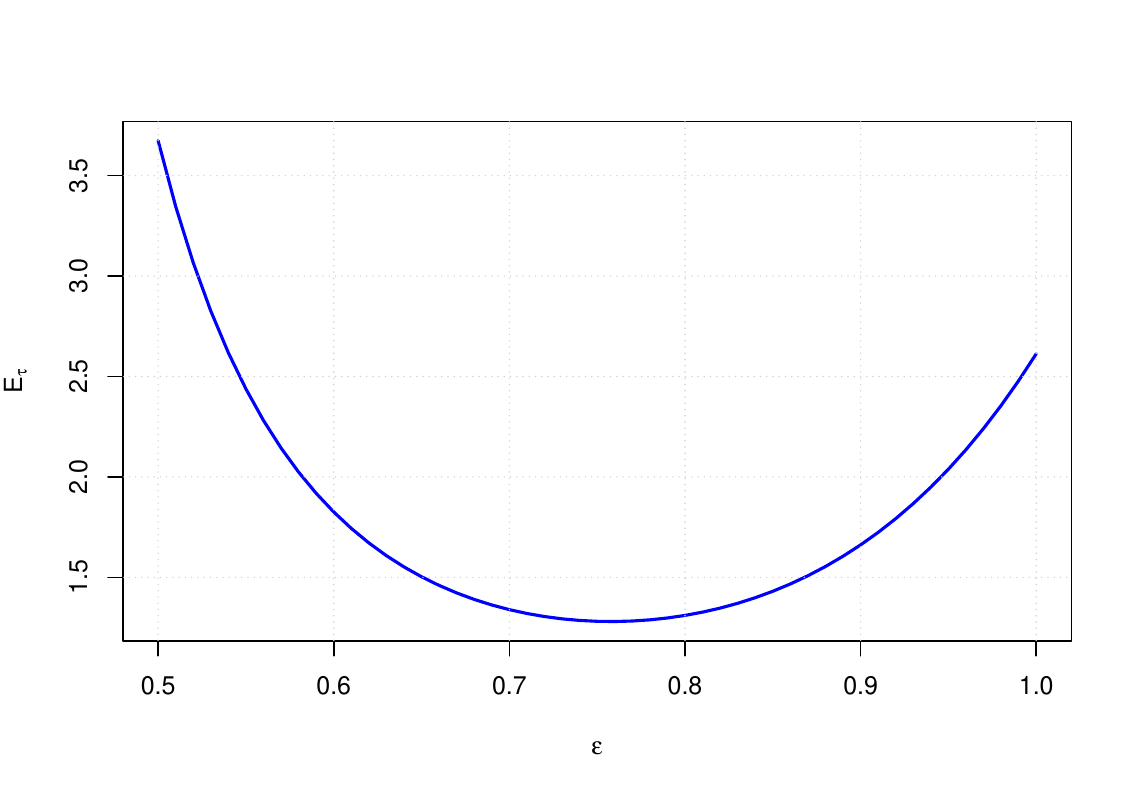}
    \caption{The value of \( C(\varepsilon)^{-\frac{2}{\varepsilon\sigma^2}} \sum_{n=1}^{\infty} n^{-\frac{2}{\varepsilon\sigma^2}} \) as a function of the privacy budget \( \varepsilon \) ($\sigma=\sqrt{2}$).}
    \label{fig:optimal-epsilon}
\end{figure}
\vspace{-20pt}

%As shown in Figure~\ref{fig:optimal-epsilon}, the numerical results suggest the existence of an optimal privacy budget \( \varepsilon^* \) that minimizes the overall expected cost. Numerically, we find that the optimal privacy budget is approximately \( \varepsilon^* \approx 0.76 \).

\subsection*{The Continuous Signals Model with Heterogeneous Privacy Budgets}\label{sec:heterogeneous-privacy-continuous}

In the homogeneous setting, we have already characterized the optimal privacy budget \(\varepsilon^*\) for minimizing both the expected time to the first correct action and the expected total number of incorrect actions, assuming a fixed and positive privacy budget \(\varepsilon\). However, the optimal convergence rate of learning in this setting remains unknown when $\varepsilon$ approaches zero. Interestingly, as \(\varepsilon\) approaches zero, the speed of learning can increase. To investigate this phenomenon, we consider a heterogeneous setting in which each agent's privacy budget \(\varepsilon_n\) is drawn from a distribution that assigns positive probability mass to values of \(\varepsilon\) close to zero. This captures agents with varying privacy concerns. We assume $\varepsilon_n \sim U[0,1]$ for simplicity, noting that the results extend to $\varepsilon_n \sim U[0,a]$ for any $a > 0$ after appropriate rescaling. Additional simulation results for different choices of privacy budget intervals and their corresponding learning performance are provided in Figure~\ref{fig:heterogeneous_llr}. We show that the convergence rate of the public log-likelihood ratio can increase to \(\Theta(\sqrt{n})\) due to the presence of agents with strong privacy preferences. Before presenting the formal results, we first establish that agents with \(\varepsilon_n \sim U[0, 1]\) still achieve asymptotic learning, as shown in Theorem~\ref{thm:smoothly-asymptotic-learning-C-hete}.

\begin{theorem}[Asymptotic Learning under Heterogeneous Privacy
Budgets]\label{thm:smoothly-asymptotic-learning-C-hete}
Consider a sequential learning model with binary states, Gaussian private signals, and agent-specific privacy budgets \(\varepsilon_n\) drawn independently from a Uniform distribution \(\mathrm{U}[0, 1]\). Then, under a smooth randomized response strategy, asymptotic learning occurs almost surely.
\end{theorem}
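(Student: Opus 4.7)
The plan is to extend the proof of \Cref{thm:smoothly-asymptotic-learning-C} to the heterogeneous case by folding the randomness in $\varepsilon_n$ into the observer's likelihood. Since $\varepsilon_n$ is private and $U[0,1]$ is public, observers compute one-step likelihoods by averaging:
\[
\bar{q}_{\theta}(\ell) \;:=\; \mathbb{E}_{\varepsilon \sim U[0,1]}\!\left[\mathbb{P}\bigl(x_n = +1 \,\big|\, l_n = \ell,\, \theta,\, \varepsilon\bigr)\right],
\]
and the public log-likelihood ratio updates accordingly as $l_{n+1} = l_n + \log\!\frac{\bar{q}_+(l_n)}{\bar{q}_-(l_n)}$ if $x_n = +1$ and analogously otherwise. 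A standard consequence is that $\{e^{-l_n}\}$ is a nonnegative martingale under $\mathbb{P}_+$, so by Doob's martingale convergence theorem $l_n \to l_\infty$ almost surely with $l_\infty \in [-\infty, +\infty]$.

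The second step is to show that under $\theta = +1$ the expected increment of $l_n$ is the KL divergence
\[
D(\ell) \;=\; D_{\mathrm{KL}}\!\bigl(\mathrm{Bern}(\bar{q}_+(\ell)) \,\|\, \mathrm{Bern}(\bar{q}_-(\ell))\bigr),
\]
and that $D(\ell) > 0$ for every finite $\ell$. In the homogeneous analysis underlying \Cref{thm:smoothly-learning-speed}, the smooth randomized response satisfies $q_+(\ell,\varepsilon) > q_-(\ell,\varepsilon)$ for every $\varepsilon > 0$ and every finite $\ell$, because the shifted Gaussian centered at $+1$ places more mass on $\{s_n > t(l_n)\}$ than the one centered at $-1$, and the flipping rate $u_n(s_n)$ is symmetric about the threshold. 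Since $U[0,1]$ assigns positive mass to any $[\delta,1]$, the strict ordering survives the mixing: $\bar{q}_+(\ell) > \bar{q}_-(\ell)$, and consequently $D(\ell) > 0$.

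The third step is to rule out a finite limit $l_\infty$. On the event $\{l_\infty \in \mathbb{R}\}$, continuity of $\bar{q}_\pm$ in $\ell$ (obtained by dominated convergence in the integral defining them) gives $D(l_n) \to D(l_\infty) > 0$, contradicting the fact that the telescoping sum $\sum_n \mathbb{E}_+[l_{n+1} - l_n \mid \mathcal{F}_n] = \sum_n D(l_n)$ must be finite whenever $l_n$ itself converges in $L^1$ (which $l_n$ does under $\mathbb{P}_+$ on this event, since $l_n$ is a submartingale bounded above on $\{l_\infty < \infty\}$ by optional stopping). Therefore $l_n \to +\infty$ almost surely under $\mathbb{P}_+$; symmetrically $l_n \to -\infty$ a.s.\ under $\mathbb{P}_-$. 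Because the optimal (pre-randomization) action $a_n$ uses the threshold $t(l_n) = -\sigma^2 l_n/2$, we obtain $\mathbb{P}(a_n = \theta \mid l_n) \to 1$, i.e., asymptotic learning.

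The main obstacle is the behavior near $\varepsilon = 0$: agents with a vanishing budget contribute flipping probabilities that approach $1/2$ on any bounded signal region, making their individual contribution to information aggregation negligible. The cleanest remedy is a truncation argument: for any fixed $\delta \in (0,1)$, the subpopulation $\{\varepsilon_n \geq \delta\}$ has probability $1-\delta$ and yields a uniformly positive gap $q_+(\ell,\varepsilon) - q_-(\ell,\varepsilon) \geq c(\ell,\delta) > 0$ (inherited from the homogeneous case). Averaging gives $\bar{q}_+(\ell) - \bar{q}_-(\ell) \geq (1-\delta)\,c(\ell,\delta) > 0$, which is precisely what is needed to activate the KL bound above. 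Handling the continuity and integrability of $c(\ell,\delta)$ uniformly over compact ranges of $\ell$ is where the technical effort concentrates; everything else is a direct adaptation of the homogeneous argument.
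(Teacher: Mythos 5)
Your proposal is correct and follows essentially the same route as the paper's proof: a martingale-convergence argument for the public belief, followed by showing that the $\varepsilon$-averaged action probabilities satisfy $\bar q_+(\ell)>\bar q_-(\ell)$ strictly at every finite $\ell$ (the paper gets this by noting the averaged flipping function $g(s_n)$ is non-increasing and strictly decreasing on a set of positive measure, against two Gaussians that differ only in mean), which forces the limit of $l_n$ to be at infinity. Your KL-divergence framing and the explicit truncation near $\varepsilon=0$ are presentational variants of the same argument rather than a genuinely different proof.
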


Theorem~\ref{thm:smoothly-asymptotic-learning-C-hete} shows that asymptotic learning still occurs almost surely even when agents have heterogeneous privacy preferences, modeled by privacy budgets \(\varepsilon_n\) drawn from a continuous uniform distribution \(\mathrm{U}[0, 1]\). This result confirms that the smooth randomized response strategy preserves the long-run accuracy of social learning despite the presence of highly privacy-conscious agents. In the next theorem, we go beyond asymptotic learning and characterize the speed at which public belief converges. Specifically, we quantify how the distribution of privacy budgets affects the convergence rate of the public log-likelihood ratio in this heterogeneous setting.

\begin{theorem}[Learning Rate under Heterogeneous Privacy Budgets]\label{thm:smoothly-learning-speed-hete}
    Consider differential privacy budget \( \varepsilon_n \) follows $U[0, 1]$ and a smooth randomized response strategy for sequential learning with Gaussian signals. Then, the convergence rate of the public log-likelihood ratio under this strategy is given by \( f(n) = \Theta(\sqrt{n})\). 
\end{theorem}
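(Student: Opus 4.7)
My plan is to analyze the expected per-step growth of the public log-likelihood ratio $l_n$ under $\theta = +1$ by marginalizing over the i.i.d.\ privacy budgets $\varepsilon_n \sim U[0,1]$, and then upgrade the resulting deterministic growth law to almost-sure convergence in the sense of Definition~\ref{def:public-llr-rate}.

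First, for a single agent with budget $\varepsilon$ and public belief $l$, I would derive sharp asymptotics for the Bernoulli parameters $p_\pm(\varepsilon, l) := \mathbb{P}(x_n = +1 \mid \theta = \pm 1,\, l_n = l,\, \varepsilon_n = \varepsilon)$. Starting from the smooth randomized response flip probability $u(s) = (1/2)\exp(-\varepsilon |s - t(l)|)$ with $t(l) = -\sigma^2 l/2$, and using the Gaussian moment-generating function together with the fact that for large $l$ the tail mass $\mathbb{P}(s < t(l) \mid \theta = \pm 1)$ is doubly-exponentially small, I obtain the approximation $q_\pm(\varepsilon, l) := 1 - p_\pm(\varepsilon, l) \approx (1/2)\exp(-\varepsilon \sigma^2 l/2 \mp \varepsilon + \varepsilon^2 \sigma^2/2)$. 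This is precisely the building block already used in the proof of Theorem~\ref{thm:smoothly-learning-speed}, and serves as the entry point here.

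Second, I would marginalize over $\varepsilon_n$ to obtain the expected LLR increment $\mathbb{E}[\Delta l_n \mid l_n] = \int_0^1 D_{\mathrm{KL}}(P_+(\varepsilon, l_n) \,\|\, P_-(\varepsilon, l_n))\, d\varepsilon$. Under the change of variables $u = \varepsilon \sigma^2 l_n / 2$, the interval $[0,1]$ stretches to the half-line $[0,\infty)$ as $l_n \to \infty$, so the averaging effectively integrates the exponential kernel $e^{-\varepsilon \sigma^2 l_n/2}$ against a linear pre-factor coming from the $\sinh(\varepsilon) \approx \varepsilon$ term in the log-ratio. In contrast to the homogeneous case of Theorem~\ref{thm:smoothly-learning-speed}, where a fixed $\varepsilon$ produces an exponentially decaying increment in $l$ (and hence the logarithmic rate), averaging over the continuum of small budgets smooths this exponential decay into a polynomial decay of order $1/l_n$. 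Solving the deterministic ODE $dl/dn = C/l$ then gives $l_n \sim \sqrt{2 C n}$, establishing the claimed $\Theta(\sqrt{n})$ rate in expectation.

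Third, to pass from expected growth to the almost-sure statement $l_n / f(n) \to 1$, I would use a martingale decomposition $l_n = \sum_{k<n} \mathbb{E}[\Delta l_k \mid \mathcal{F}_k] + M_n$ where $M_n$ is a martingale. The per-step variance of $\Delta l_n$ can be bounded by the same $\Theta(1/l_n)$ order as the mean, so cumulative fluctuations up to time $n$ are $O(\sqrt{n/l_n})$, which is $o(\sqrt{n})$ along the deterministic trajectory; Doob's maximal inequality combined with Borel--Cantelli then yields the almost-sure convergence. The main obstacle I anticipate is extracting the \emph{precise} $\Theta(1/l_n)$ scaling of the marginalized increment: the KL integrand decomposes into a ``typical'' piece $(1-\tilde{q}_+)\log(\tilde{p}_+/\tilde{p}_-)$ of order $1/l_n^2$ and a ``rare event'' piece $\tilde{q}_+ \log(\tilde{q}_+/\tilde{q}_-)$ of opposite sign and the same order, so naive Taylor expansion produces a cancellation and must be handled carefully by separating the peak region $\varepsilon \sim 1/l_n$ from the tails and tracking sub-leading terms in the Gaussian MGF expansion. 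A secondary subtlety is that the observer does not see $\varepsilon_n$, so the LLR update must use the marginal $\int p_\pm(\varepsilon, l)\,d\varepsilon$ rather than $p_\pm(\varepsilon_n, l)$; verifying that the scaling is unchanged under this marginalization (as the binary analysis in Section~\ref{sec:heterogeneous-privacy-binary} suggests) is a necessary consistency check.
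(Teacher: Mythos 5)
Your overall strategy---extract the asymptotics of the flip probabilities, marginalize over $\varepsilon_n\sim U[0,1]$ so that the exponential kernel $e^{-\varepsilon\sigma^2 l/2}$ turns the per-step increment from exponentially small in $l$ into polynomially small, and then solve the resulting ODE and transfer back to the discrete sequence---is the same as the paper's proof of \Cref{thm:smoothly-learning-speed-hete}. Two of your choices diverge from the paper in ways that matter. First, the paper does not work with the expected increment (a KL divergence); it uses the fact that under $\theta=+1$ all actions equal $+1$ from some point on almost surely, so the recurrence is the deterministic one $l_{n+1}=l_n+D_+(l_n)$ with $D_+(l_n)=\log(\bar p_+/\bar p_-)\approx \bar q_- - \bar q_+$, and then invokes Lemma~\ref{lem: A/B=1} and Lemma~\ref{lem:discrete and continuous} to pass to the ODE solution. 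This entirely avoids the ``typical vs.\ rare-event cancellation'' you flag as your main obstacle: the term $\bar q_+\log(\bar q_+/\bar q_-)$ never enters because the $x_n=-1$ branch is not averaged in. Your martingale-plus-Doob-plus-Borel--Cantelli route for the almost-sure statement is heavier machinery aimed at a problem the paper's conditioning dissolves.

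Second, and more seriously, your second step is internally inconsistent. You correctly identify that the marginalized increment is $\int_0^1 \sinh(\varepsilon)\,e^{\varepsilon^2\sigma^2/2}\,e^{-\varepsilon\sigma^2 l_n/2}\,d\varepsilon$ with a prefactor that is linear, $\sinh(\varepsilon)\approx\varepsilon$, near the origin where the kernel concentrates. But $\int_0^1\varepsilon\, e^{-\varepsilon\sigma^2 l/2}\,d\varepsilon=\Theta(l^{-2})$, not $\Theta(l^{-1})$; feeding that into the ODE gives $dl/dn=C/l^{2}$ and hence $l_n=\Theta(n^{1/3})$ rather than the claimed $\Theta(\sqrt n)$. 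The paper obtains the $\Theta(l^{-1})$ increment by sandwiching the prefactor between two positive constants $\tilde C_1,\tilde C_2$ (justified there by the observation that $\{\varepsilon_n=0\}$ has measure zero) and then evaluating the uniform moment generating function $\mathbb{E}_{\varepsilon_n}[e^{-\varepsilon_n\sigma^2 l_n/2}]=2/(\sigma^2 l_n)$. You must either adopt that constant-prefactor bound explicitly or carry the linear factor through consistently, but you cannot do both: as written, your own asymptotics do not deliver the $\Theta(1/l_n)$ increment that your ODE step assumes, so the proposal does not yet establish the stated rate.
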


Due to the presence of highly privacy-conscious agents, the convergence rate of the public belief increases to \( \Theta(\sqrt{n}) \), which is significantly faster than the \( \Theta(\log n) \) rate observed in the homogeneous setting. This naturally raises the question: is it possible to achieve an even faster convergence rate under some alternative distribution of privacy budgets? The following theorem answers this question by establishing a fundamental limit: regardless of the distribution from which agents’ privacy budgets are drawn, the convergence rate of the public log-likelihood ratio cannot exceed \( \Theta(\sqrt{n}) \).

\begin{theorem}[Learning Rate Bound under Heterogeneous Privacy Budgets]\label{thm:smoothly-learning-speed-upperbound}
Consider a sequential learning model with Gaussian private signals and a smooth randomized response strategy, where each agent's privacy budget \(\varepsilon_n\) is drawn independently from a distribution \(G\) supported on \([0, a]\). Suppose that \(G\) assigns zero measure to the singleton \(\{0\}\), i.e., \(G(\{0\}) = 0\). Then, the convergence rate of the public log-likelihood ratio under this strategy is upper bounded by \(f(n) = \Theta(\sqrt{n})\) and lower bounded by \(f(n) = \Theta(\log n)\).
\end{theorem}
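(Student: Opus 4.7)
The plan is to analyze the conditional drift of the public log-likelihood ratio $l_n$ under a general privacy-budget distribution $G$ on $[0,a]$, and convert drift estimates into growth-rate bounds for $l_n$ via an ODE / stochastic-approximation argument. Writing $L=\sigma^{2}l_n/2$ so that the threshold is $t(l_n)=-L$, plugging the smooth randomized response of \Cref{def:SRR} into $\mathbb{P}(x_n=-1\mid l_n,\theta,\varepsilon_n)=\int_{-\infty}^{t}(1-u_n(s))\phi_\theta(s)\,ds+\int_{t}^{\infty}u_n(s)\phi_\theta(s)\,ds$, using the Gaussian moment-generating function, and discarding the super-exponentially-small ``wrong-side'' Gaussian tails, marginalization over $\varepsilon_n\sim G$ yields $\bar{q}^{\pm}_G:=\mathbb{P}(x_n=-1\mid l_n,\theta=\pm1)\approx \tfrac{1}{2}\,\mathbb{E}_{\varepsilon\sim G}[e^{-\varepsilon(L\pm1)+\varepsilon^{2}\sigma^{2}/2}]$, and consequently the clean identity $\bar{q}^{-}_G-\bar{q}^{+}_G\approx \mathbb{E}_{\varepsilon\sim G}[e^{-\varepsilon L+\varepsilon^{2}\sigma^{2}/2}\sinh\varepsilon]$, which will drive both bounds.

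\textbf{Upper bound ($l_n=O(\sqrt{n})$).} The drift under $\theta=+1$ equals the Bernoulli KL between $(1-\bar{q}^{+}_G,\bar{q}^{+}_G)$ and $(1-\bar{q}^{-}_G,\bar{q}^{-}_G)$. Applying $\log(1+x)\le x$ to the first term and using $\bar{q}^{+}_G\log(\bar{q}^{+}_G/\bar{q}^{-}_G)\le 0$ for the second, I obtain, once $L$ is large enough that $\bar{q}^{-}_G\le 1/2$, the clean one-line bound $\mathbb{E}[l_{n+1}-l_n\mid l_n,\theta=+1]\le 2(\bar{q}^{-}_G-\bar{q}^{+}_G)$. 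On $[0,a]$, the elementary inequalities $\sinh\varepsilon\le \varepsilon e^{\varepsilon}$ and $e^{\varepsilon^{2}\sigma^{2}/2}\le e^{a^{2}\sigma^{2}/2}$, together with the calculus identity $\max_{\varepsilon\ge 0}\varepsilon e^{-\varepsilon L}=1/(eL)$, give the $G$-uniform estimate $\bar{q}^{-}_G-\bar{q}^{+}_G\le K/L$ with $K:=e^{a+a^{2}\sigma^{2}/2-1}$. Comparing $dL/dn\le 2K/L$ to the ODE $dx/dn=2K/x$ gives $L_{n}^{2}\le 4Kn+c$, so after unwinding the scaling one concludes $l_n=O(\sqrt{n})$.

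\textbf{Lower bound ($l_n=\Omega(\log n)$) and main obstacle.} Since $G(\{0\})=0$ and $G$ is supported on $[0,a]$, there exist $\varepsilon_0\in(0,a]$ and $\eta>0$ with $G([\varepsilon_0,a])\ge\eta$. Restricting the defining integral of $\bar{q}^{-}_G-\bar{q}^{+}_G$ to $[\varepsilon_0,a]$ and using $e^{-\varepsilon L}\ge e^{-aL}$ for $\varepsilon\le a$ yields $\bar{q}^{-}_G-\bar{q}^{+}_G\ge \eta\sinh(\varepsilon_0)\,e^{-aL}=:c\,e^{-aL}$, and Pinsker's inequality then gives drift $\ge 2(\bar{q}^{-}_G-\bar{q}^{+}_G)^{2}\ge 2c^{2}e^{-2aL}$. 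Integrating $dL/dn\ge 2c^{2}e^{-2aL}$ gives $e^{2aL_n}\ge 4ac^{2}n+c'$, hence $l_n=\Omega(\log n)$. The main obstacle will be promoting these deterministic ODE comparisons to almost-sure asymptotic bounds on the random sequence $(l_n)$: I plan to control the martingale part $M_n:=l_{n+1}-l_n-\mathbb{E}[l_{n+1}-l_n\mid\mathcal{F}_n]$ through a second-moment estimate (its conditional variance is dominated by $\bar{q}^{-}_G\log^{2}(\bar{q}^{-}_G/\bar{q}^{+}_G)$, which stays small at the scales of interest) and then combine Azuma-type fluctuation control with Lyapunov / Gronwall comparisons applied to $L_{n}^{2}$ for the upper bound and to $e^{2aL_n}$ for the lower bound. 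The subtle step will be certifying that the Gaussian-tail residues hidden in the $\approx$ above are uniformly negligible relative to the flip-induced terms, and that both the $K/L$ and $ce^{-aL}$ bounds on $\bar{q}^{-}_G-\bar{q}^{+}_G$ persist regardless of how $G$ concentrates near $0$, so long as $G(\{0\})=0$.
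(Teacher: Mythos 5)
Your core estimates are correct and coincide with the paper's: both proofs reduce the problem to bounding the drift term $G_-(-l_n)=\tfrac12\,\mathbb{E}_{\varepsilon\sim G}\bigl[(e^{\varepsilon+\varepsilon^2\sigma^2/2}-e^{-\varepsilon+\varepsilon^2\sigma^2/2})e^{-\varepsilon\sigma^2 l_n/2}\bigr]$, obtain the upper bound by maximizing the integrand pointwise in $\varepsilon$ (the paper finds $\varepsilon^*\approx 2/(\sigma^2 l_n)$ by setting $f'(\varepsilon)=0$ and Taylor-expanding, which is exactly your $\max_{\varepsilon\ge0}\varepsilon e^{-\varepsilon L}=1/(eL)$ in disguise, and your version via $\sinh\varepsilon\le\varepsilon e^\varepsilon$ is the more rigorous of the two), obtain the lower bound from the positive mass of $G$ away from $0$ together with $e^{-\varepsilon L}\ge e^{-aL}$ on $[0,a]$, and then integrate the resulting ODEs $dL/dn\asymp 1/L$ and $dL/dn\gtrsim e^{-2aL}$ (resp. $e^{-aL}$). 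Where you diverge is in two places, both of which make your route heavier than necessary. First, for the lower bound you pass through Pinsker, lower-bounding the KL drift by $2(\bar q^-_G-\bar q^+_G)^2\gtrsim e^{-2aL}$; the paper instead uses that once all actions are $+1$ the realized increment is $D_+(l_n)=\log\frac{1-\bar q^+_G}{1-\bar q^-_G}\approx \bar q^-_G-\bar q^+_G\gtrsim e^{-aL}$, which is first order rather than second order in the total-variation gap. Both yield $\Theta(\log n)$, so the squaring only costs you a constant factor in front of $\log n$, but it is an unnecessary loss. Second, and more substantively, your plan to promote expected-drift bounds to almost-sure bounds via Azuma-type martingale concentration is avoidable: the paper invokes asymptotic learning to conclude that, almost surely, all agents eventually report $+1$ under $\theta=+1$, after which the recursion $l_{n+1}=l_n+D_+(l_n)$ is deterministic, and the transfer to the ODE solution is handled by the deterministic comparison lemmas (\Cref{lem: A/B=1} and \Cref{lem:discrete and continuous}). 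Note also that for the upper bound no stochastic argument is needed at all, since the realized increment is pathwise at most $D_+(l_n)\le 2(\bar q^-_G-\bar q^+_G)\le 2K/L$ once $\bar q^-_G\le 1/2$ (the negative increment on $x_n=-1$ only helps), so the deterministic ODE comparison applies directly to every sample path. If you adopt the eventual-consensus shortcut, your remaining "main obstacle" disappears and your writeup becomes a complete proof; as stated, the martingale step is a genuine piece of unfinished work, though not a conceptual error.
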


Theorem~\ref{thm:smoothly-learning-speed-upperbound} highlights a fundamental limit in privacy-aware sequential learning: the presence of agents with very small privacy budgets effectively determines the maximal rate at which public belief can converge. To place this result in context, it is useful to compare it with the following two benchmarks. In the ideal case where an agent directly observe $n$ i.i.d. Gaussian private signals, the convergence rate of the aggregated belief is \(\Theta(n)\), corresponding to the classical law of large numbers. In the standard non-private sequential learning setting (i.e., without any privacy-preserving behavior), the convergence rate is reduced to \(\Theta(\sqrt{\log(n)})\) due to the information loss from agents observing only the actions of their predecessors \citep{Hann-Caruthers2018speed,rosenberg2019efficiency}.

Our result shows that, when smooth randomized response is employed under heterogeneous privacy budgets, the convergence rate can improve to \(\Theta(\sqrt{n})\), which is strictly faster than that in the non-private sequential setting. This somewhat surprising outcome arises from two key factors: the presence of highly privacy-conscious agents (with small \(\varepsilon_n\)) and the inherent asymmetry of the smooth randomized response strategy under the two possible states. Specifically, under the true state, the average probability of flipping the action is lower than under the false state, which makes the observed actions more informative in expectation. Nonetheless, Theorem~\ref{thm:smoothly-learning-speed-upperbound} confirms that \(\Theta(\sqrt{n})\) is the fastest achievable convergence rate under any distribution of privacy budgets, and thus represents the optimal learning rate under privacy constraints in sequential settings.

To further support our results in the heterogeneous setting, we also analyze two important measures of learning efficiency: the expected time until the first correct action, and the expected total number of incorrect actions. We find that both quantities remain finite under the smooth randomized response strategy with heterogeneous privacy budgets, which highlights that privacy-aware learning can still be highly efficient even when agents have varying degrees of privacy concerns.

\begin{theorem}[Time to First Correct Action under Heterogeneous Privacy Budgets]\label{thm:finite-expectation-hete}
Consider a privacy budget $\varepsilon_n$ follows  follows $U[0, 1]$ and suppose that agents follow a smooth randomized response strategy under Gaussian signals in a sequential learning setting. Then, the expected time \( \mathbb{E}[\tau] \) until the first correct action satisfies
\begin{align}
    \mathbb{E}[\tau] = C_{1}\sum_{n=1}^{\infty}e^{-\tilde{C} n^{\frac{1}{2}}}, \label{eq:expected-time-hete}
\end{align} where \(C_{1}\) and \(\tilde{C}\) are positive constants . The series in \cref{eq:expected-time-hete} is finite.
\end{theorem}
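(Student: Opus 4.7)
The strategy is to mirror the roadmap used for Theorem~\ref{thm:finite-expectation}, but to substitute the $\Theta(\log n)$ growth of the public log-likelihood ratio under a fixed budget with the $\Theta(\sqrt{n})$ rate established in Theorem~\ref{thm:smoothly-learning-speed-hete} for the heterogeneous regime $\varepsilon_n\sim U[0,1]$. Condition on $\theta=-1$ (the case $\theta=+1$ is symmetric) and let $\pi_n^{*}=\mathbb{P}(\theta=+1\mid x_1=\cdots=x_{n-1}=+1)$ denote the public belief along the all-$+1$ trajectory, with $l_n^{*}=\log(\pi_n^{*}/(1-\pi_n^{*}))$ the associated log-likelihood ratio.

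The core identity, exactly as in \eqref{eq: convergence of u_n}, is
\begin{equation*}
u_n=\mathbb{P}_{-}(\tau>n)=\frac{1-\pi_{n+1}^{*}}{\pi_{n+1}^{*}}\prod_{k=1}^{n}\mathbb{P}(x_k=+1\mid l_k^{*},\theta=+1),
\end{equation*}
obtained by repeatedly applying \eqref{eq:bayes update} to swap the wrong-state one-step probabilities for their true-state counterparts at the cost of a boundary likelihood-ratio factor. A martingale argument (public beliefs form a bounded martingale under the true state) gives $\pi_{n+1}^{*}\ge 1/2$ and shows that $\prod_{k=1}^{n}\mathbb{P}(x_k=+1\mid l_k^{*},\theta=+1)$ converges to a positive constant $C_{1}$, so $u_n\sim C_{1}(1-\pi_{n+1}^{*})$.

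The next step is to plug in the growth rate of $l_n^{*}$. By Theorem~\ref{thm:smoothly-learning-speed-hete}, along the consensus trajectory $l_n^{*}=\tilde{C}\sqrt{n}\,(1+o(1))$ for a deterministic constant $\tilde{C}>0$ depending on $\sigma$ and on the distribution $\varepsilon_n\sim U[0,1]$ (heterogeneity enters here through the averaged update coefficient $\mathbb{E}_{\varepsilon}[C(\varepsilon)]$). Because $1-\pi_n^{*}\sim e^{-l_n^{*}}$, one obtains $u_n\sim C_{1}e^{-\tilde{C}\sqrt{n}}$, and hence
\begin{equation*}
\mathbb{E}[\tau]=\sum_{n=1}^{\infty}u_n= C_{1}\sum_{n=1}^{\infty}e^{-\tilde{C}\sqrt{n}},
\end{equation*}
which is the claimed formula. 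Finiteness follows from the integral test: substituting $y=\tilde{C}\sqrt{x}$ gives $\int_{1}^{\infty}e^{-\tilde{C}\sqrt{x}}\,dx=\frac{2}{\tilde{C}^{2}}\int_{\tilde{C}}^{\infty}y\,e^{-y}\,dy<\infty$, since $e^{-\tilde{C}\sqrt{n}}$ decays faster than any polynomial $n^{-s}$.

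The main obstacle will be justifying that the deterministic $\sqrt{n}$ asymptotics of $l_n^{*}$ imported from Theorem~\ref{thm:smoothly-learning-speed-hete} can be propagated into \emph{sharp} asymptotics for the product $\prod_{k=1}^{n}\mathbb{P}(x_k=+1\mid l_k^{*},\theta=+1)$ rather than merely for its logarithm. Concretely, one must show that the true-state tail probabilities do not accumulate log-corrections that destroy the $1-\pi_{n+1}^{*}$ asymptotics---exactly the subtle step that required the Riemann-type summability condition in the homogeneous proof of Theorem~\ref{thm:finite-expectation}, now to be redone with the heterogeneous update that averages $u(\varepsilon)$ over $\varepsilon\sim U[0,1]$. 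Once the convergence of the true-state product to a strictly positive limit is secured, the remaining stretched-exponential series estimate is elementary.
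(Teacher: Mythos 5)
Your proposal is correct and follows essentially the same route as the paper, which likewise proves this theorem by mirroring the homogeneous argument of Theorem~\ref{thm:finite-expectation} (the identity $u_n=\frac{1-\pi_{n+1}^*}{\pi_{n+1}^*}\prod_k\mathbb{P}(x_k=+1\mid l_k,\theta=+1)$ plus the martingale bounds), substituting the $\Theta(\sqrt{n})$ rate from Theorem~\ref{thm:smoothly-learning-speed-hete} for the $\log n$ rate, and checking convergence of $\sum_n e^{-\tilde{C}n^{1/2}}$ by integral comparison. In fact your write-up is more detailed than the paper's, which omits the substitution step entirely and only verifies the final series' finiteness; the propagation issue you flag at the end is real but is glossed over by the paper as well.
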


\begin{theorem}[Total Number of Incorrect Actions under Heterogeneous Privacy Budgets]\label{thm:finite-incorrect-actions-hete}
Consider a privacy budget $\varepsilon_n$ follows  follows $U[0, 1]$  and suppose that agents follow a smooth randomized response strategy under Gaussian signals in a sequential learning setting. Then, the expected total number of incorrect actions \( \mathbb{E}[W] \) satisfies
\begin{align}
    \mathbb{E}[W] = C_{2}\sum_{n=1}^{\infty}e^{-\tilde{C} n^{\frac{1}{2}}}, \label{eq:expected-total-incorrect-hete}
\end{align}  where \(C_{2}\) and \(\tilde{C}\) are positive constants. The series in \cref{eq:expected-total-incorrect-hete} is finite.
\end{theorem}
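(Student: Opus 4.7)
The plan is to mirror the good-run/bad-run decomposition used in the proof of Theorem~\ref{thm:finite-incorrect-actions}, replacing the homogeneous bound on the expected stopping time with the heterogeneous one established in Theorem~\ref{thm:finite-expectation-hete}. Condition on $\theta = -1$ (the symmetric case follows analogously) and partition the action sequence $(x_n)$ into alternating maximal blocks of correct actions (good runs) and incorrect actions (bad runs). Let $\sigma_k$ denote the start time of the $k$-th bad run, $\tau_{k-1}$ the end time of the preceding good run, and $\Delta_k$ the length of the $k$-th bad run. Then $W = \sum_{k \ge 1} \Delta_k \mathbf{1}_{\sigma_k < \infty}$, so it suffices to bound $\sum_k \mathbb{E}_{-}[\Delta_k \mathbf{1}_{\sigma_k < \infty}]$.

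The first step is a strong-Markov-type argument on the public filtration $\mathcal{H}_{\tau_{k-1}}$: conditional on the public belief at the end of the $(k-1)$-th good run, the future evolution of the process is statistically equivalent to a fresh sequential-learning instance with prior equal to the current public belief, since the privacy budgets $\varepsilon_n$ are drawn i.i.d. from $U[0,1]$ and the smooth randomized response depends only on $(s_n, l_n, \varepsilon_n)$. Following the argument in the proof of Theorem~\ref{thm:finite-incorrect-actions}, this yields
\begin{equation*}
\mathbb{E}_{-}\!\left[\Delta_k \mathbf{1}_{\sigma_k < \infty} \,\middle|\, \mathcal{H}_{\tau_{k-1}}\right] \;\le\; \frac{\pi_{\tau_{k-1}}}{1 - \pi_{\tau_{k-1}}}\,\mathbb{E}[\tau],
\end{equation*}
where $\mathbb{E}[\tau]$ is now the heterogeneous stopping-time expectation from Theorem~\ref{thm:finite-expectation-hete}, which is finite.

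The second step is to control the geometric decay of the odds ratio $\pi_{\tau_{k-1}}/(1-\pi_{\tau_{k-1}})$ across successive good runs. Under $\theta = -1$, the posterior odds $(1-\pi_n)/\pi_n$ form a supermartingale, and the improvement at each good-run endpoint is multiplicative. Because Theorem~\ref{thm:smoothly-learning-speed-hete} gives a $\Theta(\sqrt{n})$ convergence rate of the log-likelihood ratio, the contribution of each good run to the log-odds is, in expectation, bounded away from zero, so taking unconditional expectations one obtains a recurrence of the form $\mathbb{E}_{-}[\Delta_k \mathbf{1}_{\sigma_k < \infty}] \le r^{k-1} \,\mathbb{E}[\tau]$ for some $r \in (0,1)$. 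Summing the geometric series and substituting the expression for $\mathbb{E}[\tau]$ from Theorem~\ref{thm:finite-expectation-hete} yields
\begin{equation*}
\mathbb{E}[W] \;\le\; \frac{1}{1-r}\,\mathbb{E}[\tau] \;=\; C_2 \sum_{n=1}^{\infty} e^{-\tilde{C} n^{1/2}},
\end{equation*}
after absorbing constants into $C_2$, and this series converges because $e^{-\tilde{C}n^{1/2}}$ decays faster than any polynomial in $1/n$.

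The main obstacle is making the second step rigorous in the heterogeneous regime: the random privacy budgets $\varepsilon_n$ introduce extra variability in both the flipping probability $u_n(s_n)$ and the single-step likelihood ratio, so one must verify that the expected log-odds improvement per good run remains bounded below uniformly in $k$. I would handle this by integrating out $\varepsilon_n \sim U[0,1]$ in the update formula \eqref{eq:log_likelihood_update} to obtain a deterministic lower bound on the expected drift, analogous to the $C(\varepsilon)$ computation in the homogeneous proof but now averaged over the privacy-budget distribution. Once that deterministic lower bound is secured, the martingale and geometric-tail arguments transfer essentially verbatim from the proof of Theorem~\ref{thm:finite-incorrect-actions}.
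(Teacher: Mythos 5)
Your proposal follows essentially the same route as the paper, which itself only states that the heterogeneous result mirrors the good-run/bad-run argument of Theorem~\ref{thm:finite-incorrect-actions} with the $\Theta(\sqrt{n})$ rate from Theorem~\ref{thm:smoothly-learning-speed-hete} substituted into $1-\pi_n^\ast \sim e^{-l_n}$, and then verifies convergence of $\sum_{n} e^{-\tilde{C} n^{1/2}}$ by integral comparison. One small correction: the geometric decay $\mathbb{E}_{-}[\Delta_k \mathbf{1}_{\sigma_k < +\infty}] \le (1-a)^{k-1}\,\mathbb{E}[\tau]$ is driven by the uniform escape probability of Lemma~\ref{lem: prob_all_minus} (via optional stopping of the positive martingale $\pi_n/(1-\pi_n)$), not by the $\Theta(\sqrt{n})$ convergence rate of the log-likelihood ratio, though your plan to re-verify the needed single-step inequalities by integrating out $\varepsilon_n \sim U[0,1]$ is exactly the right way to transfer that lemma to the heterogeneous regime.
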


Theorems~\ref{thm:finite-expectation-hete} and~\ref{thm:finite-incorrect-actions-hete} together demonstrate that, even in the presence of heterogeneous privacy budgets, the learning process remains efficient. Specifically, agents are expected to take a correct action in finite time, and the overall number of incorrect actions remains bounded. These results further confirm that smooth randomized response enables timely and reliable decision-making under privacy constraints.

\paragraph{Simulation under Heterogeneous Privacy Budgets.}
While our theory shows that stronger privacy concerns yield faster asymptotic learning (as $n\to\infty$), we ask how learning behaves in finite populations. Following \cite{alaggan2015heterogeneous, acharya2024personalized}, we classify agents by privacy: \emph{conservatives} ($\varepsilon=0.1$), \emph{pragmatists} ($\varepsilon=0.5$), and \emph{liberals} ($\varepsilon=1$). We study five regimes: three homogeneous settings with fixed $\varepsilon\in{0.1,0.5,1}$, a heterogeneous setting with $\varepsilon\sim U(0,1)$, and a non-private baseline. For each regime we run five independent simulations; signals are normal with standard deviation $\sigma=1$. This design ties behavioral types to budgets and tests whether asymptotic predictions already manifest in finite samples.

\vspace{-15pt}
\begin{figure}[htbp]
    \centering
    \includegraphics[width=0.5\textwidth]{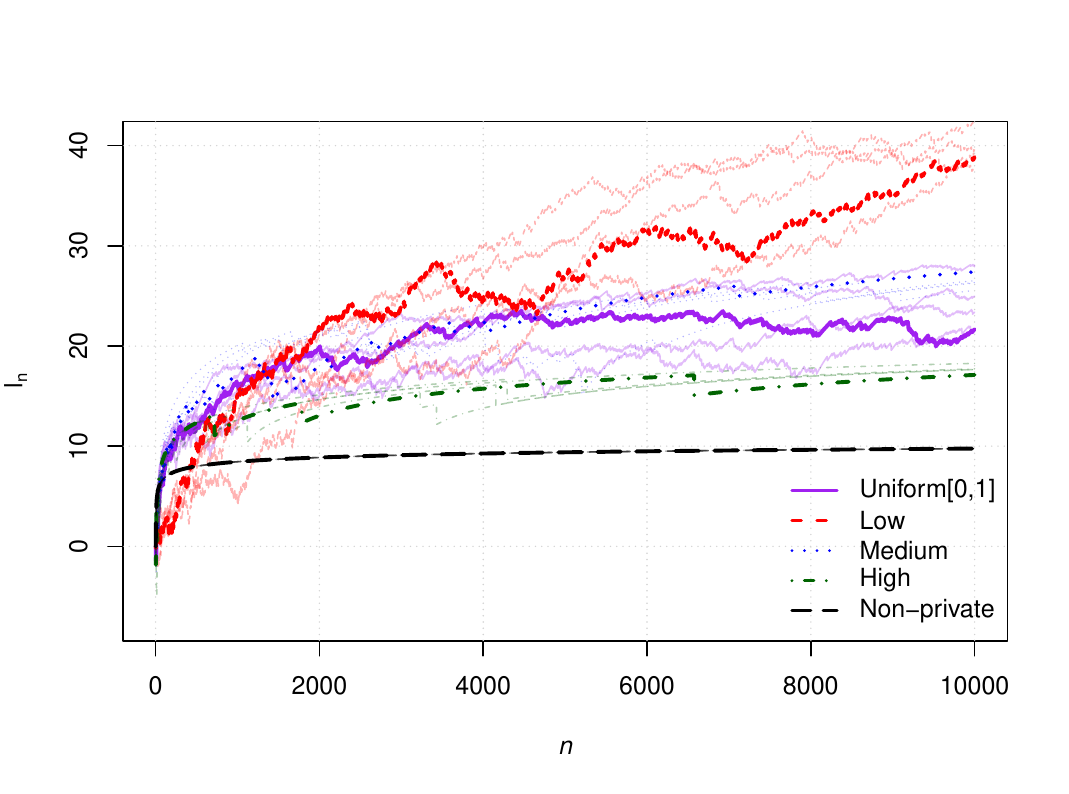}
\caption{Log-likelihood ratio dynamics under five privacy regimes. Homogeneous settings: conservatives $(\varepsilon=0.1)$, pragmatists $(\varepsilon=0.5)$, and liberals $(\varepsilon=1)$; heterogeneous setting: $\varepsilon \sim U(0,1)$; and a non-private baseline.  Signals are normally distributed with $\sigma=1$.}
    \label{fig:heterogeneous_llr}
\end{figure}
\vspace{-25pt}

Figure~\ref{fig:heterogeneous_llr} illustrates the trajectory of the log-likelihood ratio $l_n$ as the number of agents increases.  We observe that, consistent with the theoretical result, the log-likelihood ratio grows more rapidly under stronger privacy concerns (lower budgets). Intuitively, when privacy concern is stronger, agents introduce more randomization in their actions. This randomization increases the gap between action distributions under the two states: under the true state, agents conform to the majority action more reliably, whereas under the false state, their actions are noisier and deviate more often. This asymmetry amplifies the log-likelihood ratio between states, thereby accelerating information aggregation.

\vspace{-15pt}
\section{Conclusions}
\label{sec:conclusion}
We investigated the effects of privacy constraints on sequential learning, examining binary and continuous signals under a metric differential privacy framework. Our results reveal that in the binary signal case, privacy-preserving noise impairs learning performance by reducing the informativeness of individual actions. Interestingly, the probability of achieving the correct cascade in the binary case is not monotonic with respect to the privacy budget \( \varepsilon \). Specifically, there are breakpoints at the endpoints of intervals where increasing the privacy budget can paradoxically decrease the probability of the correct cascade, driven by changes in the information cascade threshold. These findings highlight the challenges posed by binary signals in achieving efficient learning under privacy constraints.

For continuous signals, privacy-aware agents adopt the smooth randomized response strategy, which adjusts noise based on the signal’s distance from the decision threshold. This ensures metric differential privacy while reducing unnecessary noise for signals and amplifying informativeness differences between actions. The strategy introduces asymmetry that makes correct actions more informative, enhancing belief updates and accelerating learning. As a result, asymptotic learning is not only preserved under privacy but also occurs at an improved rate of \( \Theta_{\varepsilon}(\log n) \), compared to \( \Theta(\sqrt{\log n}) \) without privacy. These findings highlight how adaptive noise strategies can improve learning efficiency even under stringent privacy constraints.

We show that under the smooth randomized response, the expected time to the first correct action, \( \mathbb{E}[\tau] \), is finite when \( \varepsilon \in (0, \frac{2}{\sigma^2}) \), ensuring efficient learning. We also introduce a stronger metric, the expected number of incorrect actions, \( \mathbb{E}[W] \), and prove it remains finite under privacy. In contrast, both quantities diverge in the non-private regime. These results challenge the traditional trade-off between privacy and learning, showing that strong privacy does not necessarily slow down learning. Instead, adaptive noise strategies like the smooth randomized response can simultaneously preserve privacy and enhance learning efficiency in sequential settings with continuous signals.

We further consider a heterogeneous setting where agents draw their privacy budgets from a uniform distribution. Interestingly, the presence of agents with strong privacy preferences—those with smaller budgets—leads to an acceleration in the aggregation of public belief, achieving a convergence rate of order \( \Theta(\sqrt{n}) \). This rate is shown to be the fastest possible under any distribution of privacy budgets. These findings highlight two   key points: first, carefully structured privacy constraints can paradoxically enhance learning speed; and second, despite this improvement, sequential decision-making under privacy remains fundamentally limited in its ability to match the efficiency of fully transparent environments, where direct access to signals enables linear convergence.

Our findings offer practical insights for organizations balancing user privacy with accurate decision-making. Contrary to the belief that privacy hinders learning, we show that adaptive noise strategies—like smooth randomized response—can both protect privacy and accelerate learning. This is valuable for platforms aggregating consumer or health data, where timely and accurate insights matter. Moreover, the advantages of heterogeneous privacy preferences suggest that encouraging voluntary, varied participation may outperform rigid disclosure policies. Overall, our results provide guidance for designing privacy-aware systems that uphold ethical standards while supporting organizational goals in learning and adaptation.

\ACKNOWLEDGMENT{%
We thank Juba Ziani, Krishna Dasaratha, David Hirshleifer and Jie Gao for their valuable feedback and discussions. 
% This material is based on work supported by the National Science Foundation under Grant No.\ 2318844. 
}
\section*{Code Availability}
All R scripts used to generate the figures and reproduce the simulation results in this paper are openly available on our GitHub repository: \url{https://github.com/YuxinLiu1997/Privacy-Preserving-Sequential-Learning}.

\bibliographystyle{informs2014} 
\bibliography{reference}

\begin{thebibliography}{42}
\providecommand{\natexlab}[1]{#1}
\providecommand{\url}[1]{\texttt{#1}}
\providecommand{\urlprefix}{URL }

\bibitem[{Acemoglu et~al.(2011)Acemoglu, Dahleh, Lobel, \protect\BIBand{} Ozdaglar}]{Acemoglu2011network}
Acemoglu D, Dahleh MA, Lobel I, Ozdaglar A (2011) Bayesian learning in social networks. \emph{The Review of Economic Studies} 78(4):1201--1236.

\bibitem[{Acharya et~al.(2025)Acharya, Boenisch, Naidu, \protect\BIBand{} Ziani}]{acharya2024personalized}
Acharya K, Boenisch F, Naidu R, Ziani J (2025) Personalized differential privacy for ridge regression. \emph{Naval Research Logistics} Forthcoming, arXiv preprint arXiv:2401.17127.

\bibitem[{Acquisti \protect\BIBand{} Grossklags(2005)}]{acquisti2005privacy}
Acquisti A, Grossklags J (2005) Privacy and rationality in individual decision making. \emph{IEEE Security \& Privacy} 3(1):26--33.

\bibitem[{Adjerid et~al.(2019)Adjerid, Acquisti, \protect\BIBand{} Loewenstein}]{adjerid2019choice}
Adjerid I, Acquisti A, Loewenstein G (2019) Choice architecture, framing, and cascaded privacy choices. \emph{Management science} 65(5):2267--2290.

\bibitem[{Akbay et~al.(2021)Akbay, Wang, \protect\BIBand{} Zhang}]{Wang2021privacy_sociallearning}
Akbay AB, Wang W, Zhang J (2021) Impact of social learning on privacy-preserving data collection. \emph{IEEE Journal on Selected Areas in Information Theory} 2(1):268--282.

\bibitem[{Alaggan et~al.(2015)Alaggan, Gambs, \protect\BIBand{} Kermarrec}]{alaggan2015heterogeneous}
Alaggan M, Gambs S, Kermarrec AM (2015) Heterogeneous differential privacy. \emph{arXiv preprint arXiv:1504.06998} .

\bibitem[{Ali(2018)}]{Ali2018cost}
Ali SN (2018) Herding with costly information. \emph{Journal of Economic Theory} 175:713--729.

\bibitem[{Andr{\'e}s et~al.(2013)Andr{\'e}s, Bordenabe, Chatzikokolakis, \protect\BIBand{} Palamidessi}]{andres2013geo}
Andr{\'e}s ME, Bordenabe NE, Chatzikokolakis K, Palamidessi C (2013) Geo-indistinguishability: Differential privacy for location-based systems. \emph{Proceedings of the 2013 ACM SIGSAC conference on Computer \& communications security}, 901--914.

\bibitem[{Arieli et~al.(2025)Arieli, Babichenko, M{\"u}ller, Pourbabaee, \protect\BIBand{} Tamuz}]{arieli2025hazards}
Arieli I, Babichenko Y, M{\"u}ller S, Pourbabaee F, Tamuz O (2025) The hazards and benefits of condescension in social learning. \emph{Theoretical Economics} 20(1):27--56.

\bibitem[{Banerjee(1992)}]{Banerjee1992herd}
Banerjee AV (1992) {A Simple Model of Herd Behavior}. \emph{The Quarterly Journal of Economics} 107(3):797--817.

\bibitem[{Bikhchandani et~al.(2024)Bikhchandani, Hirshleifer, Tamuz, \protect\BIBand{} Welch}]{bikhchandani2024information}
Bikhchandani S, Hirshleifer D, Tamuz O, Welch I (2024) Information cascades and social learning. \emph{Journal of Economic Literature} 62(3):1040--1093.

\bibitem[{Bikhchandani et~al.(1992)Bikhchandani, Hirshleifer, \protect\BIBand{} Welch}]{Bikhchandani1992cascades}
Bikhchandani S, Hirshleifer D, Welch I (1992) A theory of fads, fashion, custom, and cultural change as informational cascades. \emph{Journal of Political Economy} 100(5):992--1026.

\bibitem[{Bohren(2016)}]{bohren2016informational}
Bohren JA (2016) Informational herding with model misspecification. \emph{Journal of Economic Theory} 163:222--247.

\bibitem[{Bun \protect\BIBand{} Steinke(2016)}]{bun2016concentrated}
Bun M, Steinke T (2016) Concentrated differential privacy: Simplifications, extensions, and lower bounds. \emph{Theory of cryptography conference}, 635--658 (Springer).

\bibitem[{Casadesus-Masanell \protect\BIBand{} Hervas-Drane(2015)}]{casadesus2015competing}
Casadesus-Masanell R, Hervas-Drane A (2015) Competing with privacy. \emph{Management Science} 61(1):229--246.

\bibitem[{Chamley(2004)}]{chamley2004rational}
Chamley C (2004) \emph{Rational herds: Economic models of social learning} (Cambridge University Press).

\bibitem[{Chatzikokolakis et~al.(2013)Chatzikokolakis, Andr{\'e}s, Bordenabe, \protect\BIBand{} Palamidessi}]{chatzikokolakis2013broadening}
Chatzikokolakis K, Andr{\'e}s ME, Bordenabe NE, Palamidessi C (2013) Broadening the scope of differential privacy using metrics. \emph{International Symposium on Privacy Enhancing Technologies}, 82--102 (Springer).

\bibitem[{Dwork et~al.(2019)Dwork, Kohli, \protect\BIBand{} Mulligan}]{dwork2019differential}
Dwork C, Kohli N, Mulligan D (2019) Differential privacy in practice: Expose your epsilons! \emph{Journal of Privacy and Confidentiality} 9(2).

\bibitem[{Dwork \protect\BIBand{} Roth(2014)}]{dwork2014algorithmic}
Dwork C, Roth A (2014) The algorithmic foundations of differential privacy. \emph{Found. Trends Theor. Comput. Sci.} 9(3-4):211--407.

\bibitem[{Fan(2019)}]{fan2019practical}
Fan L (2019) Practical image obfuscation with provable privacy. \emph{2019 IEEE International Conference on Multimedia and Expo (ICME)}, 784--789 (IEEE).

\bibitem[{Frick et~al.(2023)Frick, Iijima, \protect\BIBand{} Ishii}]{frick2023belief}
Frick M, Iijima R, Ishii Y (2023) Belief convergence under misspecified learning: A martingale approach. \emph{The Review of Economic Studies} 90(2):781--814.

\bibitem[{Han et~al.(2020)Han, Li, Cao, Ma, \protect\BIBand{} Yoshikawa}]{han2020voice}
Han Y, Li S, Cao Y, Ma Q, Yoshikawa M (2020) Voice-indistinguishability: Protecting voiceprint in privacy-preserving speech data release. \emph{2020 IEEE International Conference on Multimedia and Expo (ICME)}, 1--6 (IEEE).

\bibitem[{Hann-Caruthers et~al.(2018)Hann-Caruthers, Martynov, \protect\BIBand{} Tamuz}]{Hann-Caruthers2018speed}
Hann-Caruthers W, Martynov VV, Tamuz O (2018) The speed of sequential asymptotic learning. \emph{Journal of Economic Theory} 173:383--409.

\bibitem[{He et~al.(2014)He, Machanavajjhala, \protect\BIBand{} Ding}]{he2014blowfish}
He X, Machanavajjhala A, Ding B (2014) Blowfish privacy: Tuning privacy-utility trade-offs using policies. \emph{Proceedings of the 2014 ACM SIGMOD International Conference on Management of Data}, 1447--1458.

\bibitem[{Hsu et~al.(2014)Hsu, Gaboardi, Haeberlen, Khanna, Narayan, Pierce, \protect\BIBand{} Roth}]{hsu2014differential}
Hsu J, Gaboardi M, Haeberlen A, Khanna S, Narayan A, Pierce BC, Roth A (2014) Differential privacy: An economic method for choosing epsilon. \emph{2014 IEEE 27th Computer Security Foundations Symposium}, 398--410 (IEEE).

\bibitem[{Jensen et~al.(2005)Jensen, Potts, \protect\BIBand{} Jensen}]{jensen2005privacy}
Jensen C, Potts C, Jensen C (2005) Privacy practices of internet users: Self-reports versus observed behavior. \emph{International Journal of Human-Computer Studies} 63(1-2):203--227.

\bibitem[{Ke \protect\BIBand{} Sudhir(2023)}]{ke2023privacy}
Ke TT, Sudhir K (2023) Privacy rights and data security: {GDPR} and personal data markets. \emph{Management Science} 69(8):4389--4412.

\bibitem[{Kifer \protect\BIBand{} Machanavajjhala(2014)}]{kifer2014pufferfish}
Kifer D, Machanavajjhala A (2014) Pufferfish: A framework for mathematical privacy definitions. \emph{ACM Transactions on Database Systems (TODS)} 39(1):1--36.

\bibitem[{Kim et~al.(2025)Kim, Cui, \protect\BIBand{} Zhu}]{kim2025behavior}
Kim Y, Cui H, Zhu Y (2025) Behavior-based pricing under informed privacy consent: Unraveling autonomy paradox. \emph{Marketing Science} .

\bibitem[{Le et~al.(2017)Le, Subramanian, \protect\BIBand{} Berry}]{Le2017noise}
Le TN, Subramanian VG, Berry RA (2017) Information cascades with noise. \emph{IEEE Transactions on Signal and Information Processing over Networks} 3(2):239--251.

\bibitem[{Montes et~al.(2019)Montes, Sand-Zantman, \protect\BIBand{} Valletti}]{montes2019value}
Montes R, Sand-Zantman W, Valletti T (2019) The value of personal information in online markets with endogenous privacy. \emph{Management Science} 65(3):1342--1362.

\bibitem[{Nissim et~al.(2007)Nissim, Raskhodnikova, \protect\BIBand{} Smith}]{nissim2007smooth}
Nissim K, Raskhodnikova S, Smith A (2007) Smooth sensitivity and sampling in private data analysis. \emph{Proceedings of the 39th Annual ACM Symposium on Theory of Computing}, 75--84.

\bibitem[{Papachristou \protect\BIBand{} Rahimian(2025)}]{papachristou2025differentially}
Papachristou M, Rahimian MA (2025) Differentially private distributed estimation and learning. \emph{IISE Transactions} 57(7):756--772.

\bibitem[{Peres et~al.(2020)Peres, R{\'a}cz, Sly, \protect\BIBand{} Stuhl}]{peres2020fragile}
Peres Y, R{\'a}cz MZ, Sly A, Stuhl I (2020) How fragile are information cascades? \emph{The Annals of Applied Probability} 30(6):2796--2814.

\bibitem[{Rizk et~al.(2023)Rizk, Vlaski, \protect\BIBand{} Sayed}]{rizk2023enforcing}
Rizk E, Vlaski S, Sayed AH (2023) Enforcing privacy in distributed learning with performance guarantees. \emph{IEEE Transactions on Signal Processing} 71:3385--3398.

\bibitem[{Rosenberg \protect\BIBand{} Vieille(2019)}]{rosenberg2019efficiency}
Rosenberg D, Vieille N (2019) On the efficiency of social learning. \emph{Econometrica} 87(6):2141--2168.

\bibitem[{Smith \protect\BIBand{} Sorensen(2013)}]{smith2013rational}
Smith L, Sorensen PN (2013) Rational social learning by random sampling. \emph{Available at SSRN 1138095} .

\bibitem[{Smith \protect\BIBand{} Sørensen(2000)}]{Smith2000pathological}
Smith L, Sørensen P (2000) Pathological outcomes of observational learning. \emph{Econometrica} 68(2):371--398.

\bibitem[{Song et~al.(2017)Song, Wang, \protect\BIBand{} Chaudhuri}]{song2017pufferfish}
Song S, Wang Y, Chaudhuri K (2017) Pufferfish privacy mechanisms for correlated data. \emph{Proceedings of the 2017 ACM International Conference on Management of Data}, 1291--1306.

\bibitem[{Tao et~al.(2023)Tao, Chen, Li, Yu, Yu, \protect\BIBand{} Sheng}]{tao2023distributed}
Tao Y, Chen S, Li F, Yu D, Yu J, Sheng H (2023) A distributed privacy-preserving learning dynamics in general social networks. \emph{IEEE Transactions on Knowledge and Data Engineering} 35(9):9547--9561.

\bibitem[{Tsitsiklis et~al.(2021)Tsitsiklis, Xu, \protect\BIBand{} Xu}]{tsitsiklis2021private}
Tsitsiklis JN, Xu K, Xu Z (2021) Private sequential learning. \emph{Operations Research} 69(5):1575--1590.

\bibitem[{Xu(2018)}]{xu2018query}
Xu K (2018) Query complexity of bayesian private learning. \emph{Advances in Neural Information Processing Systems} 31.

\end{thebibliography}

%%
%% If your work has an appendix, this is the place to put it.

\begin{APPENDICES}
\clearpage % 
\renewcommand\thefigure{\thesection.\arabic{figure}}
\setcounter{figure}{0}
\setcounter{page}{1}
\renewcommand{\thepage}{A-\arabic{page}}

{\noindent \Large \bf Online Appendix}
\section*{\Large{Appendix}}

\section{Additional Related Work}
\label{app:related_work}

\paragraph{Social learning under model misspecification.}
Our research relates to the literature on social learning under model misspecification, which typically emphasizes the adverse consequences of imperfect models on learning outcomes. For example, \citet{frick2023belief} show that even minor misspecifications can substantially hinder belief convergence, while \citet{Le2017noise} demonstrate that exogenous observation errors increase the likelihood of incorrect cascades as noise grows. Similarly, \citet{bohren2016informational} highlight how mistaken assumptions about correlation structures among others’ actions can generate persistent inefficiencies. These studies collectively suggest that misspecification erodes the reliability of information aggregation in sequential environments. In contrast, we show that when privacy concerns endogenously induce agents to randomize their actions, the resulting noise in the continuous-signal setting can paradoxically enhance asymptotic learning by amplifying informative action and accelerating convergence to the truth.

\paragraph{Privacy concern and information disclosure.}  
A central theme in the privacy literature concerns how individuals make disclosure decisions when facing privacy constraints. \citet{adjerid2019choice} demonstrate that choice architecture and framing effects can lead to cascaded patterns of privacy disclosure, suggesting that even subtle interventions systematically shift disclosure behaviors. \citet{ke2023privacy} study how regulatory regimes such as GDPR reshape individual incentives and the structure of personal data markets, while \citet{kim2025behavior} highlight the autonomy paradox that arises when people consent to behavior-based pricing, balancing perceived control with potential welfare losses. Similarly, \citet{casadesus2015competing} and \citet{montes2019value} analyze how the value of personal information and endogenous privacy preferences shape disclosure incentives in market settings. These studies establish that privacy fundamentally alters disclosure behavior, but they largely focus on individual decision-making or static market equilibria. In contrast, our work emphasizes the collective dimension: when agents in sequential learning environments strategically withhold or randomize disclosures under privacy concerns, the dynamics of inference and belief updating themselves are transformed. By analyzing correlated disclosure across agents, we show how privacy generates systematic effects on the speed and efficiency of information aggregation beyond the scope of individual-level disclosure choices.  

\paragraph{Privacy protection and utility tradeoffs.}  
Our work also relates to the broader literature on the trade-off between privacy guarantees and utility, particularly in the context of relaxed definitions of differential privacy. Standard DP, while widely adopted, often suffers from limitations in settings with continuous or unbounded domains, as it treats all neighboring datasets as equally distinct and thus requires injecting excessive noise, leading to severe utility loss \citep{dwork2014algorithmic, bun2016concentrated}. To address these issues, several relaxed notions have been proposed. \textit{Metric DP (mDP)} generalizes DP by replacing binary adjacency with distance-based similarity, allowing noise magnitude to scale with input proximity and enabling finer privacy--utility tradeoffs in applications such as geo-location, speech, and image data \citep{andres2013geo, han2020voice, fan2019practical}. \textit{Pufferfish privacy} provides another flexible framework that explicitly specifies which secrets should be indistinguishable, thereby avoiding unnecessary noise \citep{kifer2014pufferfish}; variants such as Blowfish further demonstrate improved utility for tasks like histograms and clustering \citep{he2014blowfish, song2017pufferfish}. Building on this line of work, we apply the metric differential privacy framework to study how privacy constraints affect sequential learning with continuous signals, and show how it achieves a balance between utility and privacy guarantees.

%mDP, pufferfish, utility and performance

\section{Proof of \texorpdfstring{\Cref{thm:piece-wise-random-response}}{thmpiecewiserandomresponse}: Randomized Response Strategy for Binary Model}
\label{app:proof:thm:piece-wise-random-response}

Before the information cascade, agents take actions entirely based on their private signals. Thus, the probability expressions become:
\begin{equation*}
    \mathbb{P}(a_n = +1 \mid s_n, x_1, \dots, x_{n-1}) = \mathbb{I} \{ s_n = +1 \}, \mbox{ and }
    \mathbb{P}(a_n = -1 \mid s_n, x_1, \dots, x_{n-1}) = \mathbb{I} \{ s_n = -1 \}.
\end{equation*}

Now, using \Cref{eq:DPB}, we can rewrite:
\begin{equation*}
    \begin{aligned}
        \frac{\mathbb{P}(\mathcal{M}(s_n; h_{n-1}) = x_n)}{\mathbb{P}(\mathcal{M}(s_n'; h_{n-1}) = x_n)}
        &= \frac{\mathbb{P}(x_n \mid s_n, x_1, x_2, \dots, x_{n-1})}{\mathbb{P}(x_n \mid s_n', x_1, x_2, \dots, x_{n-1})} \\
        &\leq \max_{\substack{x_n \in \{-1, +1\} \\ s_n,\tilde{s}_n  \in \{-1, +1\}}}  
        \frac{\mathbb{P}(x_n \mid s_n, x_1, x_2, \dots, x_{n-1})}{\mathbb{P}(x_n \mid \tilde{s}_n, x_1, x_2, \dots, x_{n-1})} \\
        &= \max_{\substack{x_n \in \{-1, +1\} \\ s_n,\tilde{s}_n  \in \{-1, +1\}}}  
        \frac{\mathbb{P}_{\mathcal{M}_n}(x_n \mid a_n=+1) \mathbb{I} \{ s_n = +1 \} + \mathbb{P}_{\mathcal{M}_n}(x_n \mid a_n=-1) \mathbb{I} \{ s_n = -1 \}}{\mathbb{P}_{\mathcal{M}_n}(x_n \mid a_n=+1) \mathbb{I} \{ \tilde{s}_n = +1 \} + \mathbb{P}_{\mathcal{M}_n}(x_n \mid a_n=-1) \mathbb{I} \{ \tilde{s}_n = -1 \}} \\
        &= \frac{1 - u_n}{u_n} = \exp(\varepsilon),
    \end{aligned}
\end{equation*}

After the information cascade, agents base their decisions solely on public history rather than their private signals. As a result, randomized response mechanisms for protecting private signals become unnecessary because private signals are no longer used in decision making. Finally, given the binary action space, the randomized response mechanism introduces the minimal perturbation required to satisfy the privacy constraint. As such, it is an optimal strategy among all mechanisms that preserve privacy.

\endproof

\section{Proof of \texorpdfstring{\Cref{thm:right-cascade-B}}{thmrightcascadeB}: Probability of the Correct Cascade}\label{app:proof:thm:right-cascade-B}
Building on our observations in \cref{app:proof:thm:piece-wise-random-response}, we analyze the threshold of the information cascade. Here, we define $k$ as the difference between the occurrences of action $+1$ and action $-1$ in the observation history up to the agent \(n\). With this in mind, we establish the following lemma.

\begin{lemma}\label{lem:expression-k}
    If the information cascade starts, the threshold $k$ is given by:
\begin{equation}\label{eq:threshold-k}
  k=\lfloor \log_{ \frac{(1-u(\varepsilon))(1-p) + u(\varepsilon)p }{u(\varepsilon)(1-p) + p(1-u(\varepsilon))} }\frac{1-p}{p} \rfloor+1.
\end{equation}
\end{lemma}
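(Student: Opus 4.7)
The plan is to characterize the indifference threshold by writing down the posterior log-odds of the state given the public history and the agent's pivotal signal, and then solving for the smallest integer gap that makes the history override the signal.

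By \Cref{thm:piece-wise-random-response}, prior to the cascade every agent reports $a_n = s_n$ through a randomized response with flip probability $u(\varepsilon) = 1/(1+e^{\varepsilon})$. A single report therefore has the conditional law
\begin{equation*}
\mathbb{P}(x = +1 \mid \theta = +1) = (1-u)p + u(1-p), \qquad \mathbb{P}(x = +1 \mid \theta = -1) = (1-u)(1-p) + up,
\end{equation*}
with the probabilities of $x=-1$ obtained by swapping $p$ and $1-p$. Writing $\rho(\varepsilon) = \mathbb{P}(x=-1\mid\theta=+1)/\mathbb{P}(x=-1\mid\theta=-1)$ recovers the quantity in the statement. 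A direct expansion gives $\mathbb{P}(x=-1\mid\theta=-1) - \mathbb{P}(x=-1\mid\theta=+1) = (1-2u)(2p-1) > 0$, so $\rho(\varepsilon) < 1$. Consequently, each observed $+1$ contributes $-\log\rho(\varepsilon) > 0$ and each observed $-1$ contributes $\log\rho(\varepsilon) < 0$ to the public log-likelihood ratio in favor of $\theta=+1$.

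Next, let $n_+$ and $n_-$ denote the numbers of $+1$ and $-1$ reports in $h_{n-1}$ and set $k = n_+ - n_-$. Summing the per-observation contributions yields a public log-likelihood ratio of $-k\log\rho(\varepsilon)$. The binding case is the contrarian agent with $s_n = -1$, whose private log-odds equal $\log\tfrac{1-p}{p}$; the cascade on $+1$ becomes self-sustaining exactly when this agent is driven against their signal, i.e.,
\begin{equation*}
-k\log\rho(\varepsilon) + \log \frac{1-p}{p} > 0.
\end{equation*}
Since $\log\rho(\varepsilon) < 0$ and $\log\tfrac{1-p}{p} < 0$, dividing by $-\log\rho(\varepsilon) > 0$ preserves the inequality and gives $k > \log_{\rho(\varepsilon)}\tfrac{1-p}{p}$, a positive real number. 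The smallest integer strictly exceeding this quantity is $\lfloor \log_{\rho(\varepsilon)} \tfrac{1-p}{p}\rfloor + 1$, which is \Cref{eq:threshold-k}; the mirror-image case $s_n = +1$ with a $-1$ majority yields the same bound by $\theta \mapsto -\theta$ symmetry.

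The main obstacle is conceptual rather than technical: one must identify the correct pivotal signal (the one that would otherwise oppose the accumulated history) and keep careful track of signs, since both $\log\rho(\varepsilon)$ and $\log\tfrac{1-p}{p}$ are negative so the ``$>$'' in the indifference condition translates into ``$>$'' rather than ``$<$'' after division. Once these signs and the inequality $\rho(\varepsilon)<1$ are in hand, the rest is arithmetic.
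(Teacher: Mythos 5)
Your proposal is correct and follows essentially the same route as the paper's proof: both compute the randomized-response report probabilities $\mathbb{P}(x=\pm1\mid\theta)$, express the public log-likelihood ratio after a gap of $k$ as $k$ times the per-observation log-ratio $-\log\rho(\varepsilon)$, and locate the threshold by requiring that this overrides the contrarian private log-odds $\log\tfrac{1-p}{p}$, yielding $k=\lfloor\log_{\rho(\varepsilon)}\tfrac{1-p}{p}\rfloor+1$. The paper phrases this as an explicit agent-3 base case followed by induction, while you sum the i.i.d.\ contributions directly and handle the signs more explicitly, but the argument is the same.
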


\proof %{Proof of Lemma \ref{lem:expression-k}}
First, since the prior of the state is symmetric, the log-likelihood ratio of the public belief for agent $n$ based on the reports from agent $1$ to agent $n-1$ is defined in \Cref{eq:public-LLR}:
\begin{equation*}
    l_{n}= \log\frac{\mathbb{P}(\theta=+1| x_1,\ldots,x_{n-1})}{\mathbb{P}(\theta=-1| x_1,\ldots,x_{n-1})}=\log\frac{\mathbb{P}(x_{1}, \ldots, x_{n-1} |\theta=+1)}{\mathbb{P}(x_{1}, \ldots, x_{n-1}  |\theta=-1)}.
\end{equation*}

We begin the analysis by scrutinizing the action of agent 3. Suppose that before agent 3 takes action, the information cascade has not yet begun, and we examine the log-likelihood ratio of public belief for agent 3. Specifically, if $x_1=x_2=+1$, then,

\begin{equation*}
    l_{3} = \log\frac{\mathbb{P}(x_{1}=+1|\theta=+1)}{\mathbb{P}(x_{1}=+1|\theta=-1)}\frac{\mathbb{P}(x_{2}=+1|\theta=+1, x_{1}=+1)}{\mathbb{P}(x_{2}=+1|\theta=-1, x_{1}=+1)}=
\log \left[\frac{u(\varepsilon)(1-p) + p(1-u(\varepsilon))}{(1-u(\varepsilon))(1-p) + u(\varepsilon)p} \right]^{2}. 
\end{equation*}
If $x_1=x_2=-1$, then
\begin{equation*}
l_{3} = \log \frac{\mathbb{P}(x_{1}=-1|\theta=+1)}{\mathbb{P}(x_{1}=-1|\theta=-1)}\frac{\mathbb{P}(x_{2}=-1|\theta=+1, x_{1}=-1)}{\mathbb{P}(x_{2}=-1|\theta=+1, x_{1}=-1)} =\log \left[\frac{(1-u(\varepsilon))(1-p) + u(\varepsilon)p}{u(\varepsilon)(1-p) + p(1-u(\varepsilon))} \right]^{2}.
\end{equation*}
If $x_1=+1, x_2=-1$ or $x_1=-1, x_2=+1$, then $l_{3}=1$. For agent 3, in making decisions, it needs to compare the public log-likelihood ratio with the private log-likelihood ratio $L_3$.

\begin{equation*}
    L_{3}=\log{\frac{\mathbb{P}(\theta=+1|s_{3})}{\mathbb{P}(\theta=-1|s_{3})}}=\log{\frac{\mathbb{P}(s_{3}|\theta=+1)}{\mathbb{P}(s_{3}|\theta=-1)}}.
\end{equation*}
 If \(s_3=+1\), then \(L_{3}=\log\frac{p}{1-p}\); otherwise, \(L_{3}=\log\frac{1-p}{p}\). In equilibrium, agent 3 selects \(a_{3}=+1\) if and only if \(l_{3}+L_{3}>0\). If an information cascade occurs, agent 3 will choose action $-1$ after observing \(x_1=x_2=-1\) even if \(s_{3}=+1\), or agent 3 will choose action +1 after observing \(x_1=x_2=+1\) even if \(s_{3}=-1\). This implies the following expression.

\begin{equation*}
   \log \left[\frac{(1-u(\varepsilon))(1-p) + u(\varepsilon)p}{u(\varepsilon)(1-p) + p(1-u(\varepsilon))} \right]^{2} + \log \frac{p}{1-p} < 0, \log \left[\frac{u(\varepsilon)(1-p) + p(1-u(\varepsilon))}{(1-u(\varepsilon))(1-p) + u(\varepsilon)p} \right]^{2}+\log\frac{1-p}{p} > 0.
\end{equation*}
For agent $n$, by induction we find that if the information cascade occurs, then

\begin{equation*}
   \log \left[\frac{(1-u(\varepsilon))(1-p) + u(\varepsilon)p}{u(\varepsilon)(1-p) + p(1-u(\varepsilon))} \right]^{k} + \log \frac{p}{1-p} < 0, \mbox{ and } \log \left[\frac{u(\varepsilon)(1-p) + p(1-u(\varepsilon))}{(1-u(\varepsilon))(1-p) + u(\varepsilon)p} \right]^{k}+\log\frac{1-p}{p} > 0.
\end{equation*}

In total, if $k$ is the threshold for the start of the information cascade, then $  k=\lfloor \log_{ \frac{(1-u(\varepsilon))(1-p) + u(\varepsilon)p }{u(\varepsilon)(1-p) + p(1-u(\varepsilon))} }\frac{1-p}{p} \rfloor+1$.

\endproof

\begin{lemma}\label{lem:asymptotic-learning-B}
When $\tilde{u}(\varepsilon) \neq \frac{1}{2}$, the probability of a correct cascade is given by:
\begin{equation*}
\mathbb{P}(\text{correct cascade}) = \frac{\rho(\varepsilon)^k - 1}{\rho(\varepsilon)^{2k} - 1}.
\end{equation*}
where $\rho(\varepsilon) = \frac{1 - \tilde{u}(\varepsilon)}{\tilde{u}(\varepsilon)}$ and $\tilde{u}(\varepsilon) = u(\varepsilon)(1-p) + p(1-u(\varepsilon))$. When $\tilde{u}(\varepsilon) = \frac{1}{2}$, the information cascade does not occur.
\end{lemma}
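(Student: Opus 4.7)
The plan is to recognize the evolution of the reported history before the cascade as a biased simple random walk and to reduce the probability of a correct cascade to an instance of the classical gambler's ruin problem. Setting $D_n := \sum_{i=1}^n x_i$ and conditioning on $\theta = +1$, \Cref{thm:piece-wise-random-response} tells us that as long as $|D_{n-1}| < k$ each agent picks $a_n = s_n$ and then applies randomized response with flip probability $u(\varepsilon)$; a direct computation gives
\[
\mathbb{P}(x_n = +1 \mid \theta = +1) = p(1 - u(\varepsilon)) + (1-p)u(\varepsilon) = \tilde{u}(\varepsilon),
\]
independently of the history. Hence $(D_n)$ is a homogeneous $\pm 1$ random walk with up-probability $\tilde{u}(\varepsilon)$, and by \Cref{lem:expression-k} the cascade begins exactly at the first hitting time $\tau := \inf\{n : |D_n| = k\}$, with a correct cascade corresponding to $D_\tau = +k$.

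When $\tilde{u}(\varepsilon) \neq 1/2$, I would shift coordinates so that the walk starts at $k$ between absorbing barriers at $0$ and $2k$ and apply the standard gambler's ruin formula. With $q/p = (1 - \tilde{u}(\varepsilon))/\tilde{u}(\varepsilon) = \rho(\varepsilon)$ this yields
\[
\mathbb{P}(D_\tau = +k \mid \theta = +1) = \frac{1 - \rho(\varepsilon)^k}{1 - \rho(\varepsilon)^{2k}} = \frac{\rho(\varepsilon)^k - 1}{\rho(\varepsilon)^{2k} - 1}.
\]
Because the model is symmetric under $\theta \mapsto -\theta$ (flipping the signs of all signals and reports converts the law under $\theta = +1$ into the law under $\theta = -1$), $\mathbb{P}(\text{correct cascade} \mid \theta = -1)$ equals the same expression, and averaging over the uniform prior gives the stated marginal formula.

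For the degenerate case $\tilde{u}(\varepsilon) = 1/2$, I would observe that the per-observation log-likelihood increment equals $\pm \log \rho(\varepsilon) = 0$, so the public log-likelihood ratio from \Cref{lem:expression-k} stays pinned at zero. Each agent then strictly prefers to follow her own signal, never herds, and the cascade threshold from \Cref{lem:expression-k} is formally undefined because the base of the logarithm equals one; equivalently $(D_n)$ is the unbiased simple random walk, so no event of the form ``$|D_n|$ first reaches $k$ and triggers an absorbing cascade'' arises in the model.

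I expect the main obstacle to be more conceptual than computational: carefully verifying that the pre-cascade dynamics really do form a time-homogeneous Markov chain whose one-step distribution depends on $\theta$ and $\varepsilon$ but not on the particular prefix $h_{n-1}$, so that one is entitled to invoke the gambler's ruin identity with a single bias parameter $\tilde{u}(\varepsilon)$. Once this Markov structure is extracted from \Cref{thm:piece-wise-random-response}, the ruin calculation and the telescoping simplification $(\rho^k - 1)/(\rho^{2k} - 1) = 1/(1 + \rho^k)$ are routine.
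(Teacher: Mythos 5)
Your proposal is correct and follows essentially the same route as the paper: both reduce the pre-cascade dynamics to a biased $\pm 1$ random walk with up-probability $\tilde{u}(\varepsilon)$ absorbed at $\pm k$, recast it as the classical gambler's ruin problem with barriers at $0$ and $2k$ starting from $k$, and read off $\frac{\rho(\varepsilon)^k-1}{\rho(\varepsilon)^{2k}-1}$ from the standard ruin formula. Your explicit verification that $\mathbb{P}(x_n=+1\mid\theta=+1)=\tilde{u}(\varepsilon)$ independently of the history, and your handling of the $\theta=-1$ case by symmetry, are details the paper leaves implicit, but the argument is the same.
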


\proof %{Proof of Lemma \ref{lem:asymptotic-learning-B}}
For the case $\tilde{u}(\varepsilon) =\frac{1}{2}$, it is trivial. However, for $\tilde{u}(\varepsilon) \neq \frac{1}{2}$, we can reframe the problem as a Markov chain featuring two absorbing states: $k$ and $-k$. The transition probabilities are defined as follows: $\mathbb{P}(z_{n}=m+1|z_{n-1}=m)=\tilde{u}(\varepsilon)$ and $\mathbb{P}(z_{n}=m-1|z_{n-1}=m)=1-\tilde{u}(\varepsilon)$, where $z_{n-1}$ is the difference between the occurrences of action $+1$ and action $-1$ in the observation history up to agent $n$. This formulation aligns with the classical gambler's ruin problem.

Given an initial capital of $k$ defined in \Cref{eq:threshold-k}, our objective is to calculate the probabilities associated with losing all capital $k$ or achieving a goal of $2k$. Without loss of generality, we assume $\theta = +1$, then the probability of achieving a goal of $2k$ is the same as the probability of the correct cascade. Let, 
\begin{equation*}
S_n=X_1+\cdots+X_n, \quad S_0=0 .
\end{equation*}
The event
\begin{equation*}
M_{k, n}=\left[k+S_n=2k\right] \cap \bigcap_{m=1}^{n-1}\left[0<k+S_m<2k\right],
\end{equation*}
represents earning $2k$ at time $n$. If $s_{2k}(k)$ denotes the probability of achieving this goal ultimately, then
\begin{equation*}
s_{2k}(k)=P\left(\bigcup_{n=1}^{\infty} M_{k, n}\right)=\sum_{n=1}^{\infty} P\left(M_{k, n}\right).
\end{equation*}
By convention, given the sample space $\Omega$ and the empty set $\emptyset$, we set $M_{k,0}=\emptyset$, $M_{2k,0}=\Omega$, and $M_{2k,n}=\emptyset$. Then we get $s_{2k}(0)=0$ and $s_{2k}(2k)=1$. Based on the problem setting, we could derive the following difference equation,
\begin{equation*}
s_{2k}(k)=\tilde{u}(\varepsilon) s_{2k}(k+1)+(1-\tilde{u}(\varepsilon))s_{2k}(k-1).
\end{equation*}
By solving the above difference equation with side conditions $s_{2k}(0)=0$ and $s_{2k}(2k)=1$. We get the probability that the gambler can successfully achieve his goal before ruin as follows:
\begin{equation*}
s_{2k}(k)=\frac{\rho(\varepsilon)^k-1}{\rho(\varepsilon)^{2k}-1}.
\end{equation*}
where $\rho(\varepsilon)=\frac{1-\tilde{u}(\varepsilon)}{\tilde{u}(\varepsilon)}$ and $\tilde{u}(\varepsilon)=u(\varepsilon)(1-p) + p(1-u(\varepsilon))$.
\endproof

Based on the above lemmas, we now present the formal proof of Theorem \ref{thm:finite-incorrect-actions}. First, we derive the interval $I_{k}$ in which the threshold is $k$. By \Cref{lem:expression-k}, after the simple calculation, for any $k \geq 2$
\begin{equation*}
        v_{k}=\frac{1-\alpha^{\frac{k-2}{k-1}}}{1-\alpha^{\frac{k-2}{k-1}}+\alpha^{\frac{-1}{k-1}}-\alpha},
\end{equation*}
where $\alpha=(1-p)/p$. Thus, $I_{k} = (\varepsilon_{k+1}, \varepsilon_{k}]$, and $\varepsilon_{k}=\log\frac{1-v_{k}}{v_{k}}$. By lemma \ref{lem:asymptotic-learning-B} take the first derivative of $s_{2k}(k)$ with respect to $\rho(\varepsilon)$ for $k \geq 2$ and $\rho(\varepsilon) \in (0,1)$:

\begin{equation*}
    \frac{ds_{2k}(k)}{d\rho}=\frac{3k\rho(\varepsilon)^{3k-2}-k\rho(\varepsilon)^{k-1}-2k\rho(\varepsilon)^{2k-1}}{(\rho(\varepsilon)^{2k}-1)^2}<0.
\end{equation*}
Next, we take the first derivative of $\rho(\varepsilon)$ and $u(\varepsilon)$ with respect to $u$ and $\varepsilon$, respectively. Then $\frac{d\rho(\varepsilon)}{du(\varepsilon)}>0$ and $\frac{du(\varepsilon)}{d\varepsilon}<0$. Consequently. $\frac{ds_{2k}(k)}{d\varepsilon}>0$, With an increase in the differential privacy budget, the probability of a correct cascade also increases.
\endproof

\section{Failure of Asymptotic Learning for \texorpdfstring{$\varepsilon$}{ε}-Differential Privacy}
\label{app:failure-asymptotic}

In this appendix we show that under an $\varepsilon$-differential privacy constraint, agents adopt a randomized response strategy with a fixed flip probability, and consequently asymptotic learning fails to occur in the sequential learning model with Gaussian signals.

\begin{theorem}[Randomized Response Strategy for Continuous Model]
\label{app:thm:random-response-C}
In the sequential learning model with Gaussian signals under an $\varepsilon$-differential privacy constraint, agents adopt a randomized response strategy with a fixed flip probability
\[
u_n = u(\varepsilon) = \frac{1}{1 + e^\varepsilon},
\]
independent of the agent index $n$.
\end{theorem}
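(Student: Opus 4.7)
The plan is to parametrize any admissible mechanism by the single-signal reporting probability $q(s) := \mathbb{P}(\mathcal{M}(s; h_{n-1}) = +1)$ and then to extract the binding constraints imposed by standard $\varepsilon$-DP on the unbounded signal space $\mathbb{R}$. Because $\varepsilon$-DP requires
\begin{equation*}
\mathbb{P}(\mathcal{M}(s) = x) \leq e^{\varepsilon}\, \mathbb{P}(\mathcal{M}(s') = x)
\end{equation*}
for every pair $s, s' \in \mathbb{R}$ and every $x \in \{-1, +1\}$, setting $q^+ := \sup_{s} q(s)$ and $q^- := \inf_{s} q(s)$ immediately yields the two ratio constraints
\begin{equation*}
q^+ \leq e^{\varepsilon} q^-, \qquad 1 - q^- \leq e^{\varepsilon}(1 - q^+).
\end{equation*}
Hence, no matter how extreme a signal is, an $\varepsilon$-DP mechanism is forced to randomize within an interval $[q^-, q^+]$ that is uniformly bounded away from $\{0, 1\}$. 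This is the crucial structural difference from the metric-DP case, where the upper envelope of $q$ can slide along the signal line.

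For any fixed feasible pair $(q^+, q^-)$, I would then show that the pointwise-optimal strategy is the bang-bang rule
\begin{equation*}
q^*(s_n) = q^+ \mathbf{1}\{s_n > t(l_n)\} + q^- \mathbf{1}\{s_n \leq t(l_n)\},
\end{equation*}
where $t(l_n) = -\sigma^2 l_n / 2$ is the Bayesian threshold already used in the continuous model. The justification is that, conditional on $s_n$ and $h_{n-1}$, the expected utility $\mathbb{P}(\theta = -1 \mid s_n, h_{n-1}) + q(s_n)\big[\mathbb{P}(\theta = +1 \mid s_n, h_{n-1}) - \mathbb{P}(\theta = -1 \mid s_n, h_{n-1})\big]$ is affine in $q(s_n)$ with slope of sign determined by the posterior log-odds, which is positive precisely when $s_n > t(l_n)$. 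Thus utility is maximized at the upper endpoint above the threshold and the lower endpoint below it, and any two-level mechanism with values in $\{q^-, q^+\}$ automatically inherits the global pairwise ratio constraints, so the bang-bang form is feasible without extra cost.

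Finally, since expected utility is monotone increasing in $q^+$ and decreasing in $q^-$, the outer optimization is achieved where both ratio constraints bind simultaneously. Solving $q^+ = e^{\varepsilon} q^-$ and $1 - q^- = e^{\varepsilon}(1 - q^+)$ yields the unique interior solution
\begin{equation*}
q^+ = \frac{e^{\varepsilon}}{1+e^{\varepsilon}}, \qquad q^- = \frac{1}{1+e^{\varepsilon}}.
\end{equation*}
Translating back, an agent whose myopically preferred action is $+1$ reports $+1$ with probability $1 - u(\varepsilon)$, and one whose myopically preferred action is $-1$ reports $+1$ with probability $u(\varepsilon)$, where $u(\varepsilon) = 1/(1+e^{\varepsilon})$; in both cases the reported action differs from the preferred one with the same flip probability $u(\varepsilon)$, and because neither the constraints nor the optimum depend on the agent index $n$, this value is common across all agents.

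The main obstacle I anticipate is not the optimization itself, which reduces to a two-variable linear program after the bang-bang reduction, but rather the supremum/infimum argument on an unbounded domain: one must verify that $q^+$ and $q^-$ are actually attained (rather than only approached in the limit $|s_n| \to \infty$), and that an optimal mechanism can be chosen to be a measurable two-level rule. This is resolved by exhibiting the concrete bang-bang mechanism defined above, which is measurable, satisfies every pairwise $\varepsilon$-DP inequality directly, and attains the upper envelope of expected utilities for any fixed $(q^+, q^-)$; optimizing the endpoints then yields the claimed strategy and closes the proof.
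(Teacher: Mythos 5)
Your proof is correct, but it takes a genuinely different and in one respect more ambitious route than the paper's. The paper's argument starts from the structural assumption (built into the model) that the mechanism acts on the deterministic intended action $a_n$, which is a step function of $s_n$ at the threshold $t(l_n)=-\sigma^2 l_n/2$; it then evaluates the worst-case likelihood ratio over signal pairs straddling the threshold, finds it equals $(1-u_n)/u_n$, and sets this to $e^{\varepsilon}$. You instead optimize over the full class of signal-dependent reporting rules $q(s)$, extract the two envelope constraints $q^+\le e^{\varepsilon}q^-$ and $1-q^-\le e^{\varepsilon}(1-q^+)$, prove the bang-bang structure from the affine-in-$q$ utility, and then solve the binding constraints — so you \emph{derive} the two-level randomized-response form rather than assuming it, which is a genuine strengthening. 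The one step you gloss over is the final vertex selection: the feasible region for $(q^-,q^+)$ has three vertices, the interior one you compute plus the degenerate constant mechanisms $(0,0)$ and $(1,1)$ (note that $q^+=1$ forces $q^-=1$ via the second constraint, and symmetrically). Since the objective is linear, "increasing in $q^+$ and decreasing in $q^-$" does not by itself rule out the constant mechanisms; one must check that the informative two-level rule dominates reporting a constant, which can fail when the public belief is extreme enough that almost all posterior-weighted mass lies on one side of the threshold. The paper sidesteps this by assuming the mechanism perturbs $a_n$, so your proof shares (rather than worsens) this implicit restriction, but it is worth a sentence to close it — e.g., by restricting attention to mechanisms that are measurable with respect to $a_n$ and non-degenerate, as the model section does.
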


\proof{Proof of Proposition \ref{app:thm:random-response-C}}
we can derive the decision for agent $n$ before adding any noise. A simple calculation shows that agent $n$ takes action $+1$ iff $l_{n}+L_{n}>0$ which implies
\begin{equation*}
   \log{\frac{\mathbb{P}(x_{1}, x_{2},\ldots, x_{n-1}|\theta=+1)}{\mathbb{P}(x_{1}, x_{2},\ldots, x_{n-1}|\theta=-1)}}+\log{\frac{\mathbb{P}(\theta=+1|s_{n})}{\mathbb{P}(\theta=-1|s_{n})}}>0.
\end{equation*}
For Gaussian distribution
\begin{equation*}
   L_{n}=\log{\frac{\mathbb{P}(\theta=+1|s_{n})}{\mathbb{P}(\theta=-1|s_{n})}}=\log\frac{f(s_{n}|\theta=+1)}{f(s_{n}|\theta=-1)}=\log{\frac{e^{-(s_{n}-1)^2/2\sigma^2}}{e^{-(s_{n}+1)^2/2\sigma^2}}}=\frac{2s_n}{\sigma^2}.
\end{equation*}
That implies $L_{n}$ also follows a normal distribution. Let $G_{+}$ and $G_{-}$ denote the cumulative distribution functions of a Gaussian distribution with mean $1$ and variance $4$, respectively. Then, when $s_{n} > \frac{-l_{n}}{2}$, agent $n$ chooses action $a_n=+1$; otherwise, $a_n=-1$. This implies that $\mathbb{P}(a_{n}|s_{n}, x_{1}, x_{2},\ldots, x_{n-1})$ is either one or zero. Based on this observation and \Cref{eq:DPC}, we could derive the relationship between the differential privacy budge $\varepsilon$ and the probability of randomized response $u_{n}$

\begin{equation*}
    \begin{aligned}
        &\hspace{12pt} \frac{\mathbb{P}(\mathcal{M}(s_{n})=x_{n};h_{n-1})}{\mathbb{P}(\mathcal{M}(s_{n}')=x_{n};h_{n-1})}\\
        &= \frac{\mathbb{P}(x_n|s_n, x_1, x_2, \ldots, x_{n-1})}{\mathbb{P}(x_n |s_n', x_1, x_2, \ldots, x_{n-1})}\\
        &\leq \max_{\substack{x_{n} \in \{-1,+1\} \\ (s_{n},s_{n}'): ||s_{n}-s_{n}'|| \leq 1}}  \frac{\mathbb{P}(x_n|s_n, x_1, x_2, \ldots, x_{n-1})}{\mathbb{P}(x_n |\tilde{s}_n, x_1, x_2, \ldots, x_{n-1})}\\
        &= \max_{\substack{x_{n} \in \{-1,+1\} \\ (s_{n},s_{n}'): ||s_{n}-s_{n}'|| \leq 1}} \frac{\mathbb{P}_{\mathcal{M}_{n}}(x_n|a_n)\mathbb{P}(a_n|s_n, x_1, x_2, \ldots, x_{n-1})+\mathbb{P}_{\mathcal{M}_{n}}(x_n|\tilde{a}_n)\mathbb{P}(\tilde{a}_n|s_n, x_1, x_2, \ldots, x_{n-1})}{\mathbb{P}_{\mathcal{M}_{n}}(x_n|a_n)\mathbb{P}(a_n|s_{n}', x_1, x_2, \ldots, x_{n-1})+\mathbb{P}_{\mathcal{M}_{n}}(x_n|\tilde{a}_n)\mathbb{P}(\tilde{a}_n|s_{n}', x_1, x_2, \ldots, x_{n-1})}\\
        &= \frac{1-u_{n}}{u_{n}} = \exp(\varepsilon),\\
    \end{aligned}
\end{equation*}
where $\tilde{a_{n}}=-a_{n}$, then $u_{n}=\frac{1}{1+e^{\varepsilon}}$.
\endproof

Because each agent uses the same randomized strategy, even when their private signal provides strong evidence about the true state, the agent must still add noise with the same probability. This leads to frequent incorrect actions, even in highly informative scenarios, and gives rise to the following impossibility result:

\begin{theorem}[Failure of Asymptotic Learning]\label{app:thm:no-asymptotic-learning-C}
For $\varepsilon$-differentially private sequential learning with binary states and Gaussian signals, asymptotic learning does not occur under the randomized response strategy.
\end{theorem}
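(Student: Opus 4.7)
The plan is to use the fixed-noise structure of any $\varepsilon$-differentially private randomized response---already pinned down by \Cref{app:thm:random-response-C}---to produce an elementary error floor on every reported action. No analysis of the public log-likelihood ratio dynamics is needed; the whole argument collapses to a single application of total probability once the mechanism is fixed.

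The first step is to invoke \Cref{app:thm:random-response-C}: under $\varepsilon$-DP on an unbounded Gaussian signal space, every agent must employ randomized response with a constant flip probability $u(\varepsilon)=1/(1+e^{\varepsilon})\in(0,1/2)$, applied to her Bayes-optimal intended action $a_n$, and this flip probability is independent of $s_n$, $h_{n-1}$, $\theta$, and $n$. Conditioning on $a_n$ and applying total probability then yields
\[
\mathbb{P}(x_n=\theta)=(1-u(\varepsilon))\,\mathbb{P}(a_n=\theta)+u(\varepsilon)\bigl(1-\mathbb{P}(a_n=\theta)\bigr)=u(\varepsilon)+(1-2u(\varepsilon))\,\mathbb{P}(a_n=\theta)\le 1-u(\varepsilon),
\]
where the inequality uses $\mathbb{P}(a_n=\theta)\le 1$. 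Since the bound is uniform in $n$, taking the limit superior gives $\limsup_{n\to\infty}\mathbb{P}(x_n=\theta)\le 1-u(\varepsilon)<1$, which directly contradicts \Cref{def:AL} and establishes the failure of asymptotic learning.

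The hard part of the write-up is conceptual rather than analytical: one must make explicit that, in the privacy-aware regime, the ``action'' relevant for \Cref{def:AL} is the publicly recorded $x_n$, since this is the only object that enters every downstream history $h_{n-1}$ and is observed by later agents. Once this identification is granted, the obstruction is simply the signal-oblivious noise floor $u(\varepsilon)>0$ that $\varepsilon$-DP forces on every report, regardless of how informative the private Gaussian signal is or how large the public belief has grown. In particular, even if one could independently show that the public log-likelihood ratio $l_n$ diverges under $\theta=+1$ (so that the intended $a_n$ eventually matches $\theta$ with high probability), this would not rescue asymptotic learning, because the reported $x_n$ is still flipped at rate $u(\varepsilon)$. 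This noise floor is precisely what the main-body smooth randomized response under $(\varepsilon,d_{\mathbb{R}})$-metric DP is designed to overcome by letting the flip probability vanish for signals far from the decision threshold.
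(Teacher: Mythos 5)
Your proof is correct, but it takes a genuinely more elementary route than the paper's. The paper's argument runs through the belief dynamics: it shows that the likelihood ratio $\pi_n$ is a nonnegative martingale conditional on $\theta=-1$, invokes the martingale convergence theorem to conclude that $\pi_\infty$ exists and is finite, deduces from the fixed-point condition on the update rule that $l_\infty=-\infty$ (so the public belief, and hence the intended action $a_n$, does converge to the truth), and only then evaluates the exact limit $\lim_{n\to\infty}\mathbb{P}(x_n=\theta)=(1-u(\varepsilon))\cdot 1+u(\varepsilon)\cdot 0=1-u(\varepsilon)$. You bypass all of this with a single total-probability bound $\mathbb{P}(x_n=\theta)=u(\varepsilon)+(1-2u(\varepsilon))\,\mathbb{P}(a_n=\theta)\le 1-u(\varepsilon)$, which is uniform in $n$ and already contradicts the definition since $u(\varepsilon)\in(0,1/2)$. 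What your argument buys is brevity and robustness: it requires no convergence of beliefs whatsoever. What the paper's argument buys is the exact limiting value $1-u(\varepsilon)$ rather than an upper bound, together with the more informative conclusion that the failure is attributable entirely to the terminal flip, since the intended actions still learn asymptotically. Your closing remark about $x_n$ versus $a_n$ is well taken: \Cref{def:AL} is literally stated in terms of $a_n$, and under that literal reading the paper's own proof in fact shows that the intended actions do learn; both you and the paper implicitly re-read the definition in terms of the reported action $x_n$, which is the only quantity downstream agents observe, and making that identification explicit (as you do) is a genuine improvement in the write-up.
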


\proof{Proof of Proposition \ref{app:thm:no-asymptotic-learning-C}}
 Without loss of generality, we assume $\theta=-1$, then the likelihood ratio of the public belief 
 \begin{equation*}
      \pi_{n}:=\frac{\mathbb{P}(\theta=+1|x_{1}, x_{2},\ldots, x_{n-1})}{\mathbb{P}(\theta=-1|x_{1}, x_{2},\ldots, x_{n-1})},
 \end{equation*}
is a martingale conditional on $\theta=-1$. To see this,
\begin{equation*}
\begin{aligned}
    \pi_{n+1}&=\frac{\mathbb{P}(\theta=+1|x_{1}, x_{2},\ldots, x_{n-1})}{\mathbb{P}(\theta=-1|x_{1}, x_{2},\ldots, x_{n-1})}\\
    &=\frac{\mathbb{P}(x_{1}, x_{2},\ldots, x_{n-1}|\theta=+1)}{\mathbb{P}(x_{1}, x_{2},\ldots, x_{n-1}\theta=-1)}\\
    &=\pi_{n}\frac{\mathbb{P}(x_{n}|\pi_{n}, \theta=+1)}{\mathbb{P}(x_{n}|\pi_{n}, \theta=-1)}.
\end{aligned}
\end{equation*}
Therefore, we have
\begin{equation*}
    E(\pi_{n+1}|\pi_{n},\theta=-1)=\pi_{n}\sum_{x_{n} \in \{0,1\}}\frac{\mathbb{P}(x_{n}|\pi_{n}, \theta=+1)}{\mathbb{P}(x_{n}|\pi_{n}, \theta=-1)}\mathbb{P}(x_{n}|\pi_{n}, \theta=-1)=\pi_{n}.
\end{equation*}
Due to $\pi_{n}$ being a non-negative martingale, by martingale convergence theorem, with probability one, the limit $\pi_{\infty}=\lim_{n \to \infty}\pi_{n}$ exists and is finite, thus the limit $l_{\infty}=\lim_{n \to \infty}l_{n}$ also exists and is finite. Combined with the dynamics of the public log-likelihood ratio shown in \Cref{eq:log_likelihood_update}, $l_{\infty}$ has to satisfy $(1-u)G_{+}(-l_{n})+u(1-G_{+}(-l_{n}))=(1-u)G_{-}(-l_{n})+u(1-G_{-}(-l_{n}))$, then $\pi_{\infty} \in \{0,\infty\}$. In addition, $\pi_{\infty} < \infty$ implies $\pi_{n}=0$, $l_{\infty}=-\infty$. 

\begin{equation*}
\begin{aligned}
    &\lim_{n \to \infty}\mathbb{P}_{\mathcal{M}_{n}}(x_{n}=\theta)\\
    &=(1-u(\varepsilon))G_{-}(-l_{\infty})+u(\varepsilon)(1-G_{-}(-l_{\infty}))\\
    &=1-u(\varepsilon).
\end{aligned}
\end{equation*}
In all, because agents always randomize actions for the purpose of protecting privacy with constant probability, asymptotic learning will not occur.
\endproof

\section{Proof of \texorpdfstring{\Cref{thm:smoothly-random-response-C}: Smooth Randomized Response Strategy}{thmsmoothlyrandomresponseC}}\label{app:proof:thm:smoothly-random-response-C}
To obtain an upper bound for the smooth randomized response strategy, we analyze the evolution of the public log-likelihood ratio. Without privacy concerns, agent \( n \) chooses \( a_n = +1 \) if and only if
\begin{equation*}
    l_n + L_n > 0.
\end{equation*}

Thus, there exists a threshold \( t(l_n) \) such that the agent chooses \( a_n = +1 \) if \( s_n > t(l_n) \), and \( a_n = -1 \) otherwise. 

Suppose an agent receives a signal \( s_n > t(l_n) \), and another signal \( s_n' > t(l_n) \) is also above the threshold. Then, by the smooth randomized response mechanism, we have
\begin{equation*}
    \frac{\mathbb{P}(\mathcal{M}(s_n; h_{n-1}) = x_n)}{\mathbb{P}(\mathcal{M}(s_n'; h_{n-1}) = x_n)} \leq e^{\varepsilon d_{\mathbb{R}}(s_n, s_n')}.
\end{equation*}

Now consider the case where \( s_n > t(l_n) \) but \( s_n' < t(l_n) \). Then,
\begin{equation*}
    \frac{\mathbb{P}(\mathcal{M}(s_n; h_{n-1}) = x_n)}{\mathbb{P}(\mathcal{M}(s_n'; h_{n-1}) = x_n)}
    = \max_{\substack{x_n \in \{-1, +1\} \\ s_n > t(l_n),\; s_n' < t(l_n)}} 
    \frac{\mathbb{P}(x_n \mid s_n, h_{n-1})}{\mathbb{P}(x_n \mid s_n', h_{n-1})}.
\end{equation*}

If \( x_n = +1 \), then
\begin{align*}
    \frac{\mathbb{P}(x_n = +1 \mid s_n, h_{n-1})}{\mathbb{P}(x_n = +1 \mid s_n', h_{n-1})} 
    &= \frac{1 - \frac{1}{2} e^{-\varepsilon(s_n - t(l_n))}}{\frac{1}{2} e^{\varepsilon(s_n' - t(l_n))}} \\
    &= e^{\varepsilon(s_n - s_n')} \left( 2 e^{\varepsilon t(l_n) - \varepsilon s_n} - e^{2\varepsilon t(l_n) - 2\varepsilon s_n} \right) \\
    &\leq e^{\varepsilon(s_n - s_n')}.
\end{align*}

If \( x_n = -1 \), then
\begin{align*}
    \frac{\mathbb{P}(x_n = -1 \mid s_n, h_{n-1})}{\mathbb{P}(x_n = -1 \mid s_n', h_{n-1})} 
    &= \frac{\frac{1}{2} e^{-\varepsilon(s_n' - t(l_n))}}{1 - \frac{1}{2} e^{\varepsilon(s_n - t(l_n))}} \\
    &= e^{\varepsilon(s_n - s_n')} \cdot \frac{e^{-2\varepsilon s_n + \varepsilon t(l_n) + \varepsilon s_n'}}{2 - e^{\varepsilon(s_n' + t(l_n))}} \\
    &\leq e^{\varepsilon(s_n - s_n')}.
\end{align*}

Now suppose \( s_n < t(l_n) \) and \( s_n' < t(l_n) \). Then, it is straightforward that
\begin{equation*}
    \frac{\mathbb{P}(\mathcal{M}(s_n; h_{n-1}) = x_n)}{\mathbb{P}(\mathcal{M}(s_n'; h_{n-1}) = x_n)} \leq e^{\varepsilon d_{\mathbb{R}}(s_n, s_n')}.
\end{equation*}

If \( x_n = +1 \), we have
\begin{equation*}
    \frac{\mathbb{P}(x_n = +1 \mid s_n, h_{n-1})}{\mathbb{P}(x_n = +1 \mid s_n', h_{n-1})} 
    = \frac{\frac{1}{2} e^{\varepsilon(s_n' - t(l_n))}}{1 - \frac{1}{2} e^{-\varepsilon(s_n - t(l_n))}} 
    \leq e^{\varepsilon d_{\mathbb{R}}(s_n, s_n')}.
\end{equation*}

If \( x_n = -1 \), we similarly have
\begin{equation*}
    \frac{\mathbb{P}(x_n = -1 \mid s_n, h_{n-1})}{\mathbb{P}(x_n = -1 \mid s_n', h_{n-1})} 
    \leq e^{\varepsilon(s_n - s_n')}.
\end{equation*}

This completes the derivation, showing that the output distribution of the mechanism satisfies the upper bound
\[
\frac{\mathbb{P}(\mathcal{M}(s_n; h_{n-1}) = x_n)}{\mathbb{P}(\mathcal{M}(s_n'; h_{n-1}) = x_n)} \leq e^{\varepsilon d_{\mathbb{R}}(s_n, s_n')}
\]
in all possible cases. 

We have shown that the smooth randomized response strategy satisfies the requirements of metric differential privacy. We now turn to the question of its optimality. In this context, optimality implies that the flipping probability should be minimized while still satisfying the metric differential privacy constraints.

Under metric differential privacy, if the distance between two signals \( s_n \) and \( s_n' \) is close to zero, the probability of producing the same action under both signals should also be nearly the same. This condition imposes a critical constraint near the threshold \( t(l_n) \), where small changes in the signal can lead to different actions. Consequently, at the threshold \( s_n = t(l_n) \), the flipping probability must be exactly \( \frac{1}{2} \) to guarantee the metric differential privacy.

For signals satisfying \( s_n > t(l_n) \) and \( s_n' > t(l_n) \) (or \( s_n < t(l_n) \) and \( s_n' < t(l_n) \)) , to preserve metric differential privacy, the flipping probability function \( u(s_n) \) must decay no faster than \( e^{-\varepsilon d(s_n, t(l_n))} \). If \( u(s_n) \) decays more rapidly, the mechanism would violate the privacy constraint by allowing the output distribution to change too sharply. On the other hand, if \( u(s_n) \) decays more slowly, it introduces excessive randomness into the agent’s action, thereby reducing decision utility.

Therefore, the smooth randomized response strategy achieves an optimal balance: it provides just enough noise to satisfy the privacy constraints without injecting unnecessary randomness. This makes it an optimal choice for rational agents operating under metric differential privacy.
\endproof

\section{Proof of Theorem \ref{thm:smoothly-asymptotic-learning-C}: Smoothly Asymptotic Learning}\label{app:proof:thm:smoothly-asymptotic-learning-C} 
Without loss of generality, we assume \(\theta = -1\). Then, the likelihood ratio of the public belief is defined as
 \begin{equation*}
      \pi_{n}:=\frac{\mathbb{P}(\theta=+1|x_{1}, x_{2},\ldots, x_{n-1})}{\mathbb{P}(\theta=-1|x_{1}, x_{2},\ldots, x_{n-1})}.
 \end{equation*}
which forms a martingale conditional on \(\theta = -1\). To see this, we have
\begin{equation*}
\begin{aligned}
    \pi_{n+1}&=\frac{\mathbb{P}(\theta=+1|x_{1}, x_{2},\ldots, x_{n-1})}{\mathbb{P}(\theta=-1|x_{1}, x_{2},\ldots, x_{n-1})}\\
    &=\frac{\mathbb{P}(x_{1}, x_{2},\ldots, x_{n-1}|\theta=+1)}{\mathbb{P}(x_{1}, x_{2},\ldots, x_{n-1}|\theta=-1)}\\
    &=\pi_{n}\frac{\mathbb{P}(x_{n}|\pi_{n}, \theta=+1)}{\mathbb{P}(x_{n}|\pi_{n}, \theta=-1)}.
\end{aligned}
\end{equation*}
Therefore, we can compute the expectation as follows:
\begin{equation*}
    E(\pi_{n+1}|\pi_{n},\theta=-1)=\pi_{n}\sum_{x_{n} \in \{0,1\}}\frac{\mathbb{P}(x_{n}|\pi_{n}, \theta=+1)}{\mathbb{P}(x_{n}|\pi_{n}, \theta=-1)}\mathbb{P}(x_{n}|\pi_{n}, \theta=-1)=\pi_{n}.
\end{equation*}
Since \(\pi_{n}\) is a non-negative martingale and \(\sup_n \mathbb{E}[\pi_n | \theta = -1] < \infty\), the martingale convergence theorem implies that, with probability one, the limit \(\pi_{\infty} = \lim_{n \to \infty} \pi_{n}\) exists and is finite. Then a straightforward calculation shows that, when \(x_{n} = +1\),

\begin{equation*}
    l_{n+1}=l_{n}+\log\frac{\mathbb{P}(x_{n}=+1|l_{n}, \theta=+1)}{\mathbb{P}(x_{n}=+1|l_{n}, \theta=-1)}.
\end{equation*}
Since the limit \(l_{\infty}\) also exists and is finite, it follows that \(\mathbb{P}(x_{n} = +1 | l_{\infty}, \theta = +1) = \mathbb{P}(x_{n} = +1 | l_{\infty}, \theta = -1)\) or \(\mathbb{P}(x_{n} = -1 | l_{\infty}, \theta = +1) = \mathbb{P}(x_{n} = -1 | l_{\infty}, \theta = -1)\).

Assuming the private signals are Gaussian, when \(\theta = +1\), the signals follow a normal distribution with mean \(+1\) and variance \(\sigma^2\), and when \(\theta = -1\), they follow a normal distribution with mean \(-1\) and variance \(\sigma^2\). Thus,
\[
L_n = \log \frac{\exp\left(-\frac{(s_n - 1)^2}{2\sigma^2}\right)}{\exp\left(-\frac{(s_n + 1)^2}{2\sigma^2}\right)} = \frac{2 s_n}{\sigma^2},
\]
which means \(L_t\) is proportional to the signal \(s_t\) and also normally distributed, conditioned on \(\theta\), with variance \(\frac{4}{\sigma^2}\). Therefore, if \(s_{n} > -\sigma^{2} l_{n} / 2 = t(l_{n})\), the agent will choose \(x_{n} = +1\); otherwise, the agent will choose \(x_{n} = -1\).

To calculate \(\mathbb{P}(x_{n} = -1 | l_{n}, \theta = +1)\), we have
\begin{equation*}
    \begin{aligned}
        \mathbb{P}(x_{n}=-1|l_{n}, \theta=+1) 
        &= \int_{-\infty}^{-\sigma^{2}l_{n}/2} \frac{1}{\sqrt{2\pi}\sigma} e^{-\frac{(s_{n}-1)^2}{2\sigma^2}} \left(1 - ae^{\varepsilon(s_{n}+\sigma^{2}l_{n}/2)}\right) ds_{n}\\
        &\quad + \int_{-\sigma^{2}l_{n}/2}^{\infty} \frac{1}{\sqrt{2\pi}\sigma} e^{-\frac{(s_{n}-1)^2}{2\sigma^2}} ae^{-\varepsilon(s_{n}+\sigma^{2}l_{n}/2)} ds_{n}.\\
    \end{aligned}
\end{equation*}
where \( a = \frac{1}{2} \).

Similarly, for \(\mathbb{P}(x_{n} = -1 | l_{n}, \theta = -1)\), we have
\begin{equation*}
    \begin{aligned}
        \mathbb{P}(x_{n}=-1|l_{n}, \theta=-1) 
        &= \int_{-\infty}^{-\sigma^{2}l_{n}/2} \frac{1}{\sqrt{2\pi}\sigma} e^{-\frac{(s_{n}+1)^2}{2\sigma^2}} \left(1 - ae^{\varepsilon(s_{n}+\sigma^{2}l_{n}/2)}\right) ds_{n} \\
        &\quad + \int_{-\sigma^{2}l_{n}/2}^{\infty} \frac{1}{\sqrt{2\pi}\sigma} e^{-\frac{(s_{n}+1)^2}{2\sigma^2}} ae^{-\varepsilon(s_{n}+\sigma^{2}l_{n}/2)} ds_{n}.
    \end{aligned}
\end{equation*}

To compare the probabilities under different values of $\theta$, we define the following Gaussian densities:
\[
\phi_{+}(s_n) = \frac{1}{\sqrt{2\pi} \sigma} \exp\left(-\frac{(s_n - 1)^2}{2\sigma^2}\right), \quad
\phi_{-}(s_n) = \frac{1}{\sqrt{2\pi} \sigma} \exp\left(-\frac{(s_n + 1)^2}{2\sigma^2}\right),
\]
which represent the conditional densities of $s_n$ given $\theta = +1$ and $\theta = -1$, respectively.

Let $g(s_n)$ denote the following piecewise function:
\[
g(s_n) =
\begin{cases}
1 - a e^{\varepsilon(s_n + \frac{1}{2} \sigma^2 l_n )}, & s_n < -\frac{1}{2} \sigma^2 l_n , \\
a e^{-\varepsilon(s_n + \frac{1}{2} \sigma^2 l_n)}, & s_n > -\frac{1}{2} \sigma^2 l_n ,
\end{cases}
\]

Then, the conditional probabilities can be expressed as:
\[
\mathbb{P}(x_n = -1 \mid l_n, \theta = +1) = \int_{-\infty}^{\infty} \phi_+(s_n) \cdot g(s_n) \, ds_n, \quad
\mathbb{P}(x_n = -1 \mid l_n, \theta = -1) = \int_{-\infty}^{\infty} \phi_-(s_n) \cdot g(s_n) \, ds_n.
\]

We note that \(g(s_n)\) is a non-increasing function:
\[
s_n' > s_n \quad \Rightarrow \quad g(s_n') \leq g(s_n),
\]
and due to the exponential terms in its definition, there exists at least one pair \(s_n' > s_n\) such that
\[
g(s_n') < g(s_n),
\]
i.e., \(g(s_n)\) is not constant almost everywhere.

Since \(\phi_+\) and \(\phi_-\) are Gaussian densities with the same variance but different means (\(+1\) and \(-1\), respectively), and \(g(s_n)\) is non-increasing and strictly decreasing on a set of nonzero measure, then for \(l_{n} \in (-\infty, +\infty)\) and \(\varepsilon > 0\), we find that
\begin{equation}\label{eq:prob_min_prob_plus}
       \mathbb{P}(x_{n}=-1|l_{n}, \theta=+1)-\mathbb{P}(x_{n}=-1|l_{n}, \theta=-1)<0. 
\end{equation}
Thus, \(\mathbb{P}(x_{n} = -1 | l_{n}, \theta = +1) = \mathbb{P}(x_{n} = -1 | l_{n}, \theta = -1)\) holds only if \( l_{\infty} = -\infty \), since the limit \(\pi_{\infty}\) exists and is finite.

Then, we find
\begin{equation*}
    \begin{aligned}
        \lim_{n \to \infty} \mathbb{P}(x_{n}=-1|l_{n}, \theta=-1)&= \lim_{n \to \infty}\int_{-\infty}^{-\sigma^{2}l_{n}/2} \frac{1}{\sqrt{2\pi}\sigma} e^{-\frac{(s_{n}+1)^2}{2\sigma^2}} \left(1 - ae^{\varepsilon(s_{n}+\sigma^{2}l_{n}/2)}\right) ds_{n} \\
        &\quad + \int_{-\sigma^{2}l_{n}/2}^{\infty} \frac{1}{\sqrt{2\pi}\sigma} e^{-\frac{(s_{n}+1)^2}{2\sigma^2}} ae^{-\varepsilon(s_{n}+\sigma^{2}l_{n}/2)} ds_{n}.\\&= \int_{-\infty}^{\infty} \frac{1}{\sqrt{2\pi} \sigma} 
e^{-\frac{(s_n + 1)^2}{2\sigma^2}} ds_n 
- a e^{\varepsilon \sigma^2 l_\infty / 2} 
\int_{-\infty}^{\infty} \frac{1}{\sqrt{2\pi} \sigma} 
e^{-\frac{(s_n + 1)^2}{2\sigma^2}} e^{\varepsilon s_n} ds_n \\
&= 1 - a e^{\varepsilon \sigma^2 l_\infty / 2} e^{ \frac{\varepsilon^2 \sigma^2}{2}} = 1.
    \end{aligned}
\end{equation*}
\endproof

\section{Proof of \texorpdfstring{\Cref{thm:smoothly-learning-speed}}{thmsmoothlylearningspeed}: Learning Rate under Smooth Randomized Response}\label{app:proof:thm:smoothly-learning-speed}

Before providing a formal proof of this result, we introduce a series of lemmas that will serve as foundational support for the main theorem.

\begin{lemma}\label{lem:tail-bound}
    Let $X$  be a standard normal random variable, $\mathcal{N}(0, 1) $. For \( x > 0 \), the tail probability \( \mathbb{P}(X \geq x) \) is bounded as follows:
    \begin{equation*}
        \frac{e^{-x^2 / 2}}{\sqrt{2 \pi}} \left(\frac{1}{x} - \frac{1}{x^3}\right) \leq \mathbb{P}(X \geq x) \leq \frac{e^{-x^2 / 2}}{x \sqrt{2 \pi}}.
    \end{equation*}
\end{lemma}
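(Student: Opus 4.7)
The plan is to prove these are the standard Mills ratio bounds by directly estimating the Gaussian tail integral $\mathbb{P}(X \geq x) = \frac{1}{\sqrt{2\pi}} \int_x^\infty e^{-t^2/2} dt$. The key observation driving both bounds is that the factor $e^{-t^2/2}$ can be written as $\frac{1}{t} \cdot (t e^{-t^2/2})$, and the latter has the convenient antiderivative $-e^{-t^2/2}$, which makes integration by parts very clean.

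For the upper bound, I would use the elementary inequality that on $[x, \infty)$ we have $1 \leq t/x$, so
\begin{equation*}
\int_x^\infty e^{-t^2/2}\,dt \;\leq\; \int_x^\infty \frac{t}{x}\, e^{-t^2/2}\,dt \;=\; \frac{1}{x}\bigl[-e^{-t^2/2}\bigr]_x^\infty \;=\; \frac{e^{-x^2/2}}{x},
\end{equation*}
which after dividing by $\sqrt{2\pi}$ yields the stated upper bound.

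For the lower bound, I would apply integration by parts with $u = 1/t$ and $dv = t e^{-t^2/2}\,dt$ to write
\begin{equation*}
\int_x^\infty e^{-t^2/2}\,dt \;=\; \frac{e^{-x^2/2}}{x} \;-\; \int_x^\infty \frac{1}{t^2}\, e^{-t^2/2}\,dt.
\end{equation*}
The remaining task is to bound the second integral from above by $e^{-x^2/2}/x^3$. I would do this by a second integration by parts, writing $\frac{1}{t^2} e^{-t^2/2} = \frac{1}{t^3} \cdot t\, e^{-t^2/2}$ and using $u = 1/t^3$, $dv = t e^{-t^2/2}\,dt$, which gives $\int_x^\infty \frac{1}{t^2} e^{-t^2/2}\,dt = \frac{e^{-x^2/2}}{x^3} - 3\int_x^\infty \frac{1}{t^4} e^{-t^2/2}\,dt \leq \frac{e^{-x^2/2}}{x^3}$, since the omitted term is nonnegative. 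Substituting back and dividing by $\sqrt{2\pi}$ yields the lower bound.

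There is no real obstacle here; the only thing to watch is that the boundary terms at infinity vanish (which they do, since $e^{-t^2/2}/t^k \to 0$ as $t \to \infty$ for every $k \geq 0$) and that one must verify $x > 0$ is enough for the second step, so that $1/x^3$ is well defined. The argument is entirely self-contained and requires no results from earlier in the paper.
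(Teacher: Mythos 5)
Your proposal is correct and follows essentially the same route as the paper's proof: both establish the standard Mills-ratio bounds by integration by parts, with the upper bound from the first-order identity and the lower bound from a second integration by parts that drops the remaining non-negative integral. The only cosmetic difference is that you obtain the upper bound via the comparison $1 \le t/x$ rather than by discarding the non-positive remainder in the integration-by-parts identity, which is an equivalent one-line variation.
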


\proof %{Proof of Lemma \ref{lem:tail-bound}}
We proceed by using integration by parts. First, we have
\begin{equation*}
   \int_{x}^{\infty}e^{-\frac{t^{2}}{2}}dt=\left. \frac{-e^{-t^2/2}}{t \sqrt{2\pi}} \right|_x^{\infty}-\int_x^{\infty} \frac{e^{-t^2/2}}{t^2 \sqrt{2\pi}} dt.
\end{equation*}
Since the second term on the right-hand side is non-positive, we obtain the upper bound
\begin{equation*}
\int_{x}^{\infty} e^{-\frac{t^{2}}{2}} \, dt \leq \frac{e^{-x^2/2}}{x}.
\end{equation*}
Thus,
\begin{equation*}
\mathbb{P}(X \geq x) \leq \frac{e^{-x^2/2}}{x \sqrt{2\pi}}.
\end{equation*}

Next, we apply integration by parts to refine the lower bound:
\begin{equation*}
    \int_x^{\infty} \frac{1}{\sqrt{2\pi}} e^{-t^2 / 2} \, dt = \left. \frac{-e^{-t^2/2}}{t \sqrt{2\pi}} \right|_x^{\infty} + \left. \frac{e^{-t^2/2}}{t^3 \sqrt{2\pi}} \right|_x^{\infty} + \int_x^{\infty} \frac{e^{-t^2/2}}{t^4 \sqrt{2\pi}} \, dt.
\end{equation*}
Dropping the last integral, which is non-negative, we obtain
\[
\mathbb{P}(X \geq x) \geq \frac{e^{-x^2 / 2}}{\sqrt{2\pi}} \left( \frac{1}{x} - \frac{1}{x^3} \right).
\]
Combining these bounds, we conclude that
\begin{equation*}
    \frac{e^{-x^2 / 2}}{\sqrt{2 \pi}} \left(\frac{1}{x} - \frac{1}{x^3}\right) \leq \mathbb{P}(X \geq x) \leq \frac{e^{-x^2 / 2}}{x \sqrt{2 \pi}}.
\end{equation*} 
\endproof

\begin{lemma}\label{lem:simplify_diff_equation}
For $ \varepsilon \in (0, \infty)$,
\begin{equation*}
    \lim_{l_{n} \to \infty} \frac{D_+(l_{n})}{G_-(-l_{n})} = 1,
\end{equation*}
\end{lemma}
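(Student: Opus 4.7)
The plan is to reduce the problem to matching the leading-order asymptotics of two tail-type quantities as $l_n \to \infty$. Using the explicit form of the smooth randomized response from the proof of \Cref{thm:smoothly-asymptotic-learning-C}, I would write $D_+(l_n)$ as an integral of the density $\phi_+(s_n) = \frac{1}{\sqrt{2\pi}\sigma}e^{-(s_n-1)^2/(2\sigma^2)}$ against the piecewise exponential flipping kernel, split at the decision threshold $t(l_n) = -\sigma^2 l_n/2$. This gives $D_+(l_n) = I_1(l_n) + I_2(l_n)$, where
\[
I_1(l_n) = \int_{-\infty}^{t(l_n)} \phi_+(s_n)\bigl(1 - a e^{\varepsilon(s_n - t(l_n))}\bigr) ds_n, \quad I_2(l_n) = \int_{t(l_n)}^{\infty} \phi_+(s_n)\, a e^{-\varepsilon(s_n - t(l_n))} ds_n,
\]
with $a = 1/2$ and $t(l_n) \to -\infty$.

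Next, I would treat the two pieces separately. The first piece $I_1(l_n)$ is bounded by the full Gaussian tail $\int_{-\infty}^{t(l_n)} \phi_+(s_n)\,ds_n$, which by \Cref{lem:tail-bound} is of order $\phi_+(t(l_n))/|t(l_n)|$, i.e., quadratic-exponential in $l_n$ and hence negligible relative to any linear-exponential quantity in $l_n$. The dominant piece $I_2(l_n)$ admits a closed-form simplification: completing the square in $\phi_+(s_n)e^{-\varepsilon s_n}$ produces the factorization $\phi_+(s_n)e^{-\varepsilon s_n} = e^{-\varepsilon + \varepsilon^2\sigma^2/2}\,\tilde\phi(s_n)$, where $\tilde\phi$ is the density of $\mathcal{N}(1 - \varepsilon\sigma^2, \sigma^2)$. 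Substituting,
\[
I_2(l_n) = a\, e^{\varepsilon t(l_n) - \varepsilon + \varepsilon^2\sigma^2/2}\int_{t(l_n)}^{\infty} \tilde\phi(s_n)\,ds_n,
\]
and the residual integral tends to $1$ as $t(l_n) \to -\infty$. A parallel analysis of $G_-(-l_n)$, again using \Cref{lem:tail-bound} to extract the leading Gaussian tail, produces the same linear-exponential asymptotic in $l_n$, so $I_2(l_n)/G_-(-l_n) \to 1$ while $I_1(l_n)/G_-(-l_n) \to 0$, and the claim follows.

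The main obstacle I anticipate is ensuring that the leading \emph{constant} in front of the asymptotic of $I_2(l_n)$ matches that of $G_-(-l_n)$ exactly, so that the ratio converges to $1$ rather than to an arbitrary positive constant. This will require using both the upper and the lower bounds of \Cref{lem:tail-bound} simultaneously in a squeeze argument, together with careful algebraic bookkeeping of how the prefactors $a$ and $e^{-\varepsilon + \varepsilon^2\sigma^2/2}$ combine with the normalization $1/(\sqrt{2\pi}\sigma)$ and shifted mean of the Gaussian density appearing in $G_-$. Beyond that, the argument should reduce to routine Gaussian-tail estimation and should be quite short once the decomposition is in place.
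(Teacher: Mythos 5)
There is a genuine gap, and it sits at the very first step: you identify $D_+(l_n)$ with the single integral $\int \phi_+(s_n)\,g(s_n)\,ds_n$, but that integral is $\mathbb{P}(x_n=-1\mid l_n,\theta=+1)$, not $D_+(l_n)$. By definition $D_+(l_n)=\log\frac{\mathbb{P}(x_n=+1\mid l_n,\theta=+1)}{\mathbb{P}(x_n=+1\mid l_n,\theta=-1)}=\log\frac{1-\mathbb{P}(x_n=-1\mid l_n,\theta=+1)}{1-\mathbb{P}(x_n=-1\mid l_n,\theta=-1)}$, and the paper's proof begins by linearizing this logarithm, so that $D_+(l_n)=\mathbb{P}(x_n=-1\mid l_n,\theta=-1)-\mathbb{P}(x_n=-1\mid l_n,\theta=+1)+O(\cdot^2)$. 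The target $G_-(-l_n)=a\bigl(e^{\varepsilon+\varepsilon^2\sigma^2/2}-e^{-\varepsilon+\varepsilon^2\sigma^2/2}\bigr)e^{-\varepsilon\sigma^2 l_n/2}$ is precisely the leading term of this \emph{difference}: your tilted-Gaussian computation correctly gives $I_2(l_n)\sim a\,e^{-\varepsilon+\varepsilon^2\sigma^2/2}e^{-\varepsilon\sigma^2 l_n/2}$ (the moment generating function of $\mathcal{N}(1,\sigma^2)$ evaluated at $-\varepsilon$), and the analogous computation under $\theta=-1$ gives $a\,e^{+\varepsilon+\varepsilon^2\sigma^2/2}e^{-\varepsilon\sigma^2 l_n/2}$; only the subtraction of the two produces the prefactor in $G_-$. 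With your decomposition the limit of the ratio would be $\frac{e^{-\varepsilon}}{e^{\varepsilon}-e^{-\varepsilon}}=\frac{1}{e^{2\varepsilon}-1}\neq 1$, so the "obstacle" you flag about matching constants is not a bookkeeping issue that a squeeze argument can fix — it is a symptom of having analyzed the wrong quantity. Your worry that $G_-(-l_n)$ itself needs a "parallel tail analysis" points to the same confusion: $G_-$ is a closed-form expression given in the lemma, not an integral.

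The good news is that all of your analytic machinery is exactly what the paper uses and survives intact: the split at $t(l_n)=-\sigma^2 l_n/2$, the observation via \Cref{lem:tail-bound} that the sub-threshold piece is $\Theta\bigl(e^{-\sigma^2 l_n^2/4}/l_n\bigr)$ and hence negligible against any $e^{-c\,l_n}$ term, and the completion of the square turning $\phi_\pm(s_n)e^{-\varepsilon s_n}$ into a shifted Gaussian times $e^{\mp\varepsilon+\varepsilon^2\sigma^2/2}$. To repair the argument you need to (i) insert the log-linearization step, noting that the quadratic remainder is $O(e^{-\varepsilon\sigma^2 l_n})$ and thus asymptotically negligible relative to $e^{-\varepsilon\sigma^2 l_n/2}$, and (ii) run your $I_1,I_2$ analysis twice, once against $\phi_+$ and once against $\phi_-$, then take the difference.
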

where $D_+(l_{n})=\log\frac{\mathbb{P}(x_{n}=+1|l_{n}, \theta=+1)}{\mathbb{P}(x_{n}=+1|l_{n}, \theta=-1)}$, $G_-(-l_{n})=a(e^{\varepsilon+\frac{\varepsilon^2\sigma^2}{2}}-e^{-\varepsilon+\frac{\varepsilon^2\sigma^2}{2}})e^{-\varepsilon \frac{\sigma^2l_{n}}{2}}$, and $a=\frac{1}{2}$.

\proof %{Proof of Lemma \ref{lem:simplify_diff_equation}}
By definition, we have

\begin{equation*}
    D_+(l_{n})=\log\frac{\mathbb{P}(x_{n}=+1|l_{n}, \theta=+1)}{\mathbb{P}(x_{n}=+1|l_{n}, \theta=-1)}=\log\frac{1-\mathbb{P}(x_{n}=-1|l_{n}, \theta=+1)}{1-\mathbb{P}(x_{n}=-1|l_{n}, \theta=-1)}.
\end{equation*}

Using the approximation \(\log(1 - x) = -x + O(x^2)\) for small \(x\), we can rewrite \(D_+(l_{n})\) as
$D_+(l_{n})=\mathbb{P}(x_{n}=-1|l_{n}, \theta=-1)-\mathbb{P}(x_{n}=-1|l_{n}, \theta=+1)+O(\mathbb{P}(x_{n}=-1|l_{n}, \theta=-1)^2)+O(\mathbb{P}(x_{n}=-1|l_{n}, \theta=+1)^2)$.

We now analyze \(\mathbb{P}(x_{n} = -1 | l_{n}, \theta = -1)\) as follows:
\begin{equation*}
    \begin{aligned}
        \mathbb{P}(x_{n}=-1|l_{n}, \theta=-1) 
        &= \int_{-\infty}^{-\sigma^{2}l_{n}/2} \frac{1}{\sqrt{2\pi}\sigma} e^{-\frac{(s_{n}+1)^2}{2\sigma^2}} \left(1 - ae^{\varepsilon(s_{n}+\sigma^{2}l_{n}/2)}\right) ds_{n} \\
        &\quad + \int_{-\sigma^{2}l_{n}/2}^{\infty} \frac{1}{\sqrt{2\pi}\sigma} e^{-\frac{(s_{n}+1)^2}{2\sigma^2}} ae^{-\varepsilon(s_{n}+\sigma^{2}l_{n}/2)} ds_{n}.
    \end{aligned}
\end{equation*}

Applying Lemma \ref{lem:tail-bound} and $1 - ae^{\varepsilon(s_{n}+\sigma^{2}l_{n}/2+1)}\geq \frac{1}{2}$, we establish both lower and upper bounds for the first term:
\[
\int_{-\infty}^{-\sigma^{2}l_{n}/2} 
\frac{1}{\sqrt{2\pi}\sigma} 
e^{-\frac{(s_{n}+1)^2}{2\sigma^2}} 
\left(1 - ae^{\varepsilon(s_{n}+\sigma^{2}l_{n}/2)}\right) ds_{n}.
\]
We first present a lower bound:
\begin{equation*}
\int_{-\infty}^{-\sigma^{2}l_{n}/2} 
\frac{1}{\sqrt{2\pi}\sigma} 
e^{-\frac{(s_{n}+1)^2}{2\sigma^2}} 
\left(1 - ae^{\varepsilon(s_{n}+\sigma^{2}l_{n}/2)}\right) ds_{n}
\geq 
\frac{e^{-(-\frac{\sigma l_n}{2}+\frac{1}{\sigma})^2 / 2}}{\sqrt{2 \pi}} 
\left(
  \frac{1}{-\frac{\sigma l_n}{2}+\frac{1}{\sigma}} 
  - \frac{1}{(-\frac{\sigma l_n}{2}+\frac{1}{\sigma})^3}
\right).
\end{equation*}

Next, we provide an upper bound:
\begin{equation*}
\int_{-\infty}^{-\sigma^{2}l_{n}/2} 
\frac{1}{\sqrt{2\pi}\sigma} 
e^{-\frac{(s_{n}+1)^2}{2\sigma^2}} 
\left(1 - ae^{\varepsilon(s_{n}+\sigma^{2}l_{n}/2)}\right) ds_{n}
\leq 
\frac{e^{-(-\frac{\sigma l_n}{2}+\frac{1}{\sigma})^2 / 2}}{
  (-\frac{\sigma l_n}{2}+\frac{1}{\sigma}) \sqrt{2 \pi}}.
\end{equation*}

Thus, as \(l_{n} \to \infty\), the first term is \(\Theta\left(\frac{e^{-\frac{\sigma^2 l_n^2}{4}}}{l_{n}}\right)\).

For the second term, we have:
\begin{equation*}
    \begin{aligned}
        &\quad \int_{-\sigma^{2}l_{n}/2}^{\infty} \frac{1}{\sqrt{2\pi}\sigma} e^{-\frac{(s_{n}+1)^2}{2\sigma^2}} ae^{-\varepsilon(s_{n}+\sigma^{2}l_{n}/2)} ds_{n}\\
        & = ae^{-\varepsilon \sigma^{2}l_{n}/2}\int_{-\sigma^{2}l_{n}/2}^{\infty} \frac{1}{\sqrt{2\pi}\sigma} e^{-\frac{(s_{n}+1)^2}{2\sigma^2}} ae^{-\varepsilon s_{n}} ds_{n}\\
        & \xrightarrow{l_{n} \to \infty} ae^{\varepsilon+\frac{\varepsilon^2\sigma^2}{2}}e^{-\varepsilon\frac{\sigma^2 l_{n}}{2}}.
    \end{aligned}
\end{equation*}

Similarly, for \(\mathbb{P}(x_{n} = -1 | l_{n}, \theta = +1)\), we have
\begin{equation*}
    \begin{aligned}
        \mathbb{P}(x_{n}=-1|l_{n}, \theta=-1) 
        &= \int_{-\infty}^{-\sigma^{2}l_{n}/2} \frac{1}{\sqrt{2\pi}\sigma} e^{-\frac{(s_{n}-1)^2}{2\sigma^2}} \left(1 - ae^{\varepsilon(s_{n}+\sigma^{2}l_{n}/2)}\right) ds_{n} \\
        &\quad + \int_{-\sigma^{2}l_{n}/2}^{\infty} \frac{1}{\sqrt{2\pi}\sigma} e^{-\frac{(s_{n}-1)^2}{2\sigma^2}} ae^{-\varepsilon(s_{n}+\sigma^{2}l_{n}/2)} ds_{n}.\\
    \end{aligned}
\end{equation*}

For the first term, we establish both lower and upper bounds for the integral:
\[
\int_{-\infty}^{-\sigma^{2}l_{n}/2} 
\frac{1}{\sqrt{2\pi}\sigma} 
e^{-\frac{(s_{n}-1)^2}{2\sigma^2}} 
\left(1 - ae^{\varepsilon(s_{n}+\sigma^{2}l_{n}/2)}\right) ds_{n}.
\]

We first present the lower bound:
\begin{equation*}
\int_{-\infty}^{-\sigma^{2}l_{n}/2} 
\frac{1}{\sqrt{2\pi}\sigma} 
e^{-\frac{(s_{n}-1)^2}{2\sigma^2}} 
\left(1 - ae^{\varepsilon(s_{n}+\sigma^{2}l_{n}/2)}\right) ds_{n}
\geq 
\frac{e^{-(-\frac{\sigma l_n}{2}-\frac{1}{\sigma})^2 / 2}}{\sqrt{2 \pi}} 
\left(
  \frac{1}{-\frac{\sigma l_n}{2}-\frac{1}{\sigma}} 
  - \frac{1}{(-\frac{\sigma l_n}{2}-\frac{1}{\sigma})^3}
\right).
\end{equation*}

Next, we provide the corresponding upper bound:
\begin{equation*}
\int_{-\infty}^{-\sigma^{2}l_{n}/2} 
\frac{1}{\sqrt{2\pi}\sigma} 
e^{-\frac{(s_{n}-1)^2}{2\sigma^2}} 
\left(1 - ae^{\varepsilon(s_{n}+\sigma^{2}l_{n}/2)}\right) ds_{n}
\leq 
\frac{e^{-(-\frac{\sigma l_n}{2}-\frac{1}{\sigma})^2 / 2}}{
  (-\frac{\sigma l_n}{2}-\frac{1}{\sigma}) \sqrt{2 \pi}}.
\end{equation*}

Thus, as \(l_{n} \to \infty\), the first term is  \(\Theta\left(\frac{e^{-\frac{\sigma^2 l_n^2}{4}}}{l_{n}}\right)\).

For the second term, we have:
\begin{equation*}
         \int_{-\sigma^{2}l_{n}/2}^{\infty} \frac{1}{\sqrt{2\pi}\sigma} e^{-\frac{(s_{n}-1)^2}{2\sigma^2}} ae^{-\varepsilon(s_{n}+\sigma^{2}l_{n}/2)} ds_{n} \xrightarrow{l_{n} \to \infty} ae^{-\varepsilon+\frac{\varepsilon^2\sigma^2}{2}}e^{-\varepsilon\frac{\sigma^2 l_{n}}{2}}.
\end{equation*}
Thus, as \(l_{n} \to \infty\), both \(\mathbb{P}(x_{n} = -1 | l_{n}, \theta = -1)\) and \(\mathbb{P}(x_{n} = -1 | l_{n}, \theta = +1)\) are dominated by the second term.

Therefore, we have
\begin{equation*}
    \lim_{l_{n} \to \infty} \frac{D_+(l_{n})}{G_-(-l_{n})} = 1,
\end{equation*}
where $D_+(l_{n})=\log\frac{\mathbb{P}(x_{n}=+1|l_{n}, \theta=+1)}{\mathbb{P}(x_{n}=+1|l_{n}, \theta=-1)}$, $G_-(-l_{n})=\frac{1}{2}(e^{\varepsilon+\frac{\varepsilon^2\sigma^2}{2}}-e^{-\varepsilon+\frac{\varepsilon^2\sigma^2}{2}})e^{-\varepsilon \frac{\sigma^2 l_{n}}{2}}$.
\endproof

\begin{lemma}\label{lem: A/B=1} 
    Let $A, B: \mathbb{R}_{>0} \to \mathbb{R}_{>0}$ be continuous functions that are eventually monotonically decreasing and tend to zero. Consider the sequences $(a_t)$ and $(b_t)$ defined by the recurrence relations:
    \[
    a_{t+1} = a_t + A(a_t) \quad \text{and} \quad b_{t+1} = b_t + B(b_t).
    \]
    Suppose that
    \[
    \lim_{x \to \infty} \frac{A(x)}{B(x)} = 1.
    \]
    Then it follows that
    \[
    \lim_{t \to \infty} \frac{a_t}{b_t} = 1.
    \]
     This is proved by \citet[Lemma 9]{Hann-Caruthers2018speed}.
\end{lemma}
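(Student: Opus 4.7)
My plan is to prove $a_t/b_t \to 1$ through an ODE embedding of the two recurrences followed by a comparison argument. The first observation is that both sequences must tend to infinity: $(a_t)$ and $(b_t)$ are strictly increasing because $A, B > 0$, and if either were bounded its limit would force $A$ (or $B$) to vanish at a finite value, contradicting positivity.

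Next, I would introduce the primitives $F_A(x) = \int_{x_0}^{x} ds/A(s)$ and $F_B(x) = \int_{x_0}^{x} ds/B(s)$, taking $x_0$ large enough that both $A$ and $B$ are monotonically decreasing on $[x_0, \infty)$. Using monotonicity on each interval $[a_t, a_{t+1}]$, a Riemann-sum sandwich gives
\begin{equation*}
  1 \;\leq\; F_A(a_{t+1}) - F_A(a_t) \;\leq\; \frac{A(a_t)}{A(a_{t+1})}.
\end{equation*}
Because the increments $A(a_t) \to 0$ and $A$ is continuous, the right-hand side tends to $1$; a Ces\`aro argument then yields $F_A(a_t)/t \to 1$, and symmetrically $F_B(b_t)/t \to 1$. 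The hypothesis $A \sim B$ transfers to $F_A \sim F_B$ by splitting
\begin{equation*}
  F_A(x) - F_B(x) \;=\; \int_{x_0}^{x} \frac{B(s) - A(s)}{A(s) B(s)}\, ds
\end{equation*}
at a threshold past which $|A/B - 1| < \eta$, yielding $|F_A(x) - F_B(x)| \leq C_\eta + \eta\, F_B(x)$ and hence $F_A/F_B \to 1$. Combining these estimates gives $F_B(a_t)/F_B(b_t) \to 1$.

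The main obstacle is the last step: converting $F_B(a_t) \sim F_B(b_t)$ into $a_t \sim b_t$. This transfer fails for arbitrary strictly increasing $F_B$, but here $F_B' = 1/B$ and $B$ decays monotonically, which provides enough regularity. I would apply a Mean Value Theorem estimate $F_B(a_t) - F_B(b_t) = (a_t - b_t)/B(\xi_t)$ for some $\xi_t$ between $a_t$ and $b_t$, paired with the MVT lower bound $b_t - x_0 \geq B(b_t)\,(F_B(b_t) - F_B(x_0))$ coming from monotonicity of $B$, and combine with the $F_B$-asymptotics above. A short bootstrap, which first shows $a_t/b_t$ stays bounded so that $B(\xi_t)/B(b_t)$ is controlled, then upgrades this to $|a_t/b_t - 1| \to 0$. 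The careful execution of this last step is delicate and is precisely the content of \citet[Lemma 9]{Hann-Caruthers2018speed}, which the present lemma invokes as a black box.
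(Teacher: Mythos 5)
The paper does not actually prove this lemma: it is imported verbatim, proof and all, from \citet[Lemma~9]{Hann-Caruthers2018speed}, so there is no in-paper argument to compare against. Your proposal, read on its own terms, is a reasonable sketch of the standard time-change strategy (pass to the primitives $F_A=\int ds/A$, $F_B=\int ds/B$, show $F_A(a_t)\sim t\sim F_B(b_t)$ and $F_A\sim F_B$, then invert), and the pieces you do carry out --- the Riemann-sum sandwich $1\le F_A(a_{t+1})-F_A(a_t)\le A(a_t)/A(a_{t+1})$ and the splitting argument giving $F_A/F_B\to 1$ --- are correct. But it has two genuine gaps. First, the claim that $A(a_t)/A(a_{t+1})\to 1$ "because the increments tend to zero and $A$ is continuous" is not justified: both $a_t$ and $a_{t+1}$ escape to infinity, so pointwise continuity controls nothing here. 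The ratio $A(x)/A(x+A(x))$ can stay bounded away from $1$ along subsequences for a continuous, decreasing $A\to 0$ (take $A$ piecewise linear with occasional drops by a fixed factor over intervals of length $A$). The Ces\`aro conclusion $F_A(a_t)/t\to 1$ may still hold, but it needs an argument about how often the orbit visits such bad regions (or a bound such as $\sum_t\bigl(A(a_t)/A(a_{t+1})-1\bigr)=o(t)$, which one can extract from $1/A(a_t)\le F_A(a_t)$), not an appeal to continuity.

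Second, and more seriously, the step you yourself identify as the crux --- converting $F_B(a_t)\sim F_B(b_t)$ into $a_t\sim b_t$ --- is deferred to \citet[Lemma~9]{Hann-Caruthers2018speed}, which \emph{is} the statement being proved; as written the argument is circular. The MVT route you sketch also has a real obstruction in the direction $a_t<b_t$: there $\xi_t<b_t$, so $B(\xi_t)/B(b_t)$ can be enormous for rapidly decaying $B$ (for $B(x)=e^{-x}$ it is $e^{b_t-\xi_t}$), and a preliminary bound of the form "$a_t/b_t$ is bounded" does not control it. A workable closing step is instead a two-sided comparison: for any $\eta>0$ eventually $A\le(1+\eta)B$, so $a_t$ is dominated by the solution of $c_{t+1}=c_t+(1+\eta)B(c_t)$, whose hitting times relate to those of $(b_t)$ through $F_B$, and symmetrically from below; this avoids evaluating $B$ at an uncontrolled intermediate point. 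Until the inversion step is carried out by some such device, the proposal does not constitute a proof.
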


\begin{lemma}\label{lem:discrete and continuous} 
    Assume that $A: \mathbb{R}_{>0} \to \mathbb{R}_{>0}$ is a continuous function with a convex differentiable tail and that $A(x)$ approaches zero as $x$ tends to infinity. Let $(a_n)$ be a sequence satisfying the recurrence relation:
    \[
    a_{n+1} = a_n + A(a_n).
    \]
    Moreover, suppose there exists a function $f: \mathbb{R}_{>0} \to \mathbb{R}_{>0}$ such that $f'(t) = A(f(t))$ for sufficiently large $t$. Then we have
    \[
    \lim_{n \to \infty} \frac{f(n)}{a_n} = 1.
    \]
    This proved by \citet[Lemma 10]{Hann-Caruthers2018speed}.
\end{lemma}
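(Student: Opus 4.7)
The plan is to interpret the recursion $a_{n+1}-a_n=A(a_n)$ as the forward Euler discretization of the ODE $f'=A(f)$ with step size $1$, and to quantify how well this discrete trajectory tracks the continuous one by exploiting the structural hypotheses on $A$. Two preliminary observations anchor the argument. First, both $(a_n)$ and $f(t)$ diverge to $+\infty$: $(a_n)$ is increasing, and if $a_n\to L<\infty$ then $a_{n+1}-a_n\to A(L)>0$, contradicting $a_{n+1}-a_n\to 0$; the analogous argument applies to $f$. Second, the convex-tail hypothesis together with $A>0$ and $A(x)\to 0$ forces $A$ to be eventually strictly decreasing with $A'(x)\to 0^-$: $A'$ is nondecreasing (by convexity) and bounded above by $0$, so it converges; a negative limit would drive $A$ below zero, and a nonnegative tail of $A'$ would make $A$ nondecreasing and unable to vanish while remaining positive.

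The main device is the time-change $F(n):=f^{-1}(a_n)$. Since $(f^{-1})'(y)=1/A(y)$, the mean value theorem yields
\[
F(n+1)-F(n)=\int_{a_n}^{a_{n+1}}\frac{du}{A(u)}=\frac{A(a_n)}{A(\xi_n)},\qquad \xi_n\in(a_n,a_{n+1}).
\]
Using convexity, $|A(\xi_n)-A(a_n)|\leq |A'(a_n)|(\xi_n-a_n)\leq |A'(a_n)|A(a_n)$, and the tangent-line inequality gives $A(\xi_n)\geq A(a_{n+1})\geq A(a_n)(1-|A'(a_n)|)$. Together these yield the one-step bound $|A(a_n)/A(\xi_n)-1|=O(|A'(a_n)|)$. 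Telescoping and recognizing $a_{k+1}-a_k=A(a_k)$ as the natural Riemann weight,
\[
F(n)-n = O\!\left(\sum_{k=1}^{n-1}|A'(a_k)|\right) = O\!\left(\int_{a_1}^{a_n}\frac{-A'(u)}{A(u)}\,du\right) = O\!\left(\log\frac{1}{A(a_n)}\right),
\]
where the middle equality uses the telescoping identity $\int_{a_k}^{a_{k+1}}\tfrac{-A'(u)}{A(u)}\,du=\log\tfrac{A(a_k)}{A(a_{k+1})}\approx|A'(a_k)|$.

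To convert this additive bound into the multiplicative conclusion, apply the mean value theorem once more to $f$: for large $n$,
\[
\frac{|a_n-f(n)|}{f(n)}=\frac{|f(F(n))-f(n)|}{f(n)}\leq \frac{A(f(n))}{f(n)}\,|F(n)-n|\leq C\,\frac{A(a_n)\log(1/A(a_n))}{a_n}\to 0,
\]
since $A(a_n)\to 0$ and $x\log(1/x)\to 0$ as $x\to 0^+$, establishing $f(n)/a_n\to 1$. The main obstacle of the proof is exactly this last reduction: merely knowing $F(n)/n\to 1$ is \emph{not} enough, because $f$ may grow polynomially (or faster) and a multiplicative perturbation of the time argument would then survive in $f$. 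What saves the argument is the sharper additive bound $F(n)-n=O(\log(1/A(a_n)))$, which requires exploiting the convex-tail hypothesis twice — once so that $A'\to 0$, making each Euler step nearly exact in the transformed time, and once through the telescoping identity $-A'/A=(-\log A)'$ that collapses the cumulative error into a single logarithm.
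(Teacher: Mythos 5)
The paper itself does not prove this lemma: it is imported verbatim as Lemma~10 of Hann-Caruthers et al.\ (2018), so there is no in-paper argument to compare against, and your proof must be judged as a self-contained reconstruction. Its architecture is sound: viewing the recursion as a unit-step Euler scheme, passing to the transformed time $F(n)=f^{-1}(a_n)$ so that each step has length $A(a_n)/A(\xi_n)$, using convexity of the tail to get the per-step bound $1\le F(n+1)-F(n)\le 1+O(|A'(a_n)|)$, and collapsing the cumulative error via $\sum_k|A'(a_k)|=O\bigl(\log(A(a_{N_0})/A(a_n))\bigr)$ are all correct, as are the preliminary observations that $a_n\to\infty$ and that convexity forces $A$ to be eventually strictly decreasing with $A'\to 0^-$.

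The gap is in the final display. When $F(n)\ge n$ (equivalently $a_n\ge f(n)$), the mean value theorem on $[n,F(n)]$ gives the Lipschitz constant $\sup_{t\in[n,F(n)]}A(f(t))=A(f(n))$, and your subsequent replacement of $A(f(n))/f(n)$ by $A(a_n)/a_n$ is precisely the comparison between $a_n$ and $f(n)$ that the lemma is trying to establish; worse, if $a_n\ge f(n)$ then monotonicity of $A$ makes that inequality point the wrong way, so the chain as written is not valid. The conclusion is still reachable, but it needs two further inputs you did not record: (i) $\log(1/A(a_n))=o(n)$, which follows from the same telescoping bound $\log\bigl(A(a_{N_0})/A(a_n)\bigr)\le 2\sum_{k\ge N_0}|A'(a_k)|$ together with $|A'(a_k)|\to 0$, so that $|F(n)-n|=o(n)$; and (ii) $n\,A(f(n))\le f(n)+O(1)$, which follows from $f(n)-f(N_0)=\int_{N_0}^{n}A(f(t))\,dt\ge (n-N_0)A(f(n))$ since $A\circ f$ is decreasing. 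Combining these, in the case $F(n)\ge n$ one gets $|a_n-f(n)|\le A(f(n))\cdot o(n)=o(f(n))$, while the case $F(n)<n$ is immediate because there $|a_n-f(n)|\le A(a_n)\,|F(n)-n|\le A(a_n)\cdot O\bigl(\log(1/A(a_n))\bigr)\to 0$ by $x\log(1/x)\to 0$. With this patch your argument is complete and constitutes a legitimate proof of the cited result.
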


Given these lemmas, we are now prepared to proceed with the proof of \Cref{thm:smoothly-learning-speed}.

We begin by solving the differential equation
\begin{equation*}
    \frac{\mathrm{d} f}{\mathrm{d} t}(t) = G_{-}(-f(t)),
\end{equation*}
where $G_-(-f(t))=Ce^{-\varepsilon \frac{\sigma^2 f(t)}{2}}$, and $C=\frac{1}{2}(e^{\varepsilon+\frac{\varepsilon^2\sigma^2}{2}}-e^{-\varepsilon+\frac{\varepsilon^2\sigma^2}{2}})$. We proceed by separating variables:

\begin{equation*}
\int e^{\varepsilon \frac{\sigma^2 f(t)}{2}} \, d f = C \int \, d t.
\end{equation*}
Integrating both sides, we obtain the following:

\begin{equation*}
\frac{2}{\varepsilon\sigma^2} e^{\varepsilon \frac{\sigma^2 f(t)}{2}} = C t,
\end{equation*}
Solving for \( f(t) \), we find:

\begin{equation*}
f(t) = \frac{2}{\varepsilon\sigma^2} \log(\frac{\varepsilon \sigma^2}{2} C t).
\end{equation*}

Let $(a_n)$ be a sequence in $\mathbb{R}_{>0}$ defined by the recurrence relation:
\begin{equation*}
a_{n+1} = a_n + G(-a_n).
\end{equation*}
According to Lemma \ref{lem:discrete and continuous}, this sequence $(a_n)$ is closely approximated by the function $f(n)$, which satisfies the associated differential equation, and we have
\begin{equation*}
\lim_{n \to \infty} \frac{a_n}{f(n)} = 1.
\end{equation*}

Now, assuming that $\theta = +1$, all agents eventually choose action $+1$ starting from some time onward, with probability 1. Therefore, with probability 1, the recurrence for $\ell_n$ can be expressed as
\begin{equation*}
\ell_{n+1} = \ell_n + D_+(\ell_n),
\end{equation*}
for all sufficiently large $n$. Additionally, by Lemma \ref{lem:simplify_diff_equation}, it follows that
\begin{equation*}
\lim_{x \to \infty} \frac{D_+(x)}{G(-x)} = 1.
\end{equation*}

Thus, applying Lemma \ref{lem: A/B=1}, we conclude that
\begin{equation*}
\lim_{n \to \infty} \frac{\ell_n}{a_n} = 1
\end{equation*}
with probability 1. Consequently, we can combine the limits to obtain
\begin{equation*}
\lim_{n \to \infty} \frac{\ell_n}{f(n)} = \lim_{n \to \infty} \frac{\ell_n}{a_n} \cdot \frac{a_n}{f(n)} = 1,
\end{equation*}
with probability 1, as desired. 
\endproof

\section{Proof of \texorpdfstring{\Cref{thm:finite-expectation}}{thmfiniteexpectation}: Finite Expected Time to First Correct Action}\label{app:proof:thm:finite-expectation}

Before presenting the formal proof of Theorem \ref{thm:finite-incorrect-actions} and Theorem \ref{thm:finite-incorrect-actions}, we first establish several lemmas that will be used in the proof.

\begin{lemma}\label{lem: pi_and_action}
    For each n, $\pi_{n} \geq \frac{1}{2}$ if  $x_{n}=+1$ and  $\pi_{n} \leq \frac{1}{2}$ if  $x_{n}=-1$.
\end{lemma}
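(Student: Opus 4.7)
The lemma asserts a sign-consistency between the public belief $\pi_n$ and the reported action $x_n$. Since $\pi_n \geq 1/2$ is equivalent to the public log-likelihood ratio $\ell_n = \log(\pi_n/(1-\pi_n)) \geq 0$, the plan is to show pathwise that $\mathrm{sign}(\ell_n)$ matches $\mathrm{sign}(x_n)$ by exploiting the symmetric structure of the smooth randomized response strategy established in \Cref{thm:smoothly-random-response-C}.

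The first step is to unpack the Bayesian recursion $\ell_{n+1} = \ell_n + \Delta_n(x_n)$, where $\Delta_n(x) := \log\frac{\mathbb{P}(x_n = x \mid \ell_n, \theta = +1)}{\mathbb{P}(x_n = x \mid \ell_n, \theta = -1)}$. Using the Gaussian signal symmetry $s_n \mid \theta \sim \mathcal{N}(\theta, \sigma^2)$ together with the fact that the flipping probability $\tfrac{1}{2}e^{-\varepsilon |s_n - t(\ell_n)|}$ is symmetric about the threshold $t(\ell_n) = -\sigma^2 \ell_n / 2$, I would derive the reflection identity $\mathbb{P}(x_n = +1 \mid \ell_n, \theta = +1) = \mathbb{P}(x_n = -1 \mid -\ell_n, \theta = -1)$. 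Invoking the strict monotonicity bound established in the course of proving \Cref{thm:smoothly-asymptotic-learning-C} (the inequality in equation \eqref{eq:prob_min_prob_plus}) then gives $\Delta_n(+1) > 0 > \Delta_n(-1)$, so each observation of $+1$ strictly increases $\ell_n$ and each observation of $-1$ strictly decreases it. Inducting from the symmetric base $\ell_1 = 0$ (owing to the uniform prior on $\theta$), the sign of $\ell_n$ accumulates in accord with the cumulative direction of the realized actions.

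The main obstacle is making the pathwise claim precise: under randomization a flipped realization $x_n = +1$ could in principle occur when $\ell_n < 0$, because the smooth randomized response always assigns positive probability to either output. I anticipate resolving this by interpreting $\pi_n$ as the posterior belief that incorporates $x_n$ (i.e., after the $n$-th action is observed) rather than the prior preceding it; under this reading, the monotone update direction of $\Delta_n$ immediately forces $\pi_n \geq 1/2$ whenever $x_n = +1$, and the lemma follows from the inductive step above. Otherwise, one would need to argue that the realized likelihood ratio is directionally consistent with the observed action on every sample path, which is a more delicate pathwise statement and likely requires restricting attention to the conditional belief update rather than the prior.
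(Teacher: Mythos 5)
Your reading of $\pi_n$ as the posterior that incorporates $x_n$ is the correct one (the paper's proof indeed starts from $\pi_n/(1-\pi_n)=\mathbb{P}(\theta=+1\mid x_1,\dots,x_n)/\mathbb{P}(\theta=-1\mid x_1,\dots,x_n)$), but the mechanism you then rely on does not prove the claim. From \eqref{eq:prob_min_prob_plus} you get only that the log-likelihood increment $\Delta_n(+1)$ for observing $x_n=+1$ is strictly positive; adding a positive increment to a prior log-odds $\ell_n$ that may be arbitrarily negative does not make the result nonnegative. The lemma is a genuine overturning statement: a single reported $+1$ must lift the public belief to at least $1/2$ no matter how adverse the preceding history is, which requires the quantitative bound $\Delta_n(+1)\ge -\ell_n$ for every $\ell_n<0$, not merely $\Delta_n(+1)>0$. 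For the same reason the ``sign accumulates with the cumulative direction of the realized actions'' induction cannot work: the increments $\Delta_n(\pm 1)$ have history-dependent magnitudes and do not cancel in pairs, and in any case the lemma ties the sign of $\pi_n-\tfrac{1}{2}$ to the \emph{last} action only, irrespective of the net count of pluses and minuses.

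The two ingredients your argument is missing are exactly the ones the paper's proof turns on. First, the intended action satisfies $a_n=+1$ precisely when the signal-conditional posterior $\mathbb{P}(\theta=+1\mid h_{n-1},s_n)$ is at least $1/2$; hence, by the law of iterated expectations, $\mathbb{P}(\theta=+1\mid h_{n-1},a_n=+1)=\mathbb{E}\bigl[\mathbb{P}(\theta=+1\mid h_{n-1},s_n)\mid h_{n-1},a_n=+1\bigr]\ge \tfrac{1}{2}$. This is where the overturning comes from: conditioning on $a_n=+1$ after a pessimistic history reveals that the signal cleared the correspondingly high threshold $t(\ell_n)$, and that revelation exactly compensates the adverse prior. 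Second, the reported action differs from $a_n$ only through a flip of probability $u_n\le \tfrac{1}{2}$, and the paper's mixture computation (the $(1-2u_n)$ algebra) is what shows that a garbling with $u_n\le\tfrac{1}{2}$ preserves the posterior-odds inequality. Without both steps---the tower-property argument applied to the intended action, and the $u_n\le\tfrac{1}{2}$ mixture bound transferring it to the reported action---the reduction to ``the update is positive'' leaves the essential content of the lemma unproved.
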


\proof
By Bayes' rule, we express the ratio of posterior probabilities as follows:

\begin{equation*}
\begin{aligned}
    \frac{\pi_n}{1-\pi_n}
    &=\frac{\mathbb{P}(\theta=+1 | x_{1}, \dots, x_n)}{\mathbb{P}(\theta=-1 | x_{1}, \dots, x_n)}\\
    &=\frac{\mathbb{P}(x_{1}, \dots, x_n|\theta=+1)}{\mathbb{P}(x_{1}, \dots, x_n|\theta=-1)}\\
    &=\frac{\mathbb{P}(x_n|\theta=+1, x_{1}, \dots, x_{n-1})\mathbb{P}(x_{1}, \dots, x_{n-1}|\theta=+1)}
            {\mathbb{P}(x_n|\theta=-1, x_{1}, \dots, x_{n-1})\mathbb{P}(x_{1}, \dots, x_{n-1}|\theta=-1)}.
\end{aligned}  
\end{equation*}

Considering the smooth randomized response strategy, we further expand:

\begin{equation*}
\begin{aligned}
    \frac{\pi_n}{1-\pi_n} 
    &=\frac{(1-u_{n})\mathbb{P}(a_n|\theta=+1, x_{1}, \dots, x_{n-1}) + u_{n}\mathbb{P}(\tilde{a}_n|\theta=+1, x_{1}, \dots, x_{n-1})}
            {(1-u_{n})\mathbb{P}(a_n|\theta=-1, x_{1}, \dots, x_{n-1}) + u_{n}\mathbb{P}(\tilde{a}_n|\theta=-1, x_{1}, \dots, x_{n-1})} \\
    &=\frac{(1-2u_{n})\frac{\mathbb{P}(x_{1}, \dots, x_{n-1}, a_{n}|\theta=+1)}{\mathbb{P}(x_{1}, \dots, x_{n-1}, a_{n}|\theta=-1)} 
            + \frac{u_{n}}{\mathbb{P}(x_{1}, \dots, x_{n-1}, a_{n}|\theta=-1)}}
            {1-2u_{n}+\frac{u_{n}}{\mathbb{P}(x_{1}, \dots, x_{n-1}, a_{n}|\theta=-1)}}.
\end{aligned}  
\end{equation*}

Next, we compute the conditional probability of \( \theta = +1 \) given the past observations and the current action before the randomized response:

\begin{equation*}
    \begin{aligned}
        \mathbb{P}(\theta=+1|x_{1}, \dots, x_{n-1}, a_{n})
        &=\mathbb{E}[\mathbf{1}_{\theta=+1}|x_{1}, \dots, x_{n-1}, a_{n}]\\
        &= \mathbb{E} \Big[ \mathbb{E} [\mathbf{1}_{\theta = +1} | x_{1}, \dots, x_{n-1}, a_{n}, s_n] \Big| x_{1}, \dots, x_{n-1}, a_{n} \Big] \\
        &= \mathbb{E} [\mathbb{P}(\theta=+1|x_{1}, \dots, x_{n-1}, a_{n}, s_{n}) | x_1, \dots, x_n].
    \end{aligned}
\end{equation*}

where the second equality follows from the law of iterated conditional expectations. Since the decision rule dictates that \( a_n = +1 \) if and only if \( \mathbb{P}(\theta=+1|x_{1}, \dots, x_{n-1}, a_{n}, s_{n}) \geq \frac{1}{2} \), it follows that: 
If \( a_n = +1 \), then \( \mathbb{P}(\theta=+1|x_{1}, \dots, x_{n-1}, a_{n}) \geq \frac{1}{2} \).
If \( a_n = -1 \), then \( \mathbb{P}(\theta=+1|x_{1}, \dots, x_{n-1}, a_{n}) \leq \frac{1}{2} \).

Thus, when \( a_n = +1 \), we have:

\begin{equation*}
    \frac{\mathbb{P}(x_{1}, \dots, x_{n-1}, a_{n}|\theta=+1)}{\mathbb{P}(x_{1}, \dots, x_{n-1}, a_{n}|\theta=-1)} 
    = \frac{\mathbb{P}(\theta=+1|x_{1}, \dots, x_{n-1}, a_{n})}{\mathbb{P}(\theta=-1|x_{1}, \dots, x_{n-1}, a_{n})} \geq 1.
\end{equation*}

Similarly, when \( a_n = -1 \), we obtain:

\begin{equation*}
    \frac{\mathbb{P}(x_{1}, \dots, x_{n-1}, a_{n}|\theta=+1)}{\mathbb{P}(x_{1}, \dots, x_{n-1}, a_{n}|\theta=-1)} 
    = \frac{\mathbb{P}(\theta=+1|x_{1}, \dots, x_{n-1}, a_{n})}{\mathbb{P}(\theta=-1|x_{1}, \dots, x_{n-1}, a_{n})} \leq 1.
\end{equation*}

Since \( 0 \leq u_{n} \leq \frac{1}{2} \), we conclude that: If \( a_n = +1 \), then

\begin{equation*}
    \frac{\pi_n}{1-\pi_n} \geq 1.
\end{equation*}

If \( a_n = -1 \), then

\begin{equation*}
    \frac{\pi_n}{1-\pi_n} \leq 1.
\end{equation*}

Thus, we conclude that \( \pi_n \geq \frac{1}{2} \) when \( x_n = +1 \) and \( \pi_n \leq \frac{1}{2} \) when \( x_n = -1 \), completing the proof.
\endproof

\begin{lemma}\label{lem: tau_with_prior}
    For any \( \pi \in (0,1) \), the expected stopping time satisfies
    \begin{equation*}
        \mathbb{E}_{\pi,-}[\tau] = \frac{\pi}{1-\pi}\mathbb{E}[\tau].
    \end{equation*}
\end{lemma}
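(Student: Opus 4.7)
The plan is a Bayesian expansion followed by a change-of-measure identification. First, I would observe that under $\mathbb{P}_{\pi,-}$ (prior $\pi$ on $\theta=+1$, true state $\theta=-1$) the event $\{\tau>n\}$ coincides with $\{x_1=\cdots=x_n=+1\}$, so $\mathbb{E}_{\pi,-}[\tau]=\sum_{n\geq 0}\mathbb{P}_{\pi,-}(x_1=\cdots=x_n=+1)$. Bayes' rule applied on the all-$(+1)$ trajectory gives
\[
\frac{\pi_{n+1}^{*,\pi}}{1-\pi_{n+1}^{*,\pi}}=\frac{\pi}{1-\pi}\cdot\frac{\mathbb{P}_+^{\pi}(x_1=\cdots=x_n=+1)}{\mathbb{P}_-^{\pi}(x_1=\cdots=x_n=+1)},
\]
and rearranging isolates the prior factor:
\[
\mathbb{P}_{\pi,-}(\tau>n)=\frac{\pi}{1-\pi}\cdot\frac{1-\pi_{n+1}^{*,\pi}}{\pi_{n+1}^{*,\pi}}\cdot\mathbb{P}_+^{\pi}(x_1=\cdots=x_n=+1).
\]

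Second, I would sum over $n\geq 0$, which pulls the constant $\pi/(1-\pi)$ outside and leaves
\[
\mathbb{E}_{\pi,-}[\tau]=\frac{\pi}{1-\pi}\sum_{n\geq 0}\frac{1-\pi_{n+1}^{*,\pi}}{\pi_{n+1}^{*,\pi}}\prod_{k=1}^{n}\mathbb{P}(x_k=+1\mid l_k^{(\pi)},\theta=+1).
\]
The plan is to identify this remaining sum with $\mathbb{E}[\tau]$ (the symmetric-prior baseline derived in Theorem~\ref{thm:finite-expectation}) by recognizing $(1-\pi_{n+1}^{*,\pi})/\pi_{n+1}^{*,\pi}$ as the Radon--Nikodym density $(d\mathbb{P}_-/d\mathbb{P}_+)\vert_{\mathcal{F}_n}$ evaluated on the all-$(+1)$ path, up to the prior odds. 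Re-expressing the sum as an expectation under $\mathbb{P}_+^{\pi}$ and invoking the $\mathbb{P}_-$-martingale property of $\pi_n/(1-\pi_n)$ established in the derivation of Theorem~\ref{thm:finite-expectation}, together with the strong Markov property applied at the first step, would collapse the prior dependence and yield the baseline expectation.

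The main obstacle will be this final change-of-measure step: showing that the weighted series is invariant under the starting prior $\pi$. This requires (i) absolute convergence so that Fubini's theorem permits interchanging sum and expectation, which Theorem~\ref{thm:finite-expectation} supplies in the regime $\varepsilon<2/\sigma^2$; and (ii) using Lemma~\ref{lem: pi_and_action} ($\pi_n\geq 1/2$ on $\{\tau>n\}$) to control the martingale at the stopping boundary and guarantee that optional stopping applies. Once these two ingredients are in place, the remaining sum collapses to $\mathbb{E}_{1/2,-}[\tau]=\mathbb{E}[\tau]$, and multiplying back by $\pi/(1-\pi)$ gives the claimed identity.
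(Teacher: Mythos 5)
Your first three steps coincide exactly with the paper's own proof: restrict to the all-$(+1)$ trajectory, apply Bayes' rule to write $\mathbb{P}_{\pi,-}(\tau>n)=\frac{\pi}{1-\pi}\cdot\frac{1-\pi_{n+1}^{*,\pi}}{\pi_{n+1}^{*,\pi}}\cdot\mathbb{P}_{+}^{\pi}(x_1=\cdots=x_n=+1)$, and sum over $n$. The problem is that this rearrangement is circular: substituting the Bayes identity back in shows that $\frac{\pi}{1-\pi}\cdot\frac{1-\pi_{n+1}^{*,\pi}}{\pi_{n+1}^{*,\pi}}\cdot\mathbb{P}_{+}^{\pi}(x_1=\cdots=x_n=+1)$ is identically $\mathbb{P}_{\pi,-}(x_1=\cdots=x_n=+1)$, so the weighted series you are left with equals $\frac{1-\pi}{\pi}\mathbb{E}_{\pi,-}[\tau]$ by definition and nothing prior-independent has been isolated. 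The entire content of the lemma is the assertion that this series takes the same value for every starting prior, i.e.\ that $\frac{1-\pi}{\pi}\mathbb{E}_{\pi,-}[\tau]$ is constant in $\pi$, and that is precisely the step you label ``the main obstacle'' and never carry out.

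The resolution you sketch would not close it. Identifying $(1-\pi_{n+1}^{*,\pi})/\pi_{n+1}^{*,\pi}$ as a Radon--Nikodym density is just the Bayes identity restated; Fubini and optional stopping via Lemma~\ref{lem: pi_and_action} handle summability and the boundary value of the martingale, but neither addresses the actual issue, which is that the all-$(+1)$ belief trajectory started from prior $\pi$ passes through different values of $l_k$ --- hence different thresholds $t(l_k)$ and different conditional action probabilities $\mathbb{P}(x_k=+1\mid l_k,\theta=\pm1)$ --- than the trajectory started from the symmetric prior. A change of measure between $\mathbb{P}_{+}$ and $\mathbb{P}_{-}$ along a fixed path cannot remove that dependence, and the strong Markov property at the first step only re-expresses $\mathbb{E}_{\pi,-}[\tau]$ in terms of $\mathbb{E}_{\phi(\pi\mid +1),-}[\tau]$, which again sits at a prior off the symmetric trajectory. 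To be fair, the paper's own proof makes the identical unjustified leap (it passes from $\sum_n\tilde u_n$ to $\frac{\pi}{1-\pi}\sum_n u_n$ without argument), so you have reproduced the published argument step for step; but you have inherited its gap rather than filled it, and the mechanism you propose for filling it would not succeed as described.
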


\begin{proof}

Since the states \( \theta = +1 \) and \( \theta = -1 \) are ex ante equally likely, the posterior odds ratio satisfies:

\begin{equation*}
    \frac{\pi^{*}_{n+1}}{1 - \pi^{*}_{n+1}} = 
    \frac{\mathbb{P}_{+}(x_1 = \dots = x_n = +1)}
         {\mathbb{P}_{-}(x_1 = \dots = x_n = +1)}.
\end{equation*}

From this, we have:

\begin{equation*}
    u_n = \mathbb{P}_{-}(\tau > n) = \mathbb{P}_{-}(x_1 = \dots = x_n = +1) = \frac{1 - \pi^{*}_{n+1}}{\pi^{*}_{n+1}} \mathbb{P}_{+}(x_1 = \dots = x_n = +1).
\end{equation*}

For the general case where \( \pi \neq \frac{1}{2} \), we extend the previous argument to obtain:

\begin{equation*}
    \frac{\pi^{*}_{n+1}}{1 - \pi^{*}_{n+1}} = 
    \frac{\pi}{1-\pi} \frac{\mathbb{P}(x_1 = \dots = x_n = +1| \theta=+1)}
         {\mathbb{P}(x_1 = \dots = x_n = +1| \theta=-1)}.
\end{equation*}

Similarly, defining:

\begin{equation*}
    \tilde{u}_{n} = \mathbb{P}_{\pi, -}(\tau > n) = \frac{\pi}{1-\pi} \frac{1 - \pi^{*}_{n+1}}{\pi^{*}_{n+1}} \mathbb{P}_{+}(x_1 = \dots = x_n = +1),
\end{equation*}

we derive the expectation:

\begin{equation*}
    \mathbb{E}_{\pi,-}[\tau] = \sum_{n=1}^{\infty} \tilde{u}_{n} = \frac{\pi}{1-\pi} \sum_{n=1}^{\infty} u_{n} = \frac{\pi}{1-\pi} \mathbb{E}_{\pi}[\tau].
\end{equation*}

This completes the proof.
\end{proof}

\begin{lemma}\label{lem: prob_all_minus}
    There exists a constant \( a > 0 \) such that 
    \begin{equation*}
        \mathbb{P}_{\pi,-}(x_{n}=-1, \text{ for all } n) \geq a,
    \end{equation*}
    for all \( \pi \leq \frac{1}{2} \).
\end{lemma}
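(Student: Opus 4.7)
The plan is to reduce the problem to a single prior via monotonicity and then exploit the deterministic LLR trajectory on the all-$-1$ event to produce an explicit positive lower bound.

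First, I would show that $Q(\pi) := \mathbb{P}_{\pi,-}(x_n = -1 \text{ for all } n)$ is non-increasing in $\pi$. Using a coupling that fixes the Gaussian signals and the auxiliary randomization driving the smooth randomized response, the process initialized with the smaller prior has a larger $-1$-region because its threshold $t(l_n) = -\sigma^2 l_n/2$ sits further to the right; hence it reports $-1$ whenever the higher-prior process does. Consequently it suffices to bound $Q(1/2)$ below by a positive constant, and the same constant $a$ serves uniformly for every $\pi \leq 1/2$.

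Second, on the event $\{x_n = -1 \text{ for all } n\}$ the public log-likelihood ratio $l_n$ evolves deterministically and drifts to $-\infty$, so I obtain the product representation $Q(1/2) = \prod_{n=1}^{\infty}(1 - q_n)$ with $q_n := \mathbb{P}(x_n = +1 \mid l_n, \theta = -1)$ evaluated along this trajectory. Combining the Gaussian-tail bound in \Cref{lem:tail-bound} with the asymptotic identity in \Cref{lem:simplify_diff_equation} adapted to the symmetric regime $l_n \to -\infty$, I would derive the explicit estimates $q_n \sim C_1 e^{\varepsilon \sigma^2 l_n/2}$ and $l_{n+1} - l_n \sim -C_2 e^{\varepsilon \sigma^2 l_n/2}$ with constants depending only on $\varepsilon$ and $\sigma$. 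Setting $m_n := -l_n$ and invoking the ODE-comparison \Cref{lem: A/B=1} and \Cref{lem:discrete and continuous}, I would match $m_n$ to the solution of $m'(t) = C_2\, e^{-\varepsilon \sigma^2 m/2}$ and read off its precise divergence rate.

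The main obstacle will be converting this divergence rate into summability of the error series $\sum_n q_n$. Both $q_n$ and the LLR increment $|l_{n+1} - l_n|$ share the same exponential prefactor $e^{-\varepsilon \sigma^2 m_n/2}$, so a sharp quantitative tracking of the accumulation of $m_n$ is required rather than a mere asymptotic rate; a promising route is to integrate the ODE exactly and translate the resulting integral bound into a discrete sum via the lemmas above, possibly after first reducing to the tail event $\{|l_1|\text{ sufficiently large}\}$ and absorbing any finite initial segment into the uniform constant. Granted a bound of the form $\sum_n q_n \leq K < \infty$, the product representation yields $Q(1/2) \geq e^{-K} =: a > 0$, and combined with the monotonicity established above this furnishes the desired uniform lower bound for all $\pi \leq 1/2$.
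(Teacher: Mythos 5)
Your route is genuinely different from the paper's, and it breaks at exactly the step you flag as the main obstacle. The paper does not use a product representation at all: it applies optional stopping to the likelihood-ratio martingale $Y_n=\pi_n/(1-\pi_n)$ at the first error time $\sigma=\inf\{n:x_n=+1\}$, invokes Lemma~\ref{lem: pi_and_action} to conclude $Y_\sigma\geq 1$ on $\{\sigma<\infty\}$, and obtains $\mathbb{P}_{\pi,-}(\sigma<\infty)\leq \pi/(1-\pi)$; the regime $\pi$ close to $1/2$ is then handled by a one-step descent of the public belief combined with the Markov property. In particular, the paper never needs $\sum_n q_n<\infty$.

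The fatal issue with your plan is that the summability $\sum_n q_n<\infty$ is not merely hard to establish --- it fails. Your own asymptotics give $q_n\sim C_1 e^{\varepsilon\sigma^2 l_n/2}$ and $|l_{n+1}-l_n|\sim C_2 e^{\varepsilon\sigma^2 l_n/2}$ with $C_1/C_2=e^{-\varepsilon}/(e^{\varepsilon}-e^{-\varepsilon})>0$; equivalently, under the smooth randomized response the single-action likelihood ratio $\mathbb{P}(x_n=+1\mid l_n,\theta=+1)/\mathbb{P}(x_n=+1\mid l_n,\theta=-1)$ tends to the \emph{finite} constant $e^{2\varepsilon}$, rather than diverging as in the non-private Gaussian model. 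Hence
\begin{equation*}
\sum_n q_n \;\geq\; c\sum_n |l_{n+1}-l_n| \;=\; c\bigl(l_1-\lim_n l_n\bigr)\;=\;+\infty
\end{equation*}
by telescoping, since $l_n\to-\infty$ along the all-$(-1)$ trajectory. No sharper tracking of the ODE can repair this; the divergence is forced by the telescoping identity, and your product representation therefore evaluates to $\prod_n(1-q_n)=0$ rather than to a positive constant. (The contrast with the non-private case is instructive: there $q_n/|l_{n+1}-l_n|=\Theta(e^{l_n})\to 0$, which is precisely what makes the analogous sum converge.) I would add that your computation, which looks correct to me, is in genuine tension with the paper's own argument: the latter rests on Lemma~\ref{lem: pi_and_action}, whose proof treats the flip probability as signal-independent, whereas under the smooth randomized response a single $+1$ action shifts the log-likelihood ratio by at most roughly $2\varepsilon$ and so cannot force $\pi_\sigma\geq 1/2$ once the public belief is strongly negative. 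In any case, as written your proposal does not establish the claimed lower bound.
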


\begin{proof}
Define the stopping time of the first incorrect choice as
\begin{equation*}
    \sigma := \inf\{n: x_n \neq \theta\}.
\end{equation*}
Let \( Y_n := \frac{\pi_n}{1 - \pi_n} \). Since \( Y_n \) is a positive martingale under \( \mathbb{P}_{\pi,-} \), we apply the optional stopping theorem at \( \sigma \), obtaining:
\begin{equation*}
    \mathbb{E}_{\pi, -}[Y_{\sigma}] \leq Y_1.
\end{equation*}
Thus, we have:
\begin{equation*}
    \mathbb{E}_{\pi, -}[Y_{\sigma} \mathbf{1}_{\sigma < +\infty}] \leq Y_1.
\end{equation*}

Since \( x_{\sigma - 1} = +1 \), Lemma \ref{lem: pi_and_action} implies that \( Y_{\sigma} \geq 1 \) almost surely. This result yields:
\begin{equation*}
    \mathbb{P}_{\pi,-}(\sigma < +\infty) \leq Y_1 = \frac{\pi}{1-\pi}.
\end{equation*}
Moreover, for \( \pi \leq \frac{1}{2} - \alpha \), we have the lower bound:
\begin{equation*}
    \mathbb{P}_{\pi,-}(\sigma = +\infty) \geq \frac{4\alpha}{1+2\alpha}.
\end{equation*}

For \( \pi \in [\frac{1}{2}-\alpha, \frac{1}{2}] \), there exists a constant \( m > 0 \) such that
\begin{equation*}
    \mathbb{P}_{\pi,-}(x_{1}=-1) \geq m.
\end{equation*}

Define \( \phi(\pi|x) \) as the public belief of the next agent when the current agent has belief \( \pi \) and takes action \( x \). By Bayes' rule, we obtain:

\begin{equation*}
    \frac{\phi(\pi|x)}{1-\phi(\pi|x)} = \frac{\pi}{1-\pi} \cdot 
    \frac{\mathbb{P}(x_{n}=-1|\pi, \theta=+1)}
         {\mathbb{P}(x_{n}=-1|\pi, \theta=-1)}.
\end{equation*}

From Equation \ref{eq:prob_min_prob_plus}, we know that
\begin{equation*}
    \mathbb{P}(x_{n}=-1|\pi, \theta=+1) < \mathbb{P}(x_{n}=-1|\pi, \theta=-1),
\end{equation*}
which implies that \( \phi(\pi|x) < \pi \). Since \( \pi \in [\frac{1}{2}-\alpha, \frac{1}{2}] \), it follows that
\begin{equation*}
    \phi(\pi|x) \leq \frac{1}{2}-\alpha.
\end{equation*}

By the Markov property of \( \pi_n \), we deduce:

\begin{equation*}
    \mathbb{P}_{\pi,-}(\sigma = +\infty) 
    = \mathbb{P}_{\pi,-}(x_{1}=-1) \cdot \mathbb{P}_{\phi(\pi|x),-}(\sigma = +\infty) 
    \geq m \cdot \frac{4\alpha}{1+2\alpha}.
\end{equation*}

Thus, there exists a constant \( a > 0 \) such that
\begin{equation*}
    \mathbb{P}_{\pi,-}(x_{n}=-1, \text{ for all } n) \geq a,
\end{equation*}
for all \( \pi \leq \frac{1}{2} \), completing the proof.
\end{proof}

To complete the proof of \Cref{thm:finite-expectation}, it remains to establish \Cref{eq: convergence of u_n}. To proceed, recall the definition of \( u_n \):
\[
u_n = \prod_{k=1}^{n}\mathbb{P}(x_k = +1 | l_k, \theta = -1).
\]

By \Cref{eq:bayes update}, we can express \( u_n \) as:
\[
u_n = \prod_{k=2}^{n}\mathbb{P}(x_k = +1 | l_k, \theta = -1)\frac{1 - \pi_2^*}{\pi_2^*}\mathbb{P}(x_1 = +1 | l_1, \theta = +1),
\]
and further as:
\[
u_n = \prod_{k=3}^{n}\mathbb{P}(x_k = +1 | l_k, \theta = -1) \frac{1 - \pi_3^*}{\pi_3^*}\mathbb{P}(x_2 = +1 | l_2, \theta = +1)\mathbb{P}(x_1 = +1 | l_1, \theta = +1).
\]

Then
\[
u_n = \mathbb{P}(x_n = +1 | l_n, \theta = -1)\frac{1 - \pi_n^*}{\pi_n^*} \prod_{k=1}^{n-1}\mathbb{P}(x_k = +1 | l_k, \theta = +1).
\]

Using the recurrence relationship for \( \pi_{n+1}^* \), this can be further simplified as:
\[
u_n = \frac{1 - \pi_{n+1}^*}{\pi_{n+1}^*} \prod_{k=1}^{n}\mathbb{P}(x_k = +1 | l_k, \theta = +1)
\]

By Lemma \ref{lem: pi_and_action}, $\pi_{n+1}^* \geq \frac{1}{2}$, and by Lemma \ref{lem: prob_all_minus}, we have
\begin{equation*}
    c(1 - \pi_{n+1}^*) \leq u_n \leq 2(1 - \pi_{n+1}^*).
\end{equation*}

Since $l_n = \log\left(\frac{\pi_n^*}{1 - \pi_n^*}\right)$, we have $1 - \pi_n^* \sim e^{-l_n}$, where $l_n = \frac{2}{\varepsilon\sigma^2} \log(C(\varepsilon) n)+c$, where $c$ is a constant and
\[
C(\varepsilon) =  \frac{\varepsilon^2\sigma^2 }{4} \left( e^{\varepsilon + \frac{\varepsilon^2 \sigma^2}{2}} - e^{-\varepsilon + \frac{\varepsilon^2 \sigma^2}{2}} \right)
\]
Thus, the expected time \( \mathbb{E}[\tau] \) until the first correct action satisfies
\[
\mathbb{E}[\tau]= \sum_{n=1}^{\infty} u_n =C_{1} \left( C(\varepsilon)^{-\frac{2}{\varepsilon\sigma^2}} \zeta\left(\frac{2}{\varepsilon\sigma^2} \right)\right),
\]
where \( C_{1}\) is a positive constant and \(\zeta\left(x \right)=\sum_{n=1}^{\infty} n^{-x} \), for some constant \( C(\varepsilon) \) depending on \( \varepsilon \). Since \( \frac{2}{\varepsilon\sigma^2} > 1 \), the series converges, and thus the expected time until the first correct action is finite:
\[
\mathbb{E}[\tau] < +\infty.
\]

\endproof

\section{Proof of \texorpdfstring{\Cref{thm:finite-incorrect-actions}}{thmfiniteincorrectactopms}: Finite Expected Total Number of Incorrect Actions}\label{app:proof:thm:finite-incorrect-actions}

Based on the above lemmas and Theorem \ref{thm:finite-expectation}, we now present the formal proof of Theorem \ref{thm:finite-incorrect-actions}. Without loss of generality, assume \( \theta = -1 \). We define \( \sigma_1 = 1 \) and let \( \tau_1 = \inf\{n: x_n = -1\} \) be the starting index of the first incorrect and correct runs. For \( k > 1 \), define recursively:

\begin{equation*}
    \sigma_{k} :=\inf \{ n > \tau_{k-1} : x_n = +1 \}, \quad 
    \tau_{k} :=\inf \{ n > \sigma_{k} : x_n = -1 \}.
\end{equation*}

Then, we define the length of the \( k \)th bad run as
$\Delta_k := \tau_k - \sigma_k$. By monotone convergence theorem, 

\begin{equation*}
    \mathbb{E}_{-}[W] = \sum_{k=1}^{+\infty} \mathbb{E}_{-} \big[ \Delta_k \mathbf{1}_{\sigma_k < +\infty} \big].
\end{equation*}

The indicator function \( \mathbf{1}_{\sigma_k < +\infty} \) ensures that only finite stopping times \( \sigma_k \) contribute to the expectation. This accounts for the possibility that, after some point, agents may always take the correct action, leading to \( \sigma_k = +\infty \) and thus excluding those terms from the summation.

For \( k \geq 2 \), we derive an upper bound for \( \mathbb{E}_{-}[\Delta_k \mathbf{1}_{\sigma_k < +\infty} | \mathcal{H}_{\tau_{k-1}}] \), where \( \mathcal{H}_{\tau_{k-1}} \) is the \( \sigma \)-algebra at time \( \tau_{k-1} \).

\begin{equation*}
\begin{aligned}
    \mathbb{E}_{-}[\Delta_k \mathbf{1}_{\sigma_k < +\infty} | \mathcal{H}_{\tau_{k-1}}] 
    &= \mathbb{E}_{-} \Big[ \mathbb{E}_{-} [\Delta_k \mathbf{1}_{\sigma_k < +\infty} | \mathcal{H}_{\sigma_k}] \Big| \mathcal{H}_{\tau_{k-1}} \Big] \\
    &= \mathbb{E}_{-} \Big[ \mathbf{1}_{\sigma_k < +\infty} \mathbb{E}_{-}[\Delta_k | \mathcal{H}_{\sigma_k}] \Big| \mathcal{H}_{\tau_{k-1}} \Big] \\
    &= \mathbb{E}_{-} \Big[ \mathbf{1}_{\sigma_k < +\infty} \mathbb{E}_{\pi_{\sigma_k},-}[\tau] \Big| \mathcal{H}_{\tau_{k-1}} \Big] \\
    &\leq\mathbb{E}[\tau]  \mathbb{E}_{-} \Big[ \frac{\pi_{\sigma_k}}{1 - \pi_{\sigma_k}} \mathbf{1}_{\sigma_k < +\infty} \Big| \mathcal{H}_{\tau_{k-1} } \Big] \\
    &\leq \mathbb{E}[\tau]  \frac{\pi_{\tau_{k-1} }}{1 - \pi_{\tau_{k-1}}} \\
    &\leq \mathbb{E}[\tau] .
\end{aligned}
\end{equation*}

where the first equality follows from the law of iterated expectations, the second equality holds since \( \sigma_k \) is \( \mathcal{H}_{\sigma_k} \)-measurable, the third equality follows from the Markov property of \( (\pi_n) \), the inequality follows from Lemma \ref{lem: tau_with_prior} , the next step is derived using the Doob optional sampling theorem for positive martingales, and the final inequality holds by Lemma \ref{lem: pi_and_action}.

Then we derive an upper bound for $\mathbb{E}_{-}[\Delta_k \mathbf{1}_{\sigma_k < +\infty}] $ as follows:

\begin{equation*}
\begin{aligned}
    \mathbb{E}_{-}[\Delta_k \mathbf{1}_{\sigma_k < +\infty}] 
    &= \mathbb{E}_{-}[\Delta_k \mathbf{1}_{\sigma_k < +\infty} \mathbf{1}_{\tau_{k-1} < +\infty}] \\
    &=\mathbb{E}_{-}\Big[\mathbb{E}_{-}[\Delta_k \mathbf{1}_{\sigma_k < +\infty} \mathbf{1}_{\tau_{k-1} < +\infty} \Big| \mathcal{H}_{\tau_{k-1}} ]\Big]\\
    &=\mathbb{E}_{-}\Big[\mathbf{1}_{\tau_{k-1} < +\infty}\mathbb{E}_{-}[\Delta_k \mathbf{1}_{\sigma_k < +\infty}  \Big| \mathcal{H}_{\tau_{k-1}} ]\Big]\\
    &\leq \mathbb{E}[\tau]  \mathbb{P}_{-} (\tau_{k-1} < +\infty) \\
    &\leq \mathbb{E}[\tau]  \mathbb{P}_{-} (\sigma_{k-1} < +\infty).
\end{aligned}
\end{equation*}

where the first equality holds since $\sigma_{k} < +\infty$ implies $\tau_{k-1} < +\infty$, the second equality follows from the law of iterated expectations, the third equality holds since  $\tau_{k-1}$ is  $\mathcal{H}_{\tau_{k-1}}$ -measurable. Then we sum $ \mathbb{E}_{-}[\Delta_k \mathbf{1}_{\sigma_k < +\infty}] $ over $k \geq 1$, 

\begin{equation*}
    \mathbb{E}_{-}[W] \leq \mathbb{E}[\tau]  + \mathbb{E}[\tau]  \sum_{k=1}^{+\infty}{P}_{-} (\sigma_{k} < +\infty).
\end{equation*}

Then, we derive the inequality recurrence relationship for \( \mathbb{P}_{-} (\sigma_k < +\infty) \):

\begin{equation*}
\begin{aligned}
    \mathbb{P}_{-} (\sigma_k < +\infty) 
    &= \mathbb{E}_{-} \Big[ \mathbb{E}_{-} [\mathbf{1}_{\sigma_k < +\infty} | \mathcal{H}_{\tau_{k-1}}] \Big] \\
    &= \mathbb{E}_{-} \Big[ \big(1 - \mathbb{P}_{\pi_{\tau_k-1},L} (x_n = l, \text{ for all } n) \big) \mathbf{1}_{\tau_{k-1} < +\infty} \Big] \\
    &\leq (1 - a) \mathbb{P}_{-} (\tau_{k-1} < +\infty) \\
    &\leq (1 - a) \mathbb{P}_{-} (\sigma_{k-1} < +\infty).
\end{aligned}
\end{equation*}
where the first inequality follows from Lemma \ref{lem: prob_all_minus}. Since $\mathbb{P}_{-} (\sigma_k < +\infty) \leq (1-c)^{k-1}$, then $\mathbb{E}[\tau] \leq \mathbb{E}_{-}[W] \leq C_{3} \mathbb{E}[\tau]$.
\endproof

\section{The Evolution of Log-Likelihood Ratio}
\label{app:simulation:thm:smoothly-learning-speed}
\begin{figure}[htbp]
    \centering
    \includegraphics[width=0.75\textwidth]{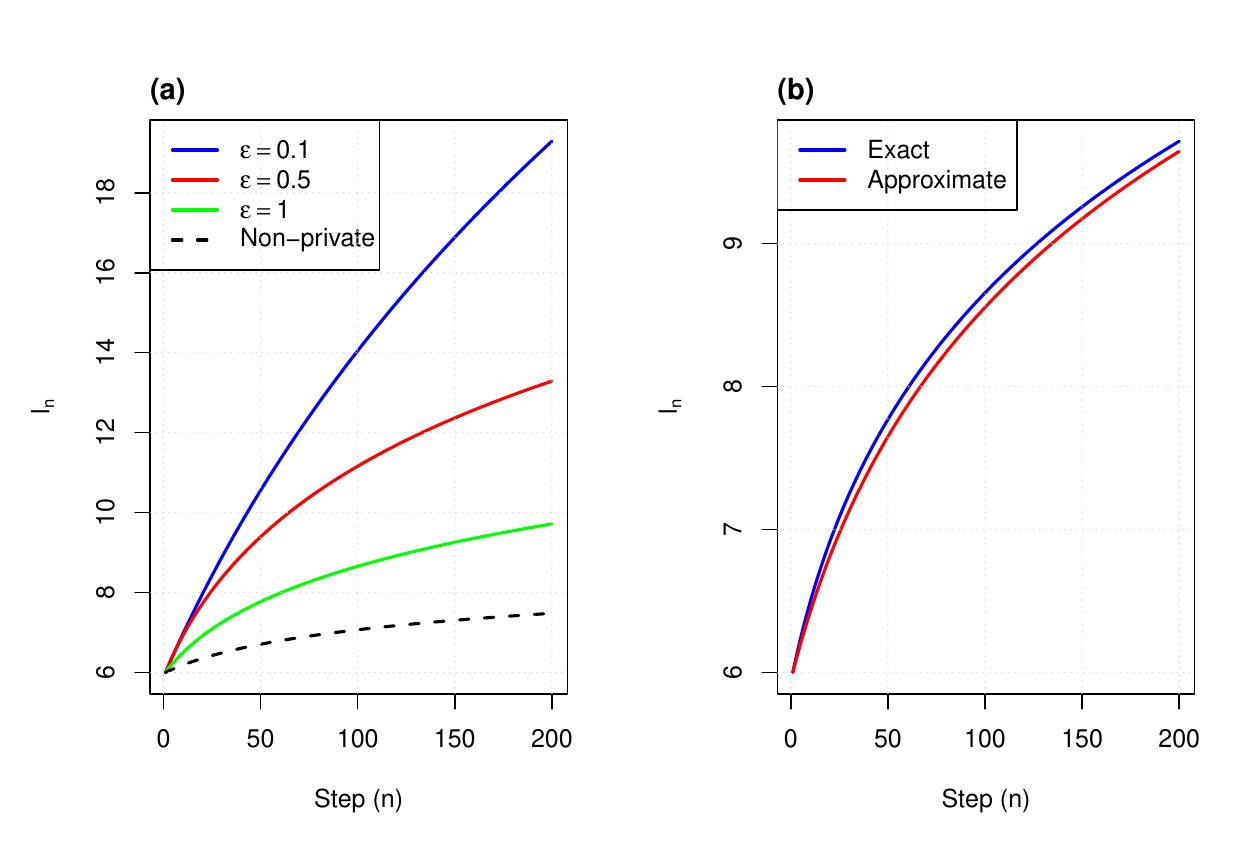}
    \caption{(a) Evolution of \( l_n \) over steps \( n \) for different privacy budgets and non-private scenario \( \varepsilon \). (b) Comparison between exact and approximate solutions for \( l_n \) when \( \varepsilon = 1 \).}
    \label{fig:ln_evolution}
\end{figure}

To support \Cref{thm:smoothly-learning-speed}, we performed simulations to analyze the evolution of the log-likelihood ratio \( l_n \) without an approximation. \Cref{fig:ln_evolution}(a) illustrates the trajectory of \( l_n \) under three different privacy budgets and the non-private scenario, revealing a clear relationship between privacy constraints and learning efficiency. Specifically, as the privacy budget decreases, the rate at which the log-likelihood ratio increases becomes steeper. This observation suggests that stronger privacy constraints, while introducing noise into individual actions, paradoxically enhance the speed of belief convergence by amplifying the informativeness of correct actions. This aligns with our theoretical findings that the smooth randomized response mechanism asymmetrically affects the likelihood of correct and incorrect actions, thus improving the efficiency of information aggregation. Moreover, we observe a significant difference in learning speed between the private and non-private settings: In the private setting, the log-likelihood ratio grows at a rate of \( \Theta_{\varepsilon}(\log(n)) \), whereas in the non-private setting, where privacy budgets approach infinity, asymptotic learning occurs at the slower rate of \( \Theta(\sqrt{\log(n)}) \). This confirms that our privacy-aware mechanism leads to a provable acceleration of information aggregation compared to classical sequential learning models.

Furthermore, we validate the robustness of our approximation approach by comparing the exact evolution of \( l_n \) with its approximation, as depicted in \Cref{fig:ln_evolution}(b). The results indicate that, while there are minor deviations, overall trends remain consistent across different privacy settings. This confirms that the approximation method used in our theoretical analysis reliably captures the key dynamics of sequential learning under privacy constraints. These findings further reinforce the counterintuitive insight that privacy-preserving mechanisms, rather than simply introducing noise that degrades learning performance, can strategically reshape the information structure to accelerate the convergence of beliefs in sequential decision-making settings.

\section{Proof of \texorpdfstring{\Cref{thm:smoothly-asymptotic-learning-C-hete}}{thmsmoothlyasymptoticlearningChete}: Asymptotic Learning under Heterogeneous Privacy Budgets}\label{app:proof:thm:smoothly-asymptotic-learning-C-hete}

\noindent
Based on a similar argument to the proof of asymptotic learning in the homogeneous setting, we assume without loss of generality that the true state is $\theta = -1$. In this case, we can show that $\pi_n$ is a non-negative martingale under the true state and satisfies $\sup_n \mathbb{E}[\pi_n \mid \theta = -1] < \infty$. By the martingale convergence theorem, it follows that the limit $\pi_\infty = \lim_{n \to \infty} \pi_n$ exists almost surely and is finite.

\medskip
\noindent
Moreover, from the update rule of the public log-likelihood ratio:
\[
l_{n+1} = l_n + \log \frac{\mathbb{P}(x_n = +1 \mid l_n, \theta = +1)}{\mathbb{P}(x_n = +1 \mid l_n, \theta = -1)},
\]
the convergence of $\pi_n$ implies that $l_n$ also converges almost surely to a finite limit $l_\infty$. Thus, in the limit we must have
\[
\mathbb{P}(x_n = +1 \mid l_\infty, \theta = +1) = \mathbb{P}(x_n = +1 \mid l_\infty, \theta = -1),
\]
or equivalently,
\[
\mathbb{P}(x_n = -1 \mid l_\infty, \theta = +1) = \mathbb{P}(x_n = -1 \mid l_\infty, \theta = -1).
\]

\noindent
To compute $\mathbb{P}(x_n = -1 \mid l_n, \theta = +1)$, we have:
\begin{align*}
\mathbb{P}(x_n = -1 \mid l_n, \theta = +1)
&= \int_0^1 \int_{-\infty}^{-\frac{\sigma^2 l_n}{2}} \frac{1}{\sqrt{2\pi}\sigma} e^{-\frac{(s_n - 1)^2}{2\sigma^2}} \left(1 - a e^{\varepsilon_n (s_n + \frac{\sigma^2 l_n}{2})} \right) ds_n d\varepsilon_n \\
&\quad + \int_0^1  \int_{-\frac{\sigma^2 l_n}{2}}^{\infty} \frac{1}{\sqrt{2\pi}\sigma} e^{-\frac{(s_n - 1)^2}{2\sigma^2}} a e^{-\varepsilon_n (s_n + \frac{\sigma^2 l_n}{2})} ds_n d\varepsilon_n \\
&= \int_{-\infty}^{-\frac{\sigma^2 l_n}{2}} \frac{1}{\sqrt{2\pi} \sigma} e^{-\frac{(s_n - 1)^2}{2\sigma^2}} \int_0^1  \left(1 - a e^{\varepsilon_n (s_n + \frac{\sigma^2 l_n}{2})} \right) d\varepsilon_n ds_n \\
&\quad + \int_{-\frac{\sigma^2 l_n}{2}}^{\infty} \frac{1}{\sqrt{2\pi} \sigma} e^{-\frac{(s_n - 1)^2}{2\sigma^2}} \int_0^1  a e^{-\varepsilon_n (s_n + \frac{\sigma^2 l_n}{2})} d\varepsilon_n ds_n.
\end{align*}

\noindent
Similarly, for $\mathbb{P}(x_n = -1 \mid l_n, \theta = -1)$:
\begin{align*}
\mathbb{P}(x_n = -1 \mid l_n, \theta = -1)
&= \int_{-\infty}^{-\frac{\sigma^2 l_n}{2}} \frac{1}{\sqrt{2\pi} \sigma} e^{-\frac{(s_n + 1)^2}{2\sigma^2}} \int_0^1  \left(1 - a e^{\varepsilon_n (s_n + \frac{\sigma^2 l_n}{2})} \right) d\varepsilon_n ds_n \\
&\quad + \int_{-\frac{\sigma^2 l_n}{2}}^{\infty} \frac{1}{\sqrt{2\pi} \sigma} e^{-\frac{(s_n + 1)^2}{2\sigma^2}} \int_0^1  a e^{-\varepsilon_n (s_n + \frac{\sigma^2 l_n}{2})} d\varepsilon_n ds_n.
\end{align*}

\noindent
Define the signal densities:
\[
\phi_{+}(s_n) = \frac{1}{\sqrt{2\pi} \sigma} \exp\left(-\frac{(s_n - 1)^2}{2\sigma^2}\right), \quad
\phi_{-}(s_n) = \frac{1}{\sqrt{2\pi} \sigma} \exp\left(-\frac{(s_n + 1)^2}{2\sigma^2}\right),
\]
and define the piecewise function
\[
g(s_n) =
\begin{cases}
\mathbb{E}_{\varepsilon_n}[1 - a e^{\varepsilon_n(s_n + \frac{1}{2} \sigma^2 l_n)}], & s_n < -\frac{1}{2} \sigma^2 l_n, \\
\mathbb{E}_{\varepsilon_n}[a e^{-\varepsilon_n(s_n + \frac{1}{2} \sigma^2 l_n)}], & s_n \geq -\frac{1}{2} \sigma^2 l_n.
\end{cases}
\]

Then we can write:
\[
\mathbb{P}(x_n = -1 \mid l_n, \theta = +1) = \int_{-\infty}^{\infty} \phi_{+}(s_n) \cdot g(s_n) \, ds_n, \quad
\mathbb{P}(x_n = -1 \mid l_n, \theta = -1) = \int_{-\infty}^{\infty} \phi_{-}(s_n) \cdot g(s_n) \, ds_n.
\]

\noindent
Since $g(s_n)$ is non-increasing and strictly decreasing on a set of positive measure, and $\phi_{+}$ and $\phi_{-}$ are Gaussians with the same variance but different means, it follows that
\begin{equation}
\mathbb{P}(x_n = -1 \mid l_n, \theta = +1) < \mathbb{P}(x_n = -1 \mid l_n, \theta = -1).
\end{equation}

\noindent
Therefore, the equality $\mathbb{P}(x_n = -1 \mid l_\infty, \theta = +1) = \mathbb{P}(x_n = -1 \mid l_\infty, \theta = -1)$ holds only if $l_\infty = -\infty$.

\medskip
\noindent
Finally, we compute the limiting behavior:
\begin{align*}
&\lim_{n \to \infty} \mathbb{P}(x_n = -1 \mid l_n, \theta = -1)\\
&= \int_0^1 \int_{-\infty}^{\infty} \phi_{-}(s_n) \left[ \mathbf{1}_{\{s_n < -\frac{1}{2} \sigma^2 l_n\}} \left(1 - a e^{\varepsilon_n(s_n + \frac{1}{2} \sigma^2 l_n)} \right) + \mathbf{1}_{\{s_n \geq -\frac{1}{2} \sigma^2 l_n\}} a e^{-\varepsilon_n(s_n + \frac{1}{2} \sigma^2 l_n)} \right] ds_n d\varepsilon_n \\
&= 1 - a \int_0^1 e^{\varepsilon_n \sigma^2 l_\infty / 2} \int_{-\infty}^{\infty} \phi_{-}(s_n) e^{\varepsilon_n s_n} ds_n d\varepsilon_n \\
&\geq 1 - a e^{ \sigma^2 l_\infty / 2} \cdot \mathbb{E}_{\varepsilon_n} \left[e^{\frac{\varepsilon_n^2 \sigma^2}{2}} \right] = 1.
\end{align*}

\section{Proof of \texorpdfstring{\Cref{thm:smoothly-learning-speed-hete}}{thmsmoothlylearningspeedhete}: Learning Rate under Smooth Randomized Response with Heterogeneous Privacy Budget}\label{app:proof:thm:smoothly-learning-speed-hete}

Before providing a formal proof of this result, we introduce a series of lemmas that will serve as foundational support for the main theorem.

\proof
By definition, we have
\begin{equation*}
    D_+(l_{n})=\log\frac{\mathbb{P}(x_{n}=+1|l_{n}, \theta=+1)}{\mathbb{P}(x_{n}=+1|l_{n}, \theta=-1)}=\log\frac{1-\mathbb{P}(x_{n}=-1|l_{n}, \theta=+1)}{1-\mathbb{P}(x_{n}=-1|l_{n}, \theta=-1)}.
\end{equation*}

Using the approximation \(\log(1 - x) = -x + O(x^2)\) for small \(x\), we can rewrite \(D_+(l_{n})\) as
\begin{equation*}
D_+(l_{n})=\mathbb{P}(x_{n}=-1|l_{n}, \theta=-1)-\mathbb{P}(x_{n}=-1|l_{n}, \theta=+1)+O(\mathbb{P}(x_{n}=-1|l_{n}, \theta=-1)^2)+O(\mathbb{P}(x_{n}=-1|l_{n}, \theta=+1)^2).
\end{equation*}

We now analyze \(\mathbb{P}(x_{n} = -1 | l_{n}, \theta = -1)\) as follows:
\begin{equation*}
    \begin{aligned}
        \mathbb{P}(x_{n}=-1|l_{n}, \theta=-1) 
        &=\int_{0}^{1}  \int_{-\infty}^{-\sigma^{2}l_{n}/2} \frac{1}{\sqrt{2\pi}\sigma} e^{-\frac{(s_{n}+1)^2}{2\sigma^2}} \left(1 - ae^{\varepsilon_n(s_{n}+\sigma^{2}l_{n}/2)}\right) ds_{n} d\varepsilon_n \\
        &\quad + \int_{0}^{1}  \int_{-\sigma^{2}l_{n}/2}^{\infty} \frac{1}{\sqrt{2\pi}\sigma} e^{-\frac{(s_{n}+1)^2}{2\sigma^2}} ae^{-\varepsilon_n(s_{n}+\sigma^{2}l_{n}/2)} ds_{n} d\varepsilon_n.
    \end{aligned}
\end{equation*}

Applying Lemma~\ref{lem:tail-bound} and \(1 - ae^{\varepsilon(s_{n}+\sigma^{2}l_{n}/2+1)} \geq \frac{1}{2}\), we establish bounds for the first term by the same argument in Lemma~\ref{lem:simplify_diff_equation}. Thus, as \(l_{n} \to \infty\), the first term is \(\Theta\left(\frac{e^{-\frac{\sigma^2 l_n^2}{4}}}{l_{n}}\right)\).

For the second term, we have:
\begin{equation*}
    \begin{aligned}
        &\quad \lim_{l_{n} \to \infty} \int_{0}^{1}  \int_{-\sigma^{2}l_{n}/2}^{\infty} \frac{1}{\sqrt{2\pi}\sigma} e^{-\frac{(s_{n}+1)^2}{2\sigma^2}} ae^{-\varepsilon_n(s_{n}+\sigma^{2}l_{n}/2)} ds_{n} d\varepsilon_n \\
        &= \lim_{l_{n} \to \infty} \int_{0}^{1}  ae^{-\varepsilon_n \sigma^{2}l_{n}/2} \int_{-\sigma^{2}l_{n}/2}^{\infty} \frac{1}{\sqrt{2\pi}\sigma} e^{-\frac{(s_{n}+1)^2}{2\sigma^2}} e^{-\varepsilon_n s_{n}} ds_{n} d\varepsilon_n \\
        &= \mathbb{E}_{\varepsilon_n}\left[ ae^{\varepsilon_n + \frac{\varepsilon_n^2\sigma^2}{2}} e^{-\varepsilon_n \frac{\sigma^2 l_{n}}{2}} \right].
    \end{aligned}
\end{equation*}

The third equality follows from the Lebesgue dominated convergence theorem.

Similarly, for \(\mathbb{P}(x_{n} = -1 | l_{n}, \theta = +1)\), we have
\begin{equation*}
    \begin{aligned}
        \mathbb{P}(x_{n}=-1|l_{n}, \theta=+1) 
        &=\int_{0}^{1}  \int_{-\infty}^{-\sigma^{2}l_{n}/2} \frac{1}{\sqrt{2\pi}\sigma} e^{-\frac{(s_{n}-1)^2}{2\sigma^2}} \left(1 - ae^{\varepsilon_n(s_{n}+\sigma^{2}l_{n}/2)}\right) ds_{n} d\varepsilon_n \\
        &\quad + \int_{0}^{1}  \int_{-\sigma^{2}l_{n}/2}^{\infty} \frac{1}{\sqrt{2\pi}\sigma} e^{-\frac{(s_{n}-1)^2}{2\sigma^2}} ae^{-\varepsilon_n(s_{n}+\sigma^{2}l_{n}/2)} ds_{n} d\varepsilon_n.
    \end{aligned}
\end{equation*}

As \(l_{n} \to \infty\), the first term is again \(\Theta\left(\frac{e^{-\frac{\sigma^2 l_n^2}{4}}}{l_{n}}\right)\). For the second term, we have:
\begin{equation*}
\lim_{l_{n} \to \infty} \int_{0}^{1}  \int_{-\sigma^{2}l_{n}/2}^{\infty} \frac{1}{\sqrt{2\pi}\sigma} e^{-\frac{(s_{n}-1)^2}{2\sigma^2}} ae^{-\varepsilon_n(s_{n}+\sigma^{2}l_{n}/2)} ds_{n} d\varepsilon_n = \mathbb{E}_{\varepsilon_n}\left[ ae^{-\varepsilon_n + \frac{\varepsilon_n^2\sigma^2}{2}} e^{-\varepsilon_n \frac{\sigma^2 l_{n}}{2}} \right].
\end{equation*}

Thus, as \(l_{n} \to \infty\), both \(\mathbb{P}(x_{n} = -1 | l_{n}, \theta = -1)\) and \(\mathbb{P}(x_{n} = -1 | l_{n}, \theta = +1)\) are dominated by the second term. Therefore,
\begin{equation*}
\lim_{l_{n} \to \infty} \frac{D_+(l_{n})}{G_-(-l_{n})} = 1,
\end{equation*}
where
\begin{equation*}
D_+(l_{n}) = \log\frac{\mathbb{P}(x_{n}=+1|l_{n}, \theta=+1)}{\mathbb{P}(x_{n}=+1|l_{n}, \theta=-1)}, \quad
G_-(-l_{n}) = \frac{1}{2} \mathbb{E}_{\varepsilon_n}\left[ \left( e^{\varepsilon_n + \frac{\varepsilon_n^2\sigma^2}{2}} - e^{-\varepsilon_n + \frac{\varepsilon_n^2\sigma^2}{2}} \right) e^{-\varepsilon_n \frac{\sigma^2 l_{n}}{2}} \right].
\end{equation*}

Since the support of \(\varepsilon_n\) is \([0, 1]\), and the event \(\{\varepsilon_n = 0\}\) has measure zero under the uniform distribution, it follows that
\begin{equation*}
\tilde{C}_1 \, \mathbb{E}_{\varepsilon_n}\left[ e^{-\varepsilon_n \frac{\sigma^2 l_n}{2}} \right] 
\leq G_-(-l_n) 
\leq \tilde{C}_2 \, \mathbb{E}_{\varepsilon_n}\left[ e^{-\varepsilon_n \frac{\sigma^2 l_n}{2}} \right],
\end{equation*}
where \(\tilde{C}_1, \tilde{C}_2 > 0\) are constants independent of \(n\). Since \(\mathbb{E}_{\varepsilon_n}[e^{-\varepsilon_n \frac{\sigma^2 l_n}{2}}]\) is the moment generating function of the uniform distribution, we have
\begin{equation*}
\mathbb{E}_{\varepsilon_n}\left[e^{-\varepsilon_n \frac{\sigma^2 l_n}{2}}\right]=\frac{2}{ \sigma^2 l_n}.
\end{equation*}

Therefore,
\begin{equation*}
\lim_{l_{n} \to \infty} \frac{D_+(l_{n})}{G_-(-l_{n})} = 1,
\end{equation*}
where \(G_-(-l_n) = \frac{\tilde{C}}{ \sigma^2 l_n}\) for some constant \(\tilde{C}\). 

Given the above lemma, we are now prepared to proceed with the proof of \Cref{thm:smoothly-learning-speed-hete}.

We begin by solving the differential equation
\begin{equation}\label{eq:differential-equation}
\frac{\mathrm{d} f}{\mathrm{d} t}(t) = G_{-}(-f(t)),
\end{equation}
where
\begin{equation*}
G_{-}(-f(t)) = \frac{\tilde{C}}{ \sigma^2 f(t)}.
\end{equation*}

We proceed by separating variables:
\begin{equation*}
\int  \sigma^2 f(t) \, df = \tilde{C} \int \, dt.
\end{equation*}

Integrating both sides yields:
\begin{equation*}
\frac{ \sigma^2 f(t)^2}{2} = \tilde{C} t.
\end{equation*}

Solving for \(f(t)\), we obtain:
\begin{equation*}
f(t) = \sqrt{ \frac{2 \tilde{C}}{  \sigma^2 } t }.
\end{equation*}

Let \((a_n)\) be a sequence in \(\mathbb{R}_{>0}\) defined by the recurrence relation:
\begin{equation*}
a_{n+1} = a_n + G(-a_n).
\end{equation*}

According to Lemma~\ref{lem:discrete and continuous}, this sequence \((a_n)\) is closely approximated by the function \(f(n)\), which satisfies the associated differential equation, and we have:
\begin{equation*}
\lim_{n \to \infty} \frac{a_n}{f(n)} = 1.
\end{equation*}

Now, assuming that \(\theta = +1\), all agents eventually choose action \(+1\) starting from some time onward, with probability 1. Therefore, with probability 1, the recurrence for \(\ell_n\) can be expressed as
\begin{equation*}
\ell_{n+1} = \ell_n + D_+(\ell_n),
\end{equation*}
for all sufficiently large \(n\). Additionally, by Lemma~\ref{lem:simplify_diff_equation}, it follows that
\begin{equation*}
\lim_{x \to \infty} \frac{D_+(x)}{G(-x)} = 1.
\end{equation*}

Thus, applying Lemma~\ref{lem: A/B=1}, we conclude that
\begin{equation*}
\lim_{n \to \infty} \frac{\ell_n}{a_n} = 1,
\end{equation*}
with probability 1. Consequently, we can combine the limits to obtain:
\begin{equation*}
\lim_{n \to \infty} \frac{\ell_n}{f(n)} = \lim_{n \to \infty} \frac{\ell_n}{a_n} \cdot \frac{a_n}{f(n)} = 1,
\end{equation*}
with probability 1, as desired.
\qed

\section{Proof of \texorpdfstring{\Cref{thm:smoothly-learning-speed-upperbound}}{thmsmoothlylearningspeedhete}: Learning Rate Bounds under Smooth Randomized Response with Heterogeneous Privacy
Budgets}\label{app:proof:thm:smoothly-learning-speed-upperbound}

\proof
First, we start with the proof of the upper bound. The learning rate depends on the right-hand side of differential equation~\eqref{eq:differential-equation}, i.e.,
\begin{equation*}
G_-(-l_n) = \frac{1}{2} \mathbb{E}_{\varepsilon_n}\left[
\left( e^{\varepsilon_n + \frac{\varepsilon_n^2 \sigma^2}{2}} - e^{-\varepsilon_n + \frac{\varepsilon_n^2 \sigma^2}{2}} \right)
e^{- \varepsilon_n \frac{\sigma^2 l_n}{2}} \right].
\end{equation*}

Define the function
\begin{equation*}
f(\varepsilon_n) = \left( e^{\varepsilon_n + \frac{\varepsilon_n^2 \sigma^2}{2}} - e^{-\varepsilon_n + \frac{\varepsilon_n^2 \sigma^2}{2}} \right) e^{- \varepsilon_n \frac{\sigma^2 l_n}{2}} 
= \sinh(\varepsilon_n) \cdot e^{\frac{\varepsilon_n^2 \sigma^2 - \varepsilon_n \sigma^2 l_n}{2}}.
\end{equation*}

Taking the first derivative of \( f(\varepsilon_n) \) with respect to \( \varepsilon_n \), we apply the product rule:
\begin{equation*}
\begin{aligned}
f'(\varepsilon_n) &= \frac{d}{d\varepsilon_n} \left[ \sinh(\varepsilon_n) \cdot e^{\frac{\varepsilon_n^2 \sigma^2 - \varepsilon_n \sigma^2 l_n}{2}} \right] \\
&= \cosh(\varepsilon_n) \cdot e^{\frac{\varepsilon_n^2 \sigma^2 - \varepsilon_n \sigma^2 l_n}{2}} 
+ \sinh(\varepsilon_n) \cdot \left( \sigma^2 \varepsilon_n - \frac{\sigma^2 l_n}{2} \right) \cdot e^{\frac{\varepsilon_n^2 \sigma^2 - \varepsilon_n \sigma^2 l_n}{2}} \\
&= e^{\frac{\varepsilon_n^2 \sigma^2 - \varepsilon_n \sigma^2 l_n}{2}} 
\left[ \cosh(\varepsilon_n) + \left( \sigma^2 \varepsilon_n - \frac{\sigma^2 l_n}{2} \right) \sinh(\varepsilon_n) \right].
\end{aligned}
\end{equation*}

To find the optimal value \( \varepsilon_n^* \) that maximizes \( f(\varepsilon_n) \), we solve
\begin{equation*}
f'(\varepsilon_n) = 0 \quad \Leftrightarrow \quad 
\cosh(\varepsilon_n) + \left( \sigma^2 \varepsilon_n - \frac{\sigma^2 l_n}{2} \right) \sinh(\varepsilon_n) = 0.
\end{equation*}

Since it is known that the asymptotic convergence rate is \(\Theta_\varepsilon(\log n)\) when the privacy budget \(\varepsilon\) is fixed, we now analyze whether vanishing \(\varepsilon_n \to 0\) can lead to a faster convergence rate. To this end, we consider the behavior of \(f(\varepsilon_n)\) as \(\varepsilon_n \to 0\), by expanding the terms in \(f'(\varepsilon_n)\) using the Taylor expansion.

Using the Taylor series around \(\varepsilon_n = 0\), we have:
\begin{equation*}
\sinh(\varepsilon_n) = \varepsilon_n + \frac{\varepsilon_n^3}{6} + o(\varepsilon_n^3), \quad
\cosh(\varepsilon_n) = 1 + \frac{\varepsilon_n^2}{2} + o(\varepsilon_n^2).
\end{equation*}

Plugging these into the expression for \( f'(\varepsilon_n) \), and keeping the terms up to first order:
\begin{equation*}
f'(\varepsilon_n) \approx 1 - \frac{\sigma^2 l_n}{2} \varepsilon_n = 0.
\end{equation*}

Solving this gives:
\begin{equation*}
\varepsilon_n^* \approx \frac{2}{\sigma^2 l_n} \quad \text{when } l_n \gg 1.
\end{equation*}

Then, by Taylor expansion, we approximate the optimal value of \( f(\varepsilon_n) \) as
\begin{equation*}
f(\varepsilon_n^*) \approx \frac{2}{e \sigma^2 l_n}.
\end{equation*}

Then we need to solve the differential equation
\begin{equation}
\frac{\mathrm{d} f}{\mathrm{d} t}(t) = G_{-}(-f(t)),
\end{equation}
where 
\begin{equation*}
G_-(-f(t)) = \frac{2}{e \sigma^2 f(t)}.
\end{equation*}

We proceed by separating variables:
\begin{equation*}
\int e \sigma^2 f(t) \, df = 2 \int \, dt.
\end{equation*}

Integrating both sides, we obtain:
\begin{equation*}
\frac{e \sigma^2 f(t)^2}{2} = 2t.
\end{equation*}

Solving for \( f(n) \), we find:
\begin{equation*}
f(t) = \sqrt{\frac{4t}{e \sigma^2}}.
\end{equation*}

This establishes the desired upper bound $f(n) = \Theta(\sqrt{n})$.

For the lower bound, since \(\mathbb{P}(\varepsilon_n > 0) > 0\), there exists a lower bound for $G_-(-l_n)$
\begin{equation*}
G_-(-l_n) = \frac{1}{2} \mathbb{E}_{\varepsilon_n}\left[
\left( e^{\varepsilon_n + \frac{\varepsilon_n^2 \sigma^2}{2}} - e^{-\varepsilon_n + \frac{\varepsilon_n^2 \sigma^2}{2}} \right)
e^{- \varepsilon_n \frac{\sigma^2 l_n}{2}} \right] \geq Ce^{- \tilde{\varepsilon} \frac{\sigma^2 l_n}{2}} .
\end{equation*}
where $\tilde{\varepsilon} \in (0,a]$.

Therefore, by solving the differential equation~\eqref{eq:differential-equation}, we have $f(t) \geq \frac{2}{\varepsilon \sigma^2}\log(C\frac{\varepsilon\sigma^2}{2}t)$; this completes the proof.

\endproof

\section*{Proof of Theorems~\ref{thm:finite-expectation-hete} and~\ref{thm:finite-incorrect-actions-hete}: Learning Efficiency—Finite Expected Time to First Correct Action and Finite Expected Total Number of Incorrect Actions with Heterogeneous Privacy Budgets}

Theorems~\ref{thm:finite-expectation-hete} and~\ref{thm:finite-incorrect-actions-hete} mirror the structure and argument of Theorems~\ref{thm:finite-expectation} and~\ref{thm:finite-incorrect-actions}, which analyze learning efficiency in the homogeneous privacy setting. The key steps in the proof remain the same: by substituting the asymptotic learning rate result in Theorem~\ref{thm:smoothly-learning-speed-hete}, we obtain the corresponding expression of the expectations. Therefore, we omit the full proof and show only that the expectations in both theorems are finite.

To complete the proof, it suffices to verify that the series
\begin{equation*}
    \sum_{n=1}^{\infty} e^{-\tilde{C}n^{1/2}}
\end{equation*}
converges. Since the exponent grows sublinearly in \(n\), but still diverges to infinity, this sum is dominated by an integral of the form
\begin{equation*}
    \int_{1}^{\infty} e^{-\tilde{C} x^{1/2}} dx,
\end{equation*}
which is finite by standard integral comparison. Therefore, the expected time to the first correct action \(\mathbb{E}[\tau]\) and the expected total number of incorrect actions \(\mathbb{E}[W]\) are both finite.
\qed

\section{Extension to Pufferfish Privacy}
\label{app:pufferfish}

To support our findings and evaluate their robustness, we also adopt an alternative and widely studied privacy definition: \emph{Pufferfish privacy}~\citep{chatzikokolakis2013broadening}. Unlike standard differential privacy, which relies on a fixed notion of neighboring datasets, Pufferfish privacy provides a flexible framework that allows us to explicitly define which pairs of private signals should be indistinguishable. This is particularly suitable for settings with unbounded or continuous input domains, as in our case with Gaussian signals. In our context, we apply Pufferfish privacy to protect unbounded private signals by defining adjacency between signals as bounded distance. Specifically, we consider signal pairs whose distance is no greater than a predefined constant \( a > 0 \).

\begin{definition}[$\varepsilon$-Pufferfish privacy]\label{def:dp-cont-relaxed}
A randomized strategy\( \mathcal{M} \), defined over a continuous domain \( \mathbb{R} \), satisfies \( \varepsilon \)-pufferfish privacy if, for all actions \( x_n \in \{-1, +1\} \) and for all neighboring signals \( s_n, s_n' \in \mathbb{R} \) such that \( \|s_n - s_n'\|_1 \leq a \), the following holds:
\begin{align}\label{eq:pufferfish}
\mathbb{P}(\mathcal{M}(s_n; h_{n-1}) = x_n) \leq \exp(\varepsilon)\, \mathbb{P}(\mathcal{M}(s_n'; h_{n-1}) = x_n).
\end{align}
\end{definition}

To satisfy the above privacy guarantee and maximize the expected utility, we show a staircase variant of the randomized response mechanism will achieve them, where the flipping probability is softly adapted to the signal value.

\begin{definition}[Staircase Randomized Response]\label{def:staircase-RR}
A strategy \( \mathcal{M}(s_n; h_{n-1}) \) is called a staircase randomized response strategy with flip probability \( u_n \) if, given an initial decision \( a_n \), the reported action \( x_n \) is determined as follows:
\begin{equation*}
    \mathbb{P}_{\mathcal{M}}(x_n = +1 \mid a_n = -1) = \mathbb{P}_{\mathcal{M}}(x_n = -1 \mid a_n = +1) = u_n(s_n),
\end{equation*}
\begin{equation*}
    \mathbb{P}_{\mathcal{M}}(x_n = +1 \mid a_n = +1) = \mathbb{P}_{\mathcal{M}}(x_n = -1 \mid a_n = -1) = 1 - u_n(s_n),
\end{equation*}
with the flip probability \( u_n \) defined as:
\begin{equation*}
u_n(s_n) = \frac{e^{-k\varepsilon}}{1 + e^{\varepsilon}}, \quad \text{if } s_n \in (t(l_n)+ka,\ t(l_n)+(k+1)a) \cup (t(l_n)-(k+1)a,\ t(l_n)-ka),\ k \in \mathbb{N}.
\end{equation*}

Here, \( t(l_n) \) represents the threshold value for different actions as a function of the public belief \( l_n \). If \( s_n > t(l_n) \), agents prefer to choose action \( +1 \) before flipping the action. Conversely, if \( s_n < t(l_n) \), agents prefer to choose action \( -1 \) before flipping. The mechanism is symmetric around \( t(l_n) \), and the flip probability decays exponentially with the distance from the threshold.
\end{definition}

\begin{theorem}\label{thm:smoothly-random-response-C-puffer}
In the sequential learning model with Gaussian signals, the optimal reporting strategy under a fixed privacy budget \(\varepsilon\) is the staircase randomized response mechanism. This strategy satisfies \(\varepsilon, \)-pufferfish privacy.
\end{theorem}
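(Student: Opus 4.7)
\proof{Proof Plan of \Cref{thm:smoothly-random-response-C-puffer}.}
The plan is to establish two things in sequence: (i) the staircase randomized response satisfies $\varepsilon$-pufferfish privacy as in \Cref{def:dp-cont-relaxed}, and (ii) among all $\varepsilon$-pufferfish private strategies it achieves the maximal expected utility by pointwise minimizing the flip probability $u_n(s_n)$.

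For the privacy part, I would partition $\mathbb{R}$ into the intervals $I_k^{+}=(t(l_n)+ka,t(l_n)+(k+1)a)$ and $I_k^{-}=(t(l_n)-(k+1)a,t(l_n)-ka)$ for $k\in\mathbb{N}$, on which $u_n$ is constant and equal to $e^{-k\varepsilon}/(1+e^{\varepsilon})$. Any two neighboring signals $s_n,s_n'$ with $\|s_n-s_n'\|_1\le a$ fall into one of three cases: (a) both lie in the same interval, in which case the ratio in \cref{eq:pufferfish} is trivially $1$; (b) both lie on the same side of $t(l_n)$ but in adjacent intervals (indices $k$ and $k+1$), so the pre-flip action $a_n$ is common and the ratio $P(x_n\mid s_n)/P(x_n\mid s_n')$ equals either $u_n(s_n)/u_n(s_n')$ or $(1-u_n(s_n))/(1-u_n(s_n'))$, each bounded by $e^{\varepsilon}$; (c) the signals straddle $t(l_n)$, in which case $\|s_n-s_n'\|_1\le a$ forces both to lie in the innermost intervals $I_0^{\pm}$, so $u_n(s_n)=u_n(s_n')=1/(1+e^{\varepsilon})$ while the pre-flip actions flip, giving ratios such as $(1-u_n(s_n))/u_n(s_n')=e^{\varepsilon}$. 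Verifying these three cases is routine.

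For the optimality part, I would reformulate the agent's problem: since the utility equals $\mathbb{P}(x_n=\theta\mid s_n,h_{n-1})$ and flipping the correct pre-flip action strictly reduces this probability, maximizing expected utility is equivalent to minimizing $u_n(s_n)$ pointwise subject to the pufferfish constraint. Then I would establish two lower bounds for any feasible $u(\cdot)$. First, a boundary bound at the threshold: applying \cref{eq:pufferfish} to signals just above and just below $t(l_n)$ (at distance $<a$, hence neighbors), the ratio constraint forces $u(t(l_n)^{+})\ge(1-u(t(l_n)^{-}))/e^{\varepsilon}$ and symmetrically, so after optimizing one obtains $u(t(l_n))\ge 1/(1+e^{\varepsilon})$. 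Second, a decay bound: for same-side neighbors the constraint $u(s_n')/u(s_n)\le e^{\varepsilon}$ shows that $u$ can decrease by at most a factor $e^{-\varepsilon}$ per distance $a$. Chaining these steps outward from $t(l_n)$, any $s_n\in I_k^{\pm}$ satisfies $u(s_n)\ge e^{-k\varepsilon}/(1+e^{\varepsilon})$, which is exactly the staircase value. Hence the staircase is pointwise minimal and utility-maximal among all $\varepsilon$-pufferfish private mechanisms.

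The main obstacle I expect is the optimality argument, specifically in formalizing the chaining of neighboring constraints across a continuous signal space when a competitor mechanism is not assumed piecewise constant or even continuous. The chain from $t(l_n)$ to any point in $I_k^{\pm}$ must be constructed through finitely many intermediate signals at pairwise distances $\le a$, and one must argue in terms of essential suprema or limits at the threshold to avoid pathological measure-zero deviations from the bound; this is where care is needed but no new idea beyond a standard inductive decay plus boundary argument is required. A smaller but nontrivial subtlety is ensuring that the cross-threshold constraint (case (c) above) yields the tight bound $1/(1+e^{\varepsilon})$ rather than a looser one, which is handled by jointly optimizing $u(t(l_n)^{+})$ and $u(t(l_n)^{-})$ under their symmetric pair of inequalities.
\endproof
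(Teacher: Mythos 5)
Your proposal is correct and follows essentially the same route as the paper's proof: a case analysis over same-interval, adjacent-interval, and threshold-straddling neighbor pairs to verify the pufferfish constraint, followed by the threshold lower bound $u\geq 1/(1+e^{\varepsilon})$ and the per-step decay bound chained outward to show the staircase flip probabilities are pointwise minimal. Your version is somewhat more explicit about the chaining over a continuous signal space than the paper's, but no new idea is involved.
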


\begin{proof}
To derive an upper bound for the public log-likelihood ratio under the staircase randomized response mechanism, we analyze the evolution of the public belief through the log-likelihood ratio. In the absence of privacy constraints, agent \( n \) chooses \( a_n = +1 \) if and only if
\[
l_n + L_n > 0.
\]
This implies that there exists a threshold \( t(l_n) \), determined by the public belief \( l_n \), such that the agent selects action \( a_n = +1 \) if \( s_n > t(l_n) \), and \( a_n = -1 \) otherwise.

We first consider the case where both signals \( s_n, s_n' \in (t(l_n) - a,\ t(l_n) + a) \). Since the flipping probability in this region is set to \( u_n(s_n) = \frac{1}{1 + e^{\varepsilon}} \), the privacy loss is bounded as
\[
\frac{\mathbb{P}(\mathcal{M}(s_n; h_{n-1}) = -1)}{\mathbb{P}(\mathcal{M}(s_n'; h_{n-1}) = -1)} \leq e^{\varepsilon}.
\]

Next, consider the general case where \( s_n, s_n' \in (t(l_n) + (k-1)a,\ t(l_n) + (k+1)a) \cup (t(l_n) - (k+1)a,\ t(l_n) - (k-1)a) \) for \( k \in \mathbb{N} \). According to the definition of the staircase randomized response, the flipping probability satisfies
\[
u_n(s_n) = \frac{e^{-k\varepsilon}}{1 + e^{\varepsilon}}.
\]
Thus, the privacy ratio between adjacent signals in this interval is bounded by \( e^{\varepsilon} \), ensuring that the mechanism satisfies \( \varepsilon \)-Pufferfish privacy.

To show the optimality of this mechanism, we note that for \( s_n, s_n' \in (t(l_n) - a,\ t(l_n) + a) \), the two signals lead to distinct actions. Therefore, to satisfy \( \varepsilon \)-Pufferfish privacy, the minimal flipping probability must be at least \( \frac{1}{1 + e^{\varepsilon}} \). Similarly, for \( s_n, s_n' \in (t(l_n) + (k-1)a,\ t(l_n) + (k+1)a) \) or its symmetric counterpart on the left, the minimal flipping probability required is
\[
u_n(s_n) = \frac{e^{-k\varepsilon}}{1 + e^{\varepsilon}}.
\]
If one chooses any flipping probability smaller than this, the privacy guarantee would be violated. Therefore, the staircase randomized response is the optimal strategy under this privacy constraint.
\end{proof}

Compared to the smooth randomized response defined under metric differential privacy, the staircase randomized response exhibits the same asymptotic behavior in terms of flip probability: both decay exponentially as the signal deviates from the decision threshold. For the staircase randomized response, the flip probability \( u_n(s_n) \) satisfies the following bounds:
\begin{equation*}
  \frac{1}{1 + e^{\varepsilon}} \, e^{-\frac{\varepsilon}{a} |s_n - t(l_n)|} 
  \leq u_n(s_n) 
  \leq \frac{e^{\varepsilon}}{1 + e^{\varepsilon}} \, e^{-\frac{\varepsilon}{a} |s_n - t(l_n)|}.
\end{equation*}

As a result, all theoretical guarantees established for the continuous signal model under the smooth mechanism continue to hold under the staircase mechanism, as long as the step size \( a \) is a finite constant. Since the structure of the proofs remains largely unchanged, we omit detailed derivations and directly present the theorems under the Pufferfish privacy framework.

\begin{theorem}[Learning Rate under Staircase Randomized Response ]\label{thm:smoothly-learning-speed_relax}
    Consider a fixed differential privacy budget \( \varepsilon \) and a smooth randomized response strategy for sequential learning with Gaussian signals.  Define the constant \( C_{\varepsilon,a} \) as bounded by
\[
\frac{\varepsilon \sigma^2}{2a(1+e^{\varepsilon})}\left( e^{\frac{\varepsilon}{a} + \frac{\varepsilon^2 \sigma^2}{2a^2}} - e^{-\frac{\varepsilon}{a} + \frac{\varepsilon^2 \sigma^2}{2a^2}} \right) 
\leq C_{\varepsilon,a} 
\leq 
\frac{\varepsilon e^{\varepsilon} \sigma^2}{2a(1+e^{\varepsilon})}\left( e^{\frac{\varepsilon}{a} + \frac{\varepsilon^2 \sigma^2}{2a^2}} - e^{-\frac{\varepsilon}{a} + \frac{\varepsilon^2 \sigma^2}{2a^2}} \right).
\]
Then, for any \( \frac{\varepsilon}{a} \in (0, \infty) \), the convergence rate of the public log-likelihood ratio satisfies
\[
f(n) = \frac{2a}{\varepsilon \sigma^2} \log(C_{\varepsilon,a} \cdot n) \sim \frac{2a}{\varepsilon \sigma^2} \log(n), \quad \text{as } n \to \infty.
\]
\end{theorem}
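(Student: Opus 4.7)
The plan is to mirror the argument used for Theorem \ref{thm:smoothly-learning-speed}, exploiting the fact that the staircase flip probability $u_n(s_n)$ is sandwiched between two exponentials with the same decay rate $\varepsilon/a$. Specifically, I would first write the public log-likelihood update $l_{n+1}=l_n + D_+(l_n)$ with $D_+(l_n)=\log[\mathbb{P}(x_n=+1\mid l_n,\theta=+1)/\mathbb{P}(x_n=+1\mid l_n,\theta=-1)]$, exactly as in the smooth case. Since $\log(1-x)=-x+O(x^2)$ for small $x$, it suffices to understand the dominant behavior of $\mathbb{P}(x_n=-1\mid l_n,\theta=\pm 1)$ as $l_n\to\infty$ under $\theta=+1$.

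Next, I would split the signal-space integral at the decision threshold $t(l_n)=-\sigma^2 l_n/2$ and apply the staircase bounds
\[
\tfrac{1}{1+e^\varepsilon}\,e^{-\tfrac{\varepsilon}{a}|s_n-t(l_n)|}\;\le\; u_n(s_n)\;\le\; \tfrac{e^\varepsilon}{1+e^\varepsilon}\,e^{-\tfrac{\varepsilon}{a}|s_n-t(l_n)|}.
\]
As in Lemma \ref{lem:simplify_diff_equation}, the portion of the integral over $s_n<t(l_n)$ is of order $e^{-\sigma^2 l_n^2/4}/l_n$ and becomes subdominant, while the portion over $s_n>t(l_n)$ integrates a Gaussian against an exponential and yields, after completing the square, a term of order $e^{-\varepsilon\sigma^2 l_n/(2a)}$ with a multiplicative prefactor bounded between $\tfrac{\sigma^2}{2a(1+e^\varepsilon)}(e^{\varepsilon/a+\varepsilon^2\sigma^2/(2a^2)}-e^{-\varepsilon/a+\varepsilon^2\sigma^2/(2a^2)})$ (from the lower bound on $u_n$) and the same expression multiplied by $e^\varepsilon$ (from the upper bound). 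Subtracting the two conditional probabilities gives
\[
D_+(l_n)\;\sim\; G_-(-l_n):= C_{\varepsilon,a}\cdot\tfrac{2a}{\varepsilon\sigma^2}\cdot e^{-\tfrac{\varepsilon\sigma^2}{2a}l_n},
\]
where the constant $C_{\varepsilon,a}$ lies within the bounds stated in the theorem.

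Once this asymptotic equivalence is established, I would solve the continuous analogue $f'(t)=G_-(-f(t))$ by separation of variables, obtaining $f(t)=\tfrac{2a}{\varepsilon\sigma^2}\log(C_{\varepsilon,a}\,t)$. Finally, I would invoke Lemma \ref{lem:discrete and continuous} to transfer this to the auxiliary discrete recursion $a_{n+1}=a_n+G_-(-a_n)$, and then Lemma \ref{lem: A/B=1} together with the asymptotic $D_+/G_-\to 1$ to conclude that $l_n/f(n)\to 1$ almost surely conditional on $\theta=+1$, using that asymptotic learning (Theorem \ref{thm:smoothly-asymptotic-learning-C}, which carries over since Theorem \ref{thm:smoothly-random-response-C-puffer} gives a mechanism with the same exponential structure) ensures all but finitely many agents eventually choose $+1$.

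The main obstacle I anticipate is the careful bookkeeping of the staircase prefactor: because $u_n$ is only piecewise constant rather than a smooth exponential, the Gaussian-times-staircase integral must be expanded as a geometric-like sum over intervals $(t(l_n)+ka,t(l_n)+(k+1)a)$, and one must verify that both the upper and lower envelopes give the same exponential decay rate $e^{-\varepsilon\sigma^2 l_n/(2a)}$ while producing the asymmetric prefactors that define the bounds on $C_{\varepsilon,a}$. Once this sandwich is in place, the remainder of the argument (ODE solution, Lemmas \ref{lem: A/B=1} and \ref{lem:discrete and continuous}) is essentially a transcription of the smooth-response proof with $\varepsilon$ replaced by $\varepsilon/a$ in the decay exponent.
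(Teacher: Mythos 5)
Your proposal follows essentially the same route the paper takes: the paper itself omits the detailed derivation for this theorem, noting only that the staircase flip probability is sandwiched between $\tfrac{1}{1+e^{\varepsilon}}e^{-\frac{\varepsilon}{a}|s_n-t(l_n)|}$ and $\tfrac{e^{\varepsilon}}{1+e^{\varepsilon}}e^{-\frac{\varepsilon}{a}|s_n-t(l_n)|}$ and that the smooth-response argument (Lemma~\ref{lem:simplify_diff_equation}, the ODE, and Lemmas~\ref{lem: A/B=1} and~\ref{lem:discrete and continuous}) carries over with $\varepsilon$ replaced by $\varepsilon/a$ and the asymmetric prefactors yielding the stated bounds on $C_{\varepsilon,a}$. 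Your write-up correctly fills in exactly that sketch, including the key bookkeeping point that the upper and lower envelopes share the decay rate $e^{-\varepsilon\sigma^2 l_n/(2a)}$ and differ only in the constant inside the logarithm.
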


 When the parameter \( a \) increases, if the ratio \( \frac{a}{\varepsilon} \) is finite, then the asymptotic learning rate \( f(n) = \frac{2a}{\varepsilon\sigma^2} \log(C_{\varepsilon,a} n) \) increases. This is because a larger \( a \) induces a more asymmetric smooth randomized response, which improves the informativeness of the observed actions and thus accelerates the overall learning process. However, this improvement comes with a caveat: as \( a \to \infty \), the term \( C_{\varepsilon,a} \) tends to zero, which causes \( f(n) \) to diverge. Therefore, we restrict our attention to the regime where \( \frac{\varepsilon}{a} \) is constant and bounded away from zero to ensure that the learning rate remains well defined and informative.

\begin{theorem}[Finite Expected Time to First Correct Action under Pufferfish Privacy]\label{thm:finite-expectation-relax}
Consider a fixed differential privacy budget and suppose that agents follow a smooth randomized response strategy under Gaussian signals in a sequential learning setting. Then, the expected time \( \mathbb{E}[\tau] \) until the first correct action satisfies
\[
\mathbb{E}[\tau] =C_{1} C_{\varepsilon,a}^{-\frac{2a}{\varepsilon\sigma^2}} \zeta\left( \frac{2a}{\varepsilon\sigma^2} \right),
\]
where \(C_{1}\) is a positive constant that does not depend on $\varepsilon$ and \(\zeta\left(x \right)=\sum_{n=1}^{\infty} n^{-x} \). If \( \varepsilon < \frac{2a}{\sigma^2} \), the series converges and thus the expected time to the first correct action is finite:
\[
\mathbb{E}[\tau] < +\infty.
\]
\end{theorem}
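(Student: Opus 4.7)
The plan is to mirror the proof of Theorem~\ref{thm:finite-expectation} step by step, substituting the metric-DP convergence rate with the Pufferfish-privacy rate from Theorem~\ref{thm:smoothly-learning-speed_relax}. Conditioning without loss of generality on $\theta = -1$, I would define $u_n := \mathbb{P}_{-}(\tau > n) = \prod_{k=1}^{n} \mathbb{P}(x_k = +1 \mid l_k, \theta = -1)$, so that $\mathbb{E}[\tau] = \sum_{n \geq 1} u_n$ by monotone convergence, and the task reduces to controlling the tail rate of $u_n$.

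First, I would reproduce the telescoping identity
\[
u_n \;=\; \frac{1-\pi_{n+1}^*}{\pi_{n+1}^*}\prod_{k=1}^n \mathbb{P}(x_k = +1 \mid l_k, \theta = +1),
\]
which follows only from the Bayes update \eqref{eq:bayes update} and is insensitive to the specific form of the reporting strategy. Next, I would port Lemmas~\ref{lem: pi_and_action}--\ref{lem: prob_all_minus}; these rely only on (i) the monotonicity of the posterior odds after a $+1$ report and (ii) the strict inequality $\mathbb{P}(x_n = -1 \mid l_n, \theta = +1) < \mathbb{P}(x_n = -1 \mid l_n, \theta = -1)$. Since the smooth randomized response remains symmetric around the decision threshold $t(l_n) = -\sigma^2 l_n/2$ and monotonically decaying in $|s_n - t(l_n)|$ under the Pufferfish constraint, both properties persist and yield the two-sided sandwich $c(1-\pi_{n+1}^*) \leq u_n \leq 2(1-\pi_{n+1}^*)$.

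Third, I would apply Theorem~\ref{thm:smoothly-learning-speed_relax} along the all-$+1$ consensus path to get $\ell_n^* \sim \tfrac{2a}{\varepsilon\sigma^2}\log(C_{\varepsilon,a}\, n)$ almost surely. Combining with $1 - \pi_n^* \sim e^{-\ell_n^*}$ then yields $u_n \asymp C_{\varepsilon,a}^{-2a/(\varepsilon\sigma^2)}\, n^{-2a/(\varepsilon\sigma^2)}$, and summing gives
\[
\mathbb{E}[\tau] \;=\; C_1\, C_{\varepsilon,a}^{-\frac{2a}{\varepsilon\sigma^2}}\, \zeta\!\left(\tfrac{2a}{\varepsilon\sigma^2}\right).
\]
The Riemann zeta series converges precisely when $\tfrac{2a}{\varepsilon\sigma^2} > 1$, that is, when $\varepsilon < \tfrac{2a}{\sigma^2}$, giving the claimed finiteness of $\mathbb{E}[\tau]$.

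The main obstacle is in the third step: the constants absorbed into $C_{\varepsilon,a}$ must be shown to be independent of $n$, and the asymptotic $u_n \asymp n^{-2a/(\varepsilon\sigma^2)}$ must be sharp on \emph{both} sides, since the divergence threshold of the zeta series is attained on the nose. This requires revisiting the Gaussian-tail computation of Lemma~\ref{lem:simplify_diff_equation} with the Pufferfish-flavored flip probability--whose $a$-dependence replaces the metric-DP normalization--and verifying that the dominant (non-tail) integral still factors through $e^{-\varepsilon\sigma^2 \ell_n /(2a)}$ with a multiplicative constant that is uniform in $n$. Once this uniform two-sided control is in place, Lemmas~\ref{lem: A/B=1}--\ref{lem:discrete and continuous} propagate the logarithmic growth of $\ell_n$ with the appropriate $\tfrac{2a}{\varepsilon\sigma^2}$ rescaling, closing the argument.
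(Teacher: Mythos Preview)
Your proposal is correct and matches the paper's own approach: the paper explicitly states that the staircase randomized response is sandwiched between two exponential envelopes $\frac{1}{1+e^\varepsilon}e^{-\frac{\varepsilon}{a}|s_n-t(l_n)|}$ and $\frac{e^\varepsilon}{1+e^\varepsilon}e^{-\frac{\varepsilon}{a}|s_n-t(l_n)|}$, so that ``all theoretical guarantees established for the continuous signal model under the smooth mechanism continue to hold under the staircase mechanism,'' and it omits the detailed derivation entirely. Your outline---reuse the telescoping identity, port Lemmas~\ref{lem: pi_and_action}--\ref{lem: prob_all_minus}, and substitute the rate from Theorem~\ref{thm:smoothly-learning-speed_relax}---is precisely the intended argument, and your identification of the uniform two-sided control in Lemma~\ref{lem:simplify_diff_equation} as the only place requiring care is exactly where the exponential sandwich bounds on the staircase flip probability do their work.
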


\begin{theorem}[Finite Expected Total Number of Incorrect Actions  under Pufferfish Privacy]\label{thm:finite-incorrect-actions-relax}
Consider a fixed differential privacy budget and suppose that agents follow a smooth randomized response strategy under Gaussian signals in a sequential learning setting. Then, the expected total number of incorrect actions \( \mathbb{E}[W] \) satisfies
\[
\mathbb{E}[W] = C_2C_{\varepsilon,a}^{-\frac{2a}{\varepsilon\sigma^2}} \zeta\left( \frac{2a}{\varepsilon\sigma^2} \right),
\]
where \(C_{2}\) is a positive constant that does not depend on $\varepsilon$. If and only if  \( \varepsilon < \frac{2a}{\sigma^2} \), the series converges and thus the expected total number of incorrect actions is finite:
\[
\mathbb{E}[W] < +\infty.
\]
\end{theorem}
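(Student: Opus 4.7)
My plan is to mirror the argument used for Theorem~\ref{thm:finite-incorrect-actions} in Appendix~\ref{app:proof:thm:finite-incorrect-actions}, reusing Theorem~\ref{thm:finite-expectation-relax} as the main input in place of Theorem~\ref{thm:finite-expectation}. Without loss of generality assume $\theta = -1$, decompose the action sequence $(x_n)$ into alternating \emph{good runs} of correct $-1$'s and \emph{bad runs} of incorrect $+1$'s, and set $\sigma_k, \tau_k, \Delta_k = \tau_k - \sigma_k$ as in the metric-DP proof. By monotone convergence,
\begin{equation*}
\mathbb{E}_{-}[W] = \sum_{k=1}^{\infty} \mathbb{E}_{-}\bigl[\Delta_k \mathbf{1}_{\sigma_k < \infty}\bigr],
\end{equation*}
so it suffices to bound each summand and show the resulting series converges geometrically.

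First I would verify that the two structural lemmas carry over verbatim to the staircase mechanism. Lemma~\ref{lem: pi_and_action} relies only on the symmetry $\mathbb{P}_{\mathcal{M}}(x_n \mid a_n) + \mathbb{P}_{\mathcal{M}}(x_n \mid -a_n) = 1$ together with $u_n(s_n) \le \tfrac{1}{2}$, both of which hold for the staircase flip probabilities in Definition~\ref{def:staircase-RR}. Lemma~\ref{lem: tau_with_prior} is a change-of-prior calculation that uses only the Bayesian update of $\pi_n$ and is therefore mechanism-agnostic. For Lemma~\ref{lem: prob_all_minus}, the essential input is the strict inequality $\mathbb{P}(x_n = -1 \mid l_n, \theta = +1) < \mathbb{P}(x_n = -1 \mid l_n, \theta = -1)$, which in the staircase case follows from the fact that the flip-probability function $g(s_n)$ is non-increasing in $s_n$ and strictly decreases on a set of positive measure (the step boundaries), while $\phi_+$ and $\phi_-$ are Gaussians with the same variance but means $+1$ and $-1$. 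Hence the same martingale / optional-stopping argument yields a uniform lower bound $\mathbb{P}_{\pi,-}(x_n = -1 \text{ for all } n) \ge a_0 > 0$ whenever $\pi \le \tfrac{1}{2}$.

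With these in hand, the conditional bound
\begin{equation*}
\mathbb{E}_{-}\bigl[\Delta_k \mathbf{1}_{\sigma_k < \infty} \,\big|\, \mathcal{H}_{\tau_{k-1}}\bigr] \le \mathbb{E}[\tau] \cdot \frac{\pi_{\tau_{k-1}}}{1 - \pi_{\tau_{k-1}}} \le \mathbb{E}[\tau]
\end{equation*}
follows from the Markov property of $(\pi_n)$ and Doob's optional sampling theorem, where $\mathbb{E}[\tau] < \infty$ is now supplied by Theorem~\ref{thm:finite-expectation-relax} under the condition $\varepsilon < \tfrac{2a}{\sigma^2}$. Taking unconditional expectation and iterating the recurrence $\mathbb{P}_{-}(\sigma_k < \infty) \le (1 - a_0)\,\mathbb{P}_{-}(\sigma_{k-1} < \infty)$ gives geometric decay, so
\begin{equation*}
\mathbb{E}_{-}[W] \le \mathbb{E}[\tau]\Bigl(1 + \sum_{k=1}^{\infty}(1-a_0)^{k-1}\Bigr) = \frac{\mathbb{E}[\tau]}{a_0},
\end{equation*}
which, combined with the lower bound $\mathbb{E}_{-}[W] \ge \mathbb{E}[\tau]$, produces the stated asymptotic form with some constant $C_2$ independent of $\varepsilon$.

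\textbf{Main obstacle.} The only nontrivial adaptation is the strict inequality underlying Lemma~\ref{lem: prob_all_minus}: under the staircase mechanism, $g(s_n)$ is piecewise constant rather than strictly decreasing, so one must confirm that its descending jumps at the step boundaries $t(l_n) \pm k a$ occur on a set of positive Lebesgue measure under both Gaussian densities. This is immediate since each step has positive width $a$, but it must be invoked explicitly to rule out the degenerate case where $g$ would be effectively constant on the support of the relevant density. The converse direction (divergence of $\mathbb{E}[W]$ when $\varepsilon \ge \tfrac{2a}{\sigma^2}$) follows from $\mathbb{E}[W] \ge \mathbb{E}[\tau]$ and the divergence half of Theorem~\ref{thm:finite-expectation-relax}.
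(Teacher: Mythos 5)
Your proposal is correct and follows essentially the same route the paper intends: the paper explicitly omits this proof, stating that it mirrors the homogeneous-budget argument (good/bad run decomposition, the three supporting lemmas, optional stopping, geometric decay of $\mathbb{P}_{-}(\sigma_k < +\infty)$), with Theorem~\ref{thm:finite-expectation-relax} supplying $\mathbb{E}[\tau] < \infty$, and your two-sided bound $\mathbb{E}[\tau] \le \mathbb{E}_{-}[W] \le \mathbb{E}[\tau]/a_0$ matches the level of rigor of the paper's own homogeneous proof. Your explicit check that the staircase mechanism still yields a non-increasing, non-a.e.-constant $g(s_n)$ (and hence the strict inequality behind Lemma~\ref{lem: prob_all_minus}) is the one adaptation the paper glosses over, and you handle it correctly --- just note that the strict decreases occur \emph{across} the step boundaries, so the relevant fact is that $g$ takes distinct values on sets of positive Lebesgue measure (each step having width $a$), not that the boundary points themselves carry positive measure.
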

Although a higher value of \( a \) makes it easier to ensure the finiteness of both \( \mathbb{E}[\tau] \) and \( \mathbb{E}[W] \), this benefit is not without potential drawbacks. In particular, as \( a \to \infty \), the term \( C_{\varepsilon,a}^{-\frac{2a}{\varepsilon\sigma^2}} \) diverges to infinity. Therefore, our results are valid only under the condition that \( \frac{\varepsilon}{a} \) is constant and bounded away from zero. This ensures that the learning efficiency bounds remain meaningful and finite.

\clearpage % 
\end{APPENDICES}

% \begin{APPENDICES}
% \newpage
% \renewcommand\thefigure{\thesection.\arabic{figure}} 
% \setcounter{figure}{0}    
% \phantomsection
% \setcounter{page}{1}
% \renewcommand{\thepage}{A-\arabic{page}}
% {\noindent \Large \bf Online Appendix}
% % \begin{refsection}
% \input{sections/appendix}
% \vspace{10pt}
% %\printbibliography[title={\bf \large References}]
% %\end{refsection}
% \end{APPENDICES}

% \begin{APPENDICES}
% \renewcommand\thefigure{\thesection.\arabic{figure}} 
% \setcounter{figure}{0}    
% \phantomsection
% \input{sections/appendix}
% \end{APPENDICES}

% % \newpage % should be before the references section
% \ACKNOWLEDGMENT{ This material is based on work supported by the National Science Foundation under Grant No. 2318844.}% 

\end{document}